\newtheorem{definition}{Definition}
\newtheorem{lemma}{Lemma}
\newtheorem{theorem}{Theorem}
\newtheorem{claim}{Claim}
\newtheorem{observation}{Observation}
\newtheorem{remark}{Remark}
\crefname{remark}{Remark}{Remarks}
\crefname{observation}{Observation}{Observations}
\crefname{claim}{Claim}{Claims}
\crefname{figure}{Figure}{Figures}
\crefname{equation}{Equation}{Equations}
\crefname{lemma}{Lemma}{Lemmas}
\crefname{theorem}{Theorem}{Theorems}
\crefname{corollary}{Corollary}{Corollaries}
\crefname{definition}{Definition}{Definitions}
\crefname{section}{Section}{Sections}
\definecolor{defblue}{rgb}{0.121,0.47,0.705}
\let\emph\relax
\DeclareTextFontCommand{\emph}{\color{defblue}\em}
\newcommand{\OLPlong}{\textsc{Ordered Level Planarity}\xspace}
\newcommand{\OLP}{\textsc{OLP}\xspace}
\newcommand{\PLPlong}{\textsc{Partial Level Planarity}\xspace}
\newcommand{\PLP}{\textsc{PLP}\xspace}
\newcommand{\CLPlong}{\textsc{Constrained Level Planarity}\xspace}
\newcommand{\CLP}{\textsc{CLP}\xspace}
\newcommand{\NPh}{\textsc{NP}-hard\xspace}
\newcommand{\Wh}[1][1]{$W[#1]$-hard\xspace}
\newcommand{\Whness}[1][1]{$W[#1]$-hardness\xspace}
\newcommand{\XP}{\textsc{XP}\xspace}
\newcommand{\XNLP}{\textsc{XNLP}\xspace}
\newcommand{\XNLPh}{\textsc{XNLP}-hard\xspace}
\newcommand{\XNLPhness}{\textsc{XNLP}-hardness\xspace}
\newcommand{\FPT}{\textsc{FPT}\xspace}
\newcommand{\Oh}{\ensuremath{\mathcal{O}}}
\newcommand{\G}{\ensuremath{\mathcal{G}}\xspace}
\newcommand{\J}{\ensuremath{\mathcal{J}}\xspace}
\newcommand{\C}{\ensuremath{\mathcal{C}}\xspace}
\DeclareMathOperator{\Span}{span}
\DeclareMathOperator{\idx}{idx}
\newenvironment{claimproof}[1][\proofname]
{ \proof[#1]  }
{ \endproof }
\newcommand{\myproofparagraph}[1]{\bigskip\noindent\textit{#1}\;}
\title{\MakeUppercase{Constrained and Ordered Level Planarity Parameterized by the Number of Levels}%
	\thanks{A preliminary version of this paper appeared in the proceedings of the 40th International Symposium on Computational Geometry (SoCG 2024)~\cite{DBLP:conf/compgeom/BlazejKKS0024}.}
}
\author{%
	Václav Blažej,%
	\thanks{\affil{University of Warwick, Coventry, United Kingdom}
	}\,
	Boris Klemz,%
	\thanks{\affil{Universität Würzburg, Würzburg, Germany}
	}\,
	Felix Klesen,\footnotemark[3]\,
	{Marie Diana} Sieper,\footnotemark[3]\,
	Alexander Wolff,\footnotemark[3]\,
	and Johannes Zink\footnotemark[3]
}
\begin{document}
	
	\thispagestyle{empty}
	
	\maketitle
	
	\begin{abstract}
		The problem \textsc{Level Planarity} asks for a crossing-free
		drawing of a graph in the plane such that vertices are placed at
		prescribed y-coordinates (called \emph{levels}) and such that every
		edge is realized as a y-monotone curve.  In the variant \CLPlong
		(\CLP), each level $y$ is equipped with a partial order $\prec_y$ on
		its vertices and in the desired drawing the left-to-right order of
		vertices on level~$y$ has to be a linear extension of~$\prec_y$.
		\OLPlong (\OLP) corresponds to the special case of \CLP where the
		given partial orders $\prec_y$ are total orders.  Previous results
		by Br\"uckner and Rutter [SODA 2017] and Klemz and Rote [ACM Trans.\
		Alg.\ 2019]
		state that both \CLP and \OLP are NP-hard even in severely
		restricted cases.  In particular, they remain NP-hard even when
		restricted to instances whose \emph{width} (the maximum number of
		vertices that may share a common level) is at most two.  In this
		paper, we focus on the other dimension: we study the parameterized
		complexity of \CLP and \OLP with respect to the \emph{height} (the
		number of levels).
		
		We show that \OLP parameterized by the height is complete with
		respect to the complexity class XNLP, which was first studied by
		Elberfeld, Stockhusen, and Tantau [Algorithmica 2015] (under a
		different name) and recently made more prominent by Bodlaender,
		Groenland, Nederlof, and Swennenhuis [FOCS 2021].  It contains all
		parameterized problems that can be solved nondeterministically in
		time $f(k)\cdot n^{\mathcal O(1)}$ and space $f(k)\cdot \log n$
		(where~$f$ is a computable function, $n$ is the input size, and $k$
		is the parameter).  If a problem is XNLP-complete, it
		lies in XP, but is W[$t$]-hard for every $t$.
		
		In contrast to the fact that \OLP parameterized by the height lies
		in XP, it turns out that \CLP is NP-hard even when restricted to
		instances of height~4.  We complement this result by showing that
		\CLP can be solved in polynomial time for instances of height at
		most~3.
	\end{abstract}

	\section{Introduction}
	
	In an \emph{upward} drawing of a directed graph, every edge~$e=(u,v)$ is realized as a y-monotone curve 
	that goes upwards from~$u$ to~$v$, i.e., the y-coordinate strictly increases when traversing~$e$ from~$u$ towards~$v$.
	Also known as poset diagrams, these drawings provide a natural way to visualize a partial order on a set of items.
	The classical problem \textsc{Upward Planarity} asks whether a given directed graph admits a drawing that is both upward and planar (i.e., crossing-free).
	It is known to be NP-hard~\cite{DBLP:journals/siamcomp/GargT01}, but becomes solvable in polynomial time if the y-coordinate of each vertex is prescribed~\cite{DBLP:journals/tsmc/BattistaN88,DBLP:conf/gd/HeathP95,DBLP:conf/gd/JungerLM98}.
	In contrast, when both the y-coordinate and the x-coordinate of each vertex is prescribed, the problem is yet again NP-hard~\cite{DBLP:journals/talg/KlemzR19}.
	The paper at hand is concerned with the parameterized complexity of (a generalization of) the latter variant of \textsc{Upward Planarity}, the parameter being the number of levels.
	Next, we define these problems more precisely, adopting the notation and terminology used in~\cite{DBLP:journals/talg/KlemzR19}.
	
	\paragraph{Level planarity.}
	A \emph{level graph}~$\mathcal G=(G,\gamma)$ is a directed graph~$G=(V,E)$ together with a \emph{level assignment}, 
	which is a surjective map $\gamma\colon V\rightarrow \lbrace 1,2,\dots ,h\rbrace$ where~$\gamma (u)<\gamma (v)$ for every edge~$(u,v)\in E$.
	The vertex set $V_i=\lbrace v\mid \gamma (v)=i\rbrace$ is called the~$i$-th \emph{level} of~$\mathcal G$.
	The \emph{width} of level $V_i$ is~$|V_i|$.
	The \emph{levelwidth} 
	of~$\mathcal G$ is the maximum width of any level in~$\mathcal G$ and the \emph{height} of~$\mathcal G$ is the number~$h$ of levels.
	A \emph{level planar drawing} of~$\mathcal G$ is an upward planar drawing of~$G$ where the y-coordinate of each vertex~$v$ is~$\gamma (v)$. 
	We use~$L_i$ to denote the horizontal line with y-coordinate~$i$.
	Algorithms for computing level planar drawings usually just determine a \emph{level planar embedding} of a level planar drawing, which for each $i~\in \lbrace 1,2,\dots ,h\rbrace$ lists the left-to-right sequence of vertices and edges intersected by $L_i$.
	Note that this corresponds to an equivalence class of drawings from which an actual drawing is easily derived.
	The level graph~$\mathcal G$ is called \emph{proper} if $\gamma (v)=\gamma (u)+1$ for every edge~$(u,v)\in E$.
	
	The problem \textsc{Level Planarity} asks whether a given level graph admits a level planar drawing.
	It can be solved in linear time~\cite{DBLP:journals/tsmc/BattistaN88,DBLP:conf/gd/HeathP95,DBLP:conf/gd/JungerLM97,DBLP:conf/gd/JungerLM98}; see~\cite{fulek2013hanani} for a more detailed discussion on this series of papers.
	It is easy to see that \textsc{Level Planarity} is polynomial time/space equivalent to the variant where~$\gamma$ maps to $h$ arbitrary distinct real numbers.
	
	\paragraph{Constrained and ordered level planarity.}
	In 2017, Br\"uckner and Rutter~\cite{DBLP:conf/soda/BrucknerR17} and
	Klemz and Rote~\cite{DBLP:journals/talg/KlemzR19} independently
	introduced and studied two closely related variants of \textsc{Level
		Planarity}, defined as follows.
	A \emph{constrained (ordered) level graph}
	$\G=(G,\gamma,(\prec_i)_{1\le i\le h})$ is a triplet
	corresponding to a level graph $(G, \gamma)$ of height~$h$  equipped with a family containing, for each $1\le i\le h$, a partial (total) order on the 
	vertices in~$V_i$.
	A \emph{constrained (ordered)} level planar drawing of \G is a level
	planar drawing of $(G, \gamma)$ where, for each $1\le i\le h$, the
	left-to-right order of the vertices in~$V_i$ corresponds to a linear
	extension of~$\prec_i$ (is $\prec_i$).
	For a pair of vertices $u,v\in V_i$ with $u\prec_i v$, we refer to $u\prec_i v$ as a \emph{constraint} on $u$ and $v$.
	
	The problem \CLPlong (\CLP) / \OLPlong (\OLP)
	asks whether a given constrained~/ ordered level graph admits a constrained~/ ordered level planar drawing, in which case the input is called a \emph{constrained}~/ \emph{ordered level planar graph}.
	The special case where the height of all instances is restricted to a given value $h$ is called $h$-level \CLP~/ $h$-level \OLP.
	In \CLP, each partial order $\prec_i$ is assumed to be given in form of a directed acyclic graph including all of its transitive edges.
	In \OLP, each total order $\prec_i$ is encoded by equipping each vertex of level $V_i$ with an integer that is equal to its rank in the order~$\prec_i$.
	Note that \OLP is polynomial time/space equivalent to the variant of \textsc{Level Planarity} where each vertex is equipped with a prescribed x-coordinate, implying that the only challenge is to draw the edges.

	Klemz and Rote~\cite{DBLP:journals/talg/KlemzR19} showed that \OLP (and, thus, \CLP) is NP-hard even when restricted to the case where the underlying undirected graph of~$G$ is a disjoint union of paths.
	Note that such a graph has bounded pathwidth (and treewidth), maximum
	degree, and feedback vertex set number, ruling out efficient
	parameterized algorithms with respect to these classical parameters.
	Additionally, the instances produced by their reduction have a
	levelwidth of only two.
	Independently, Br\"uckner and Rutter~\cite{DBLP:conf/soda/BrucknerR17}
	provided a very different reduction for showing that \CLP is NP-hard (in fact, their reduction shows the NP-hardness of \PLPlong, which can be seen as a generalization of \OLP and a special case of \CLP; see below).
	The instances constructed by their reduction are connected and have bounded maximum degree.
	They also present a polynomial time algorithm for \CLP for the case where the graph has a single sink, which they later~\cite{DBLP:conf/isaac/BrucknerR20} sped up for the special case where the single-sink graph is also biconnected.
	Very recently, Klemz and Sieper~\cite{clp-vc} showed that \CLP (and, thus,
	\OLP) is FPT if parameterized by the vertex cover number of the input graph.
	They also observed that the reduction by Klemz and Rote (Br\"uckner and
	Rutter) can easily be modified to show that OLP (CLP) is NP-hard even if
	restricting to (proper) instances of bounded treedepth.
	
	\paragraph{Other related work.}
	The problem \PLPlong (\PLP) (introduced and studied by Br\"uckner and Rutter~\cite{DBLP:conf/soda/BrucknerR17}), asks whether a given level planar drawing of a subgraph~$H$ of the input graph~$G$ can be extended to a level planar drawing of~$G$.
	This can be seen as a generalization of \OLP and, in the proper case, as a specialization of \CLP.
	Several other problems related to the construction of level planar drawings have been studied, including problems with other kinds of ordering constraints (e.g.,
	\textsc{Clustered Level Planarity}~\cite{DBLP:conf/sofsem/ForsterB04,DBLP:journals/tcs/AngeliniLBFR15,DBLP:journals/talg/KlemzR19}
	and
	\textsc{T-Level Planarity}~\cite{DBLP:journals/dam/WotzlawSP12,DBLP:journals/tcs/AngeliniLBFR15,DBLP:journals/talg/KlemzR19}),
	problems with a more geometric touch
	(see, e.g., \cite{DBLP:journals/jda/HongN10,DBLP:conf/esa/Klemz21}),
	and variants of \textsc{Level Planarity} seeking drawings on surfaces different from the plane
	(see, e.g., \cite{DBLP:journals/jgaa/BachmaierBF05,DBLP:journals/tcs/AngeliniLBFPR20,DBLP:conf/esa/BachmaierB08}).
	
	\paragraph{Contribution and organization.}
	As discussed above, the parameterized complexity of \OLP and \CLP with
	respect to classical graph parameters (vertex cover number, feedback vertex set number, treedepth, pathwidth, treewidth, maximum degree) has been explored exhaustively~\cite{DBLP:journals/talg/KlemzR19,DBLP:conf/soda/BrucknerR17,clp-vc}.
	In terms of more problem specific parameters, Klemz and Rote~\cite{DBLP:journals/talg/KlemzR19} showed
	that \OLP (and, thus, \CLP) is NP-hard even if restricted to
	instances of levelwidth two.
	In this paper, we focus on the other ``dimension'':
	we study
	\CLP and \OLP parameterized by height.
	
	We show that \OLP parameterized by the height is complete with
	respect to the complexity class \XNLP, which was first studied by
	Elberfeld, Stockhusen, and Tantau~\cite{DBLP:journals/algorithmica/ElberfeldST15} (under the name $\mathrm N[f\ \mathrm{poly},f\ \mathrm{log}]$) and recently made more prominent by Bodlaender,
	Groenland, Nederlof, and Swennenhuis~\cite{XNLP2021}.  It contains all
	parameterized problems that can be solved nondeterministically in
	time $f(k)\cdot n^{\mathcal O(1)}$ and space $f(k)\cdot \log n$
	(where $f$ is a computable function, $n$ the input size, and $k$
	the parameter).
	Elberfeld et al.\ and Bodlaender et al.\ study properties of
	(problems in) this class and provide several
	problems that are \XNLP-complete. 
	In particular, if a problem is \XNLP-complete, it
	lies in \XP, but is \Wh[t] for every~$t$~\cite{XNLP2021}.
	
	\begin{theorem}
		\textsc{Ordered Level Planarity} parameterized by
		the height of the input graph is \XNLP-complete
		(and, thus, it lies in \XP, but is \Wh[t] for every~$t$).
		\XNLP-hardness holds even when restricted to the case where the input graph is connected.
		Moreover, there is a constructive XP-time algorithm for \OLPlong (w.r.t.\ the height).
		\label{thm:olp-XNLP-complete}
	\end{theorem}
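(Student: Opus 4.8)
The plan is to establish three things: membership in XNLP, XNLP-hardness (even for connected inputs), and a constructive XP algorithm. Since XNLP-completeness immediately yields XP-membership by the result of Bodlaender et al., the XP algorithm is really about making it *constructive* and explicit; I would derive it as a corollary of the membership proof via a standard Savitch/Immerman–Szelepcsényi-style determinization of the nondeterministic logspace-and-polytime machine, or alternatively give a direct dynamic program over ``states''. Let me sketch each piece.

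\emph{Membership in XNLP.} I would sweep a horizontal line $L$ upward from just below level $1$ to just above level $h$, maintaining as the machine's state the left-to-right sequence of edges currently crossing $L$, together with, for each such edge, which vertex it emanates from below. Between consecutive levels this sequence is fixed; at each level $i$ we nondeterministically guess the left-to-right placement of $V_i$ interleaved with the crossing edges, check it is consistent with the total order $\prec_i$, then for each $v\in V_i$ close the (guessed number of) in-edges that arrive at $v$ — they must form a contiguous block around $v$'s position — and open its out-edges in a guessed left-to-right order immediately at $v$'s position. Planarity is enforced locally: no two edges may cross between levels (the sequence is just permuted by the insertions/deletions at a level, never reordered), and at a level the in-edges of a vertex must be consecutive. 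The state is a sequence of length at most $|E|$ over an alphabet of size $O(n)$, but crucially its *length* is the number of edges crossing a single gap, which can be $\Theta(n)$ — so naively this is not $f(h)\log n$ space. The fix is the standard one for level-planarity-type sweeps: one does not store the whole sequence but only a logarithmic-space-checkable certificate that the guessed interleavings over all levels are mutually consistent; equivalently, reformulate as guessing, edge by edge in input order, the pair (entry slot, exit slot) and verifying non-crossing and the per-level order constraints by comparisons, which needs only $O(\log n)$ working space plus $O(h)$ pointers. I expect this reformulation to be the technically fussy part of membership.

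\emph{XNLP-hardness.} I would reduce from a known XNLP-complete problem whose natural parameter is a ``width'' that I can encode as a bounded number of levels — the canonical choice being a chained/linear-layout problem such as \textsc{Chained Multicolored Clique} or, more convenient here, a variant of bandwidth-style or ``accepting path in a nondeterministic logspace machine'' problem, as used by Bodlaender et al. The idea is a gadget chain: the long direction of the input corresponds to the $x$-axis (unbounded width), while the state of the computation at each step is encoded by the order in which a bounded bundle of edges threads through a constant number of levels. Each per-step gadget occupies a constant number of levels and $O(1)$ columns; ordered-level-planarity constraints force the edge bundle entering the gadget (encoding the current machine state) to be transformed into the bundle leaving it (the next state) exactly according to a legal transition, and forbidden transitions create an unavoidable crossing. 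Connectivity is obtained by routing a single ``spine'' path through all gadgets, at the cost of one or two extra levels; one must check the spine does not itself force spurious crossings. The main obstacle of the whole theorem, I expect, is designing these gadgets so that $h=O(1)$ (a fixed function of the source problem's parameter, independent of instance size) while still being expressive enough to simulate arbitrary transitions — the tension is that ordered level planarity with few levels has very little ``room'' for routing, so the encoding of a state must be extremely economical and the consistency enforcement must be purely local between adjacent gadgets.

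\emph{Constructive XP algorithm.} Finally, from the sweep description I would extract an explicit dynamic program: let the state after level $i$ be the left-to-right sequence of edges crossing $L_{i+1/2}$ annotated with their source vertices; two consecutive states are compatible iff there is a planar, order-respecting way to route $V_{i+1}$ as above; the number of reachable states is bounded by the number of ways to interleave at most $n$ labelled edges, but because each edge's source is fixed and edges from a common source are consecutive, the relevant information collapses to a sequence over $V_{\le i}$ of length $O(n)$ — giving $n^{O(1)}$ states *per fixed number of open edges*, and the number of open edges is at most the total edge count, which for height $h$ is $O(n)$; a finer analysis bounding the number of ``distinct reachable sequences'' by $n^{O(h)}$ (each of the $h$ levels contributes a bounded-choice decision point replicated across columns) yields an $n^{O(h)}$-time algorithm, hence XP. I would present this last part briefly, as the details are routine once the sweep invariant is fixed, and note that it also recovers an actual embedding, not just a yes/no answer.
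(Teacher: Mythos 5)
Your membership argument has a genuine gap, and it is located precisely at the spot where the paper's central technical idea lives. You sweep a \emph{horizontal} line from bottom to top and maintain the left-to-right sequence of edges crossing it. As you yourself note, that sequence has length $\Theta(n)$ in general, so the state does not fit in $f(h)\log n$ space; you then assert a ``standard fix'' (guessing (entry slot, exit slot) pairs, or a logspace-checkable certificate of mutual consistency), but this is not worked out and I do not believe it can be made to work: the information one must retain about the interleaving of edges across a horizontal gap is genuinely $\Omega(n)$ bits (consider $\Theta(n)$ long edges from level $1$ to level $h$ interleaved with $\Theta(n)$ short edges between levels $2$ and $3$ --- the interleaving pattern is free, so there are exponentially many reachable states). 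The same objection kills your ``finer analysis'' for the XP part: you claim $n^{\mathcal O(h)}$ distinct reachable sequences, but the number of distinct crossing-edge orders a bottom-to-top sweep can encounter is not bounded by any polynomial in $n$ of degree depending only on $h$. The paper avoids this entirely by sweeping in the \emph{orthogonal} direction: it sweeps a $y$-monotone \emph{curve} from left to right. A state (a ``separation'') is then a tuple in $P_1\times\cdots\times P_h$ recording the curve's position on each of the $h$ levels, plus a set $U$ of already-used edges between vertices currently on the curve. That is $\mathcal O(h\log n)$ bits for the positions and $\mathcal O(h^2)$ bits for $U$ --- exactly $f(h)\log n$. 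This reorientation is what makes the height, rather than the width, the relevant parameter, and it also directly yields the $n^{\mathcal O(h)}$-state dynamic program for the constructive XP algorithm (\cref{lem:sweep-algo,lem:existence-exhaustive-sequence,lem:algo-exhaustive-sequence}). Without this reorientation, both your membership and your XP arguments fail.

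On the hardness side, your plan (reduce from a chained problem; lay gadgets along the $x$-axis; encode the bounded ``per-step'' state in a number of levels depending only on $k$; make it connected via a spine) matches the paper's strategy in spirit --- the paper reduces from \textsc{Chained Multicolored Independent Set} using $\Theta(k)$ color bands and a plug/socket ``shifting'' mechanism, with the walls playing roughly the role of your spine. However, your claim that connectivity costs only ``one or two extra levels'' is almost certainly wrong. To connect every otherwise-isolated plug to a single apex vertex, the paper routes a tunnel of plugs through every color band above it, and these tunnels must coexist independently for all $\Theta(k)$ bands; this blows the level count up to $\Theta(k^2)$ (see \cref{thm:olp_connected_xnlp_hard} and \cref{rmk:connected-many-levels}). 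A naive spine reuses levels across bands and would destroy the independence of the per-band shift choices.
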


	Parameterizing \OLP by height captures the ``linear'' nature of the
	solution -- this is reminiscent of recent results by
	Bodlaender, Groenland, Jacob, Jaffke, and Lima~\cite{Bodlaender2022}
	who established \XNLP-completeness for several problems parameterized
	by linear width measures
	(e.g., \textsc{Capacitated Dominating Set} by pathwidth and
	\textsc{Max Cut} by linear cliquewidth).
	However, to the best of our knowledge, this is the first graph drawing (or computational geometry) problem shown to be \XNLP-complete.
	The algorithms are described in \cref{sec:olp-alg} (\cref{thm:olp-XNLP}), whereas the hardness is shown in \cref{sec:olp-hardness} (\cref{thm:olp_connected_xnlp_hard}).
	
	In contrast to the fact that \OLP parameterized by the height lies
	in \XP, 
	it is not difficult to see that the socket/plug gadget described by
	Br\"uckner and Rutter~\cite{DBLP:conf/soda/BrucknerR17}
	can be utilized in the context of a reduction from \textsc{3-Partition} to show that 
	(\PLP and, thus) \CLP remains NP-hard even when restricted to instances of constant height.
	In fact, the unpublished full version of~\cite{DBLP:conf/soda/BrucknerR17} features such a construction with a height of~7~\cite{ignaz-pc}.
	Here, we present a reduction that is tailor-made for \CLP, showing that it is NP-hard even if restricted to instances of height~4.
	We complement this result by showing that the (surprisingly challenging) case of instances with height at most~3 can be solved in polynomial time.
	
	\begin{theorem}
		\label{thm:clp-summary}
		\CLPlong is \NPh even if restricted to
		height~4, but instances of height at most~3 can be solved
		constructively in polynomial time.
	\end{theorem}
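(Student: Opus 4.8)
The plan is to prove the two halves of \cref{thm:clp-summary} essentially independently: a hardness reduction showing that \CLP remains \NPh on height-$4$ instances, and a constructive polynomial-time algorithm for height at most~$3$.

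\textbf{Hardness for height~$4$.} I would reduce from \textsc{3-Partition} in its strongly \NPh form (integers $a_1,\dots,a_{3m}$ summing to $mB$, each strictly between $B/4$ and $B/2$; decide whether they split into $m$ triples of sum~$B$). The four levels are used as two ``inner'' levels $V_2,V_3$ sandwiched between a ``floor'' $V_1$ and a ``ceiling'' $V_4$. Each $a_i$ becomes an \emph{item gadget}: a source on $V_1$ fanning out to a block of $a_i$ vertices on $V_2$, so that in any planar drawing these targets appear contiguously on $L_2$; equipping every $V_2$-vertex with a neighbor on $V_1$ makes the blocks ``solid'' (no foreign vertex can slip between a block's endpoints without forcing a crossing). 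The $m$ bins are realized by $m{+}1$ ``wall'' structures built from long $V_1$--$V_4$ edges; the prescribed left-to-right orders $\prec_1,\dots,\prec_4$ pin the walls into a fixed pattern that leaves room for exactly $B$ item-vertices between consecutive walls. Planarity then forces every item block into a single bin (a block straddling a wall yields a crossing) and forces the total width inside each bin to be at most~$B$; with $a_i\in(B/4,B/2)$ and total $mB$, this is equivalent to a valid $3$-partition. All gadgets have size polynomial in the binary-encoded numbers, and the height is exactly~$4$ by construction.

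\textbf{Polynomial algorithm for height at most~$3$.} Heights~$1$ and~$2$ are relatively routine: for $h=1$ there are no edges and any linear extension of $\prec_1$ works; for $h=2$ the instance is a $2$-layer drawing problem, solved by a PQ-tree computation on one layer intersected with the constraint $\prec_i$ on each layer, plus standard bookkeeping to place connected components side by side. The core is $h=3$, where, besides the edges inside $\{1,2\}$ and inside $\{2,3\}$, there are \emph{long} edges from $V_1$ to $V_3$, each crossing $L_2$ in one of the ``gaps'' of the left-to-right order of $V_2$. I would (i) compute via PQ-trees the admissible orders of $V_2$ induced by its edges to $V_1$, intersect these with those induced by its edges to $V_3$ and with $\prec_2$; (ii) observe that fixing, for each long edge, the side it passes on relative to each $V_2$-vertex turns the two halves into ordinary $2$-layer instances on $V_1\cup V_2'$ and $V_2''\cup V_3$ with virtual vertices, on which $\prec_1$ and $\prec_3$ must additionally hold; (iii) encode the remaining binary choices (the P-node flips and Q-node reversals of the PQ-trees together with the sides of the long edges) and the planarity requirements as a \textsc{2-Sat} instance (equivalently, resolve them by a further sequence of PQ-tree reductions), so that a satisfying assignment yields a constrained level planar embedding and conversely; (iv) fold the relative nesting/side-by-side placement of connected components into the same formulation, using that with only three levels the ``outer boundary'' of a component's drawing has constant combinatorial type. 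Reading off a drawing from a satisfying assignment is direct, giving the constructive claim.

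\textbf{Main obstacle.} The hardness direction is mostly gadget engineering; the one delicate point is making the bins \emph{tight} — forcing per-bin capacity $\le B$ rather than merely separating items — with height exactly~$4$ and no spare level. The real difficulty is the $h=3$ algorithm: one must simultaneously reconcile three partial orders, route the long $V_1$--$V_3$ edges through the chosen order of $V_2$, and fix the relative placement of connected components, and the challenge is to show that the interaction of all these choices is captured by a polynomially sized, tractable constraint system instead of blowing up combinatorially. Proving that the \textsc{2-Sat} encoding is both sound and complete — in particular that no long-range planarity obstruction escapes the local side-constraints — is where I expect the bulk of the work to lie.
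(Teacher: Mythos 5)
Your hardness sketch shares the paper's high-level plan (reduce from strongly NP-hard \textsc{3-Partition}, separate bins by walls running from level~$1$ to level~$4$, and encode each integer $a_i$ as a ``block'' that planarity keeps contiguous), but it is missing the one mechanism that actually makes the reduction work: a way to \emph{cap} the capacity of a bin. With nothing but a $V_1$-source fanning out to $a_i$ vertices on $V_2$, any number of such blocks can be squeezed between two consecutive walls, since a level planar drawing imposes no bound on how many vertices fit in a gap; you flag this tightness issue yourself as the delicate point but do not resolve it. The paper resolves it with a dedicated gadget pair: each bin is a \emph{mountain $B$-chain} (a chain of $B$ paths on levels $1,2,3,2,1$ glued at their level-$1$ endpoints, with walls to level~$4$ at both ends), which creates exactly $B$ physical slots, and each integer $a_i$ is an \emph{$a_i$-clip} (a fan whose level-$2$ vertices are linearly ordered by $\prec_2$ and alternately joined to levels $3$ and $1$, plus a level-$4$ apex) whose essentially unique drawing forces each of its $a_i$ level-$1$-to-$2$ edges to occupy a distinct mountain. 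The counting argument ``every mountain holds exactly one such edge, every clip lies in one bin'' is what converts planarity into the $\sum_{i\in S_j}a_i\le B$ capacity bound. Without an analogous slot-counting gadget, your reduction does not go through.

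For the algorithm, you take a genuinely different route (PQ-trees on each band, intersected with the $\prec_i$'s, plus a 2-SAT encoding of P/Q-node flips and long-edge sides), and the paper does not use PQ-trees at all. Its algorithm is a bespoke combinatorial decomposition: it first restricts to a strongly connected component--constraint graph, identifies a unique ``hook chain'' of components and fuses them into a main component, guesses two vertices $s,t$ whose simple paths form a ``backbone'' so that what hangs off lies on a single band (``pieces''), peels off ``fingers,'' uses a 2-SAT formula only for one local choice (which side of its anchor each ``flat piece'' lies on, constrained by chains of enclosed components that alternate bands), and finally assigns enclosed components to ``gaps'' by a greedy-then-topological-sort procedure. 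Your plan collapses all of these interacting choices into a single PQ-tree/2-SAT system, and the completeness of that encoding is exactly what you admit is unproved; the paper's long case analysis (in particular the interaction of gap groups with chains of enclosed components, handled via the auxiliary digraph $H_{\mathrm{gap}}$) is evidence that global obstructions do arise that a local side-choice encoding can miss. So both halves of the proposal stop at the genuinely hard step: the capacity gadget on the hardness side, and soundness/completeness of the constraint system on the algorithmic side.
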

	
	We show the hardness in \cref{sec:clp-hardness}
	(\cref{thm:CLP-NPhard}) and present the algorithm in
	\cref{sec:clp-3lvl} (\cref{thm:clp-3lvl}).  As a warm-up for the
	rather technical positive result, we show that \CLP can be solved
	in linear time when restricted to instances of height at most~2;
	see \cref{sec:clp-2lvl} (\cref{thm:clp-2lvl}).
	
	We conclude by stating some open problems in \cref{sec:conc}.
	
	\paragraph{Notation and conventions.}
	Given an integer $k>0$, we use $[k]$ as shorthand for
	$\{1, 2, \dots, k\}$.
	Given a directed or undirected graph~$G$, let $V(G)$ denote the vertex
	set of~$G$, and let $E(G)$ denote the edge set of~$G$.
	Recall that level graphs are directed with each edge $(u,v)$ pointing upwards, i.e., $\gamma(u)<\gamma(v)$.
	As a shorthand and when the direction is not important, we use both $uv$ and~$vu$ to refer to a directed edge $(u,v)$.

	\section{An XP / XNLP Algorithm for Ordered Level Planarity}
	\label{sec:olp-alg}
	
	In this section, we show that \textsc{Ordered Level Planarity} is in
	\textsc{XNLP} (and thus in \textsc{XP})
	if parameterized by the height~$h$ of the input graph.
	Moreover, we show how to construct an ordered level planar drawing (if it exists) in \XP-time.
	The main idea of our approach is to continuously sweep the plane with an unbounded y-monotone curve~$s$ from left to right in a monotone fashion
	such that for each edge $(u,v)$, there is a point in time where~$u$ and~$v$ are consecutive vertices along~$s$.
	When this happens, the edge can be drawn without introducing any crossings due to the fact that~$s$ moves monotonically.
	To discretize this idea and turn it into an algorithm,
	we instead determine a sequence of unbounded y-monotone curves $S=(s_1,s_2,\dots,s_z)$
	that is sorted from left to right (i.e., no point of $s_{i+1}$ is to the left of~$s_i$),
	has a length of $z\in \mathcal O(n)$,
	and contains for every edge $(u,v)$ a curve $s_i$ along which~$u$ and~$v$ are consecutive vertices.
	Now, given~$S$, the desired drawing can be constructed in polynomial time;
	for an illustration see \cref{fig:olp-DP-example}.
	Moreover, the sequence~$S$ can be obtained in \XP-time/space by exhaustively enumerating all possibilities
	or in \XNLP-time/space by nondeterministic guessing.
	Let us proceed to formalize these ideas.
	
	\paragraph{Gaps and positions.}
	Let $\mathcal G=(G=(V,E),\gamma,(\prec_i)_{1\le i\le h})$ be an ordered level graph and consider one of its levels~$V_i$.
	Let $(v_1,v_2,\dots,v_{\lambda_i})$ be the linear order of~$V_i$ corresponding to~$\prec_i$.
	In an ordered level planar drawing of~$\mathcal G$, the vertices~$V_i$ divide the line~$L_i$ into a sequence of open line-segments and rays, which we call the \emph{gaps} of~$L_i$.
	A \emph{position} on~$L_i$ is a gap of~$L_i$ or a vertex of~$V_i$.
	Each position on~$L_i$ is encoded by an index in $P_i=\{0\}\cup [2\lambda_i]$:
	the index $0$ \emph{represents} the gap that precedes~$v_1$;
	an odd index~$p$ \emph{represents} the vertex $v_{\lceil p/2\rceil}$; and
	an even index~$p\neq 0$ \emph{represents} the gap that succeeds $v_{\lceil p/2\rceil}$.
	
	\paragraph{Separations.}
	A \emph{separation} for~$\mathcal G$ is an element of $P_1\times P_2\times \dots \times P_h$.
	Intuitively, a separation $s=(p_1,p_2,\dots,p_h)$ represents the equivalence class of unbounded y-monotone curves that intersect line~$L_i$ in position~$p_i$ for each $1\le i\le h$;
	see \cref{fig:olp-DP-example}.
	We say that a vertex~$v\in V$ is \emph{on} $s$ if $p_{\gamma(v)}$ represents~$v$.
	Moreover, we say that~$v$ is \emph{to the left (right)} of~$s$
	if the index corresponding to the position of~$v$ is
	strictly smaller (larger) than $p_{\gamma(v)}$.
	Consider two vertices $u,v\in V$ that are on~$s$ and where $\gamma(u)<\gamma(v)$.
	We say that~$u$ and~$v$ are \emph{consecutive} on~$s$ if all of the indices $p_{\gamma(u)+1},p_{\gamma(u)+2},\dots,p_{\gamma(v)-1}$ represent gaps.
	In this case, $v$ is the \emph{successor} of~$u$ on~$s$ and~$u$ is the \emph{predecessor} of~$v$ on~$s$.
	We say that~$s$ \emph{uses} an edge $e=(u,v)\in E$
	if~$u$ and~$v$ are consecutive along~$s$.
	
	\paragraph{Sweeping sequences.}
	A \emph{sweeping sequence} for~$\mathcal G$ is a sequence $S=(s_1,s_2,\dots,s_z)$ of separations for~$\mathcal G$ such that for each $j\in [z-1]$, we have $s_j\le s_{j+1}$, componentwise.
	We say that a sweeping sequences \emph{uses} an edge~$e\in E$, if it contains a separation that uses~$e$.
	
	\paragraph{Drawing algorithm.}
	The following lemma formalizes the idea of our drawing algorithm.
	
	\begin{lemma}\label{lem:sweep-algo}
		Let $\mathcal G=(G=(V,E),\gamma,(\prec_i)_{1\le i\le h})$ be an ordered level graph
		and let $S=(s_1,s_2,\dots,s_z)$ be a sweeping sequence for $\mathcal G$ that uses every edge of~$E$.
		Given~$\mathcal G$ and~$S$, an ordered level planar drawing of~$\mathcal G$ can be constructed in $\mathcal O(zh+n^2)$ time,
		where~$n$ is the number of vertices.
	\end{lemma}
	
	\begin{proof}
		Without loss of generality, we assume that $s_1=(0,0,\dots,0)$
		(otherwise, we simply prepend this separation to~$S$).
		Recall that a separation $s_j=(p_1,p_2,\dots,p_h)$ corresponds to an equivalence class of unbounded y-monotone curves that intersect line~$L_i$ in position~$p_i$ for each $1\le i\le h$.
		For each $j\in [z]$, let $c_j$ denote a member of the class of curves corresponding to~$s_j$.
		The fact that $S$ is a sweeping sequence implies that we may assume that for every $j\in [z-1]$, no point of $c_{j+1}$ lies to the left of $c_j$ and that the only points shared by $c_j$ and~$c_{j+1}$ are those that correspond to vertices.
		In particular, the curve~$c_1$ passes through the left-most gap on each level.
		
		To prove the existence of the desired drawing, it suffices to show that for each $j\in [z]$
		there is an ordered level planar drawing~$\Gamma_j$ that contains
		all vertices that are not to the right of~$s_j$
		and all edges that are used by one of the separations $s_1,s_2,\dots,s_j$
		and where the vertices on $s_j$ and the edges used by $s_j$
		are drawn on the curve~$c_j$
		and the remaining vertices and edges are drawn to the left of~$c_j$.
		Note that, indeed, $\Gamma_z$ is the desired drawing of~$\mathcal G$.
		
		The (empty) drawing~$\Gamma_1$ is trivially constructed.
		To construct~$\Gamma_{j+1}$ from~$\Gamma_j$,
		we simply place all (isolated) vertices that are to the right of~$s_j$ and to the left of~$s_{j+1}$ between~$c_j$ and~$c_{j+1}$
		and then draw on~$c_{j+1}$ all the new vertices on~$s_{j+1}$, as well as the not yet drawn edges that are used by~$s_{j+1}$.
		By definition of the curves $c_1,c_2,\dots,c_{j+1}$, this process cannot introduce any crossings.
		For illustrations, refer to \cref{fig:olp-DP-example}.
		
		The above inductive proof directly corresponds to an iterative algorithm (note that we do not have to explicitly construct the curves $c_1,c_2,\dots,c_z$).
		The total time for adding the vertices is $\mathcal O(zh+n)$.
		To add the edges,
		whenever there is a new pair of consecutive vertices on the current separation,
		we need to check whether these vertices are adjacent,
		which takes~$\mathcal O(n)$ time.
		New pairs of consecutive vertices are only created when a vertex appears or disappears from the current separation.
		Moreover, the appearance/disappearance of a vertex can only create~$\mathcal O(1)$ new pairs of consecutive vertices.
		Combined with the fact that for each vertex there is an interval of separations in~$S$ on which it is located,
		it follows that the total number of adjacency checks we have to perform is~$\mathcal O(n)$.
		Thus, the total time for the adjacency checks is~$\mathcal O(n^2)$.
		Drawing an edge takes time linear in the number of levels it spans,
		which is at most $h\in \mathcal O(n)$.
		Moreover,
		by planarity (note that the inductive proof shows that $G$ is planar), the number of edges is~$\mathcal O(n)$.
		Hence, the time to draw all the edges is~$\mathcal O(n^2)$, 
		and the total runtime is $\mathcal O(zh+n^2)$, as claimed.
	\end{proof}
	
	In the remainder of this section, we will show that for each ordered level graph that admits an ordered level planar drawing there exists a particularly well-structured sweeping sequence that uses all of its edges.
	Moreover, we will show that the existence of such a sequence can be tested efficiently.
	Combined with \cref{lem:sweep-algo}, this results in the desired XP / XNLP algorithms.
	
	\paragraph{Nice and exhaustive sweeping sequences.}
	Let $S=(s_1,s_2,\dots,s_z)$ be a sweeping sequence for our ordered level graph~$\mathcal G$.
	We say that~$S$ is \emph{nice} if for each $i\in [z-1]$, the separation~$s_{i+1}$ is obtained from~$s_i$ by incrementing exactly one component by~$1$.
	Moreover, we say that~$S$ is \emph{exhaustive} if it is nice and
	$s_1=(0,0,\dots,0)$ and
	$s_z=(|P_1|-1,|P_2|-1,\dots,|P_h|-1)$; see \cref{fig:olp-DP-example}.
	
	\begin{lemma}\label{lem:existence-exhaustive-sequence}
		Let $\mathcal G=(G=(V,E),\gamma,(\prec_i)_{1\le i\le h})$ be an ordered level graph.
		Then~$\mathcal G$ admits an ordered level planar drawing
		if and only if
		there is an exhaustive sweeping sequence for~$\mathcal G$ that uses every edge in~$E$.
	\end{lemma}
	
	\begin{proof}
		The ``if''-direction follows immediately from \cref{lem:sweep-algo}.
		For the ``only if''-direction, suppose that~$\mathcal G$
		admits an ordered level planar drawing~$\Gamma$.
		We augment~$\mathcal G$ and~$\Gamma$ as follows:
		first, we insert a new bottommost and topmost level
		with a single vertex~$a$ and~$b$, respectively.
		Second, we insert a maximal set of y-monotone crossing-free edges
		(i.e., the resulting drawing is supposed to be crossing-free).
		We explicitly allow the introduction of parallel edges,
		but we require that for each pair of parallel edges~$e,e'$
		there is a vertex in the interior of the simple closed curve formed by~$e,e'$.
		Let~$\mathcal G'$ and~$\Gamma'$ denote the resulting graph and drawing,
		respectively, which is an internal triangulation:
		
		\begin{claim}\label{claim:triangulation}
			The outer face of~$\Gamma'$ is bounded by two parallel $(a,b)$-edges
			and
			every internal face of~$\Gamma'$ is a \emph{triangle} (i.e., its boundary corresponds to a cycle of length three).
		\end{claim}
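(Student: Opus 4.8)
The plan is to show that $\Gamma'$ is so rich that no further y-monotone edge can be added, and then to mimic the classical argument that a maximal plane graph is a triangulation, adapted to the y-monotone setting.

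\textbf{Outer face.} First I would deal with the unbounded face. Since $a$ is the unique vertex on the (new) bottommost level and $b$ the unique vertex on the (new) topmost level, a y-monotone curve drawn strictly to the left of every point of $\Gamma$ (i.e., with $x$-coordinate below the minimum $x$-coordinate occurring in the finite drawing) connects $a$ to $b$ without crossing any edge of $\mathcal G'$; symmetrically on the right. Since a \emph{non-edge} can never conflict with the parallel-edges requirement (there is no second copy to pair it with), maximality forces $\mathcal G'$ to contain a ``leftmost'' and a ``rightmost'' $(a,b)$-edge. These two edges are parallel, every original vertex of $\mathcal G$ lies between them, and every remaining vertex and edge lies (weakly) between them; hence the unbounded face is bounded exactly by these two parallel $(a,b)$-edges.

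\textbf{Internal faces.} Next I would fix a bounded face $f$ of $\Gamma'$. One checks that $\mathcal G'$ is $2$-connected: a cut vertex would permit the addition of a y-monotone edge ``around'' it (joining two vertices it separates along the boundary of a common face), and likewise $\mathcal G'$ is connected; both contradict maximality. Thus $\partial f$ is a simple cycle $C_f$, and since every pair of parallel edges must enclose a vertex whereas $f$ has empty interior, $C_f$ has length at least~$3$. Suppose for contradiction that $|C_f|\ge 4$. Let $p$ and $q$ be vertices of $C_f$ of minimum and maximum level. As every edge is strictly y-monotone, $C_f$ splits into a ``left chain'' and a ``right chain'', each a y-monotone path from $p$ to $q$, and for every level $\ell$ with $\gamma(p)\le\ell\le\gamma(q)$ the set $f\cap L_\ell$ is a single interval bounded by the two chains. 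A short case distinction (on whether $p$ and $q$ are adjacent and on the chain lengths) then yields two boundary vertices $u,v$ of $C_f$ with $\gamma(u)<\gamma(v)$ that are \emph{not} adjacent in $\mathcal G'$: e.g., if $C_f=v_1v_2v_3v_4$, the two diagonals would cross, so at least one of them is a non-edge, and one can pick it to join vertices of distinct levels; longer cycles are handled analogously using an interior vertex of a chain. Using the interval description of $f$, a strictly y-monotone curve from $u$ to $v$ can be routed through the interior of $f$ (except at its endpoints), so $uv$ can be added without introducing a crossing and, being a non-edge, without violating the parallel-edges condition. This contradicts maximality, so $C_f$ is a triangle.

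\textbf{Main obstacle.} The delicate point is precisely the last step: ensuring that every non-triangular bounded face has a non-adjacent pair of boundary vertices on distinct levels, and that such a pair can be joined by a crossing-free y-monotone curve inside the face. Both rest on the structural observation that, because all edges are strictly y-monotone, each bounded face is itself ``y-monotone'': it has a unique bottom vertex, a unique top vertex, two y-monotone boundary chains, and a single-interval intersection with each horizontal line strictly between the levels of $p$ and $q$. I would isolate this as a small lemma; once it is available, the chord-routing and the case analysis become routine, and the contradiction with maximality follows immediately.
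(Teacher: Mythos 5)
Your treatment of the outer face usefully expands on what the paper brushes off as obvious, and the general strategy for internal faces (route a chord through a non-triangular face and contradict maximality) matches the paper's in spirit. However, there is a real gap at the center of your internal-face argument: the ``structural observation'' that \emph{because all edges are strictly y-monotone, each bounded face is itself y-monotone}. That implication is false. For a concrete counterexample, take a hexagonal face whose boundary vertices lie at levels $1,3,2,3,1,0$ in cyclic order, drawn so that the level-$2$ vertex is a local minimum of the boundary with the face lying \emph{below} it. A horizontal line strictly between levels $2$ and $3$ then meets the boundary four times, the face has two components at that height, the cycle has two topmost vertices (both on level $3$), and there is no decomposition into two y-monotone chains from a unique bottom to a unique top. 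Every edge is strictly y-monotone and the drawing is level planar, yet the face is not y-monotone.

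Consequently, y-monotonicity of the faces of $\Gamma'$ cannot be a ``small lemma'' derived from the y-monotonicity of edges; it can only be a consequence of the \emph{maximality} of $\Gamma'$, and proving it from maximality is essentially the whole job. In the hexagonal example above one would add a y-monotone chord from the level-$2$ vertex straight down to the level-$0$ vertex, which is exactly the kind of ray-shooting/reflex-vertex argument the paper already carries out. Indeed, the paper's proof deliberately makes no global structural assumption about $f$: it fixes a bottommost vertex $v$ of $f$ together with two consecutive boundary edges $e_1,e_2$ at $v$ and does a local case analysis. Its Case~1 (a vertex appears in the wedge between $e_1$ and $e_2$ at a low level) is precisely the non-y-monotone wrap-around that your sketch silently excludes, and it is handled by adding an edge from $v$; its Cases~2.1 and~2.2 then cover the remaining, y-monotone picture. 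To repair your proposal you would either have to genuinely prove the y-monotonicity lemma using maximality — at which point you have reproduced most of the paper's argument and the lemma has done no independent work — or adopt a bottom-up case analysis like the paper's, rather than leaning on the asserted lemma.
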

		
		\begin{claimproof}
			The statement regarding the outer face obviously holds.
			Now consider an internal face~$f$.
			Towards a contradiction, assume that~$f$ is not a triangle.
			Let~$v$ be a bottommost vertex of~$f$,
			which is incident to (at least) two edges of~$f$.
			Let~$e_1=(v,v_1),e_2=(v,v_2)$ be two edges
			that are consecutive in the cyclic order of edges incident to~$v$
			and consecutive along the boundary of~$f$.
			Without loss of generality, we may assume that $\gamma(v_2)\le \gamma(v_1)$.
			Moreover, without loss of generality, we may assume that~$e_2$ is to the right of~$e_1$.
			We distinguish two main cases.
			
			\myproofparagraph{Case 1.} There is a vertex~$u$ with
			$\gamma(v)<\gamma(u)\le \gamma(v_2)$
			that is located to the right of~$e_1$ and to the left of~$e_2$.
			We can add an edge from~$v$ to the bottommost vertex
			with these properties in a y-monotone crossing-free fashion in~$f$.
			Moreover, this edge cannot be parallel to some edge of~$\Gamma'$;
			a contradiction to the definition of~$\Gamma'$.
			
			\myproofparagraph{Case 2.} There is no vertex~$u$ with
			$\gamma(v)<\gamma(u)\le \gamma(v_2)$
			that is located to the right of~$e_1$ and to the left of~$e_2$.
			We distinguish two subcases.
			
			\myproofparagraph{Case 2.1.}
			$\gamma(v_2)< \gamma(v_1)$.
			In this case, we can add an edge from~$v_2$ to~$v_1$
			in a y-monotone crossing-free fashion in~$f$ such that,
			even if this edge is parallel to some other edge,
			the interior of the region bounded by these edges
			contains at least one vertex (recall that~$f$ is not a triangle by assumption).
			So again, we obtain a contradiction to the definition of~$\Gamma'$.
			
			\myproofparagraph{Case 2.2.}
			$\gamma(v_2)=\gamma(v_1)$.
			In this case, we draw a y-monotone curve~$c$ in~$f$
			that starts at~$v$ and extends upwards until it crosses some edge~$e_3=(u,w)$ with~$e_3\notin \{e_1,e_2\}$.
			The existence of~$e_3$ follows from the existence of the two edges of the outer face.
			By applying the assumption of Case~2 to $u$, it follows that the upper endpoint~$w$ of~$e_3$ is neither~$v_1$ nor~$v_2$.
			Hence, we can add an edge from~$v$ to~$w$ in a y-monotone crossing-free fashion in~$f$ such that,
			even if this edge is parallel to some other edge,
			the interior of the region bounded by these edges contains at least one vertex ($v_1$ or $v_2$).
			So again, we obtain a contradiction to the definition of~$\Gamma'$.
		\end{claimproof}
		
		Let~$e_\ell$ and~$e_r$ denote the left and right edge of the outer face of~$\Gamma'$.
		To prove the lemma, it suffices to show that there is a nice sweeping sequence  for~$\mathcal G'$
		that uses all of its edges and starts with separation $s_1=(1,0,0,\dots,1)$
		and ends with separation $s_z=(1,|P_1|-1,|P_2|-1,\dots,|P_z|-1,1)$,
		which use~$e_\ell$ and~$e_r$, respectively.
		To this end, we show:
		
		\begin{claim}
			Let~$P$ be a directed (y-monotone) $ab$-path in~$\Gamma'$ and
			let~$s$ be the (unique) separation that corresponds to~$P$.
			Then there is a nice sweeping sequence from~$s_1$ to~$s$
			that uses all edges of~$\Gamma'$ that are not located to the right of~$P$.
			\label{claim:peeling}
		\end{claim}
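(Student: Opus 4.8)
The plan is to prove \cref{claim:peeling} by induction on the number $t$ of internal faces of $\Gamma'$ (all of which are triangles by \cref{claim:triangulation}) that lie to the left of the directed $y$-monotone $a$--$b$ path $P$. For the base case $t=0$, I would first argue that $P$ must be the left boundary edge $e_\ell$ of the outer face: if $P\ne e_\ell$, then $e_\ell$ (which shares its endpoints $a,b$ with $P$) lies strictly to the left of $P$, so the region enclosed between $e_\ell$ and $P$ is a nonempty subregion of the internally triangulated $\Gamma'$ and hence contains at least one triangle, contradicting $t=0$. For $P=e_\ell$, the separation corresponding to $P$ is exactly $s_1=(1,0,0,\dots,0,1)$, and the one-term sequence $(s_1)$ is a nice sweeping sequence that uses the edge $(a,b)=e_\ell$, because $a$ and $b$ are consecutive on $s_1$ (all positions strictly between them are gaps); since $e_\ell$ is the only edge not located to the right of $P$, we are done.

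For the inductive step $t\ge 1$, first note $P\ne e_\ell$, so no edge of $P$ equals $e_\ell$ (as $e_\ell$ is an $(a,b)$-edge and $P$ is an $a$--$b$ path), whence the face immediately to the left of every edge of $P$ is internal, i.e., a triangle. Fix any edge of $P$, orient it as $xy$ with $\gamma(x)<\gamma(y)$, and let $\Delta=xyz$ be the triangle immediately to its left. I would then check that $\gamma(x)<\gamma(z)<\gamma(y)$ (the edges $xz$ and $zy$ bounding $\Delta$ below and above $z$ must be $y$-monotone), that $z$ and all of $\Delta$ lie to the left of $P$, and that $\Delta$ is empty. Let $P'=(P\setminus\{xy\})\cup\{xz,zy\}$; this is a $y$-monotone $a$--$b$ path, the only region between $P'$ and $P$ is $\Delta$, so $P'$ has exactly $t-1$ triangles to its left, and the set of edges not to the right of $P$ equals the set of edges not to the right of $P'$ together with $\{xy\}$. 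By the induction hypothesis there is a nice sweeping sequence $\Sigma'$ from $s_1$ to the separation $s'$ of $P'$ that uses all edges not to the right of $P'$. Next I would observe that $s'$ and the separation $s$ of $P$ agree on every level except $\gamma(z)$: on level $\gamma(z)$, $P'$ passes through the vertex $z$, whereas $P$ passes through the gap immediately to the right of $z$ (nothing lies between, since $\Delta$ is empty), so $s$ is obtained from $s'$ by incrementing the $\gamma(z)$-th component by exactly one. Hence $\Sigma$, obtained by appending $s$ to $\Sigma'$, is a nice sweeping sequence from $s_1$ to $s$. Its last step uses the edge $xy$, because on $s$ the vertices $x$ and $y$ are consecutive: as $xy$ is an edge of $P$, the path $P$ has no vertex on any level strictly between $\gamma(x)$ and $\gamma(y)$. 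Combined with the inductive guarantee on $\Sigma'$, the sequence $\Sigma$ uses all edges not to the right of $P$, completing the induction.

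I expect the main obstacle to be the geometric bookkeeping of the inductive step: showing cleanly that the face immediately to the left of an edge $xy$ of $P$ is a triangle whose third vertex $z$ satisfies $\gamma(x)<\gamma(z)<\gamma(y)$, and deriving from this that the separations of $P$ and $P'$ differ in the single coordinate $\gamma(z)$ and there by exactly one. The remaining ingredients (existence of the triangle from internal triangulation together with $P\ne e_\ell$, the triangle count dropping by one, and $x$ and $y$ being consecutive on $s$) are then short. Finally, \cref{claim:peeling} is applied with $P=e_r$ to obtain a nice sweeping sequence from $s_1$ to $s_z$ using every edge of $\Gamma'$, after which one removes the auxiliary vertices $a,b$ and the added edges to recover the desired exhaustive sweeping sequence for $\mathcal G$, finishing the proof of \cref{lem:existence-exhaustive-sequence}.
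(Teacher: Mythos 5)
Your base case and the concluding application with $P=e_r$ are fine and match the paper, but the inductive step contains a genuine error. You fix an arbitrary edge $xy$ of $P$ with $\gamma(x)<\gamma(y)$, let $\Delta=xyz$ be the internal face immediately to its left, and then assert that $\gamma(x)<\gamma(z)<\gamma(y)$, appealing to the $y$-monotonicity of $xz$ and $zy$. That inference does not hold: all three edges of $\Delta$ being $y$-monotone is perfectly compatible with $z$ lying above $y$ or below $x$, in which case one of $xz$, $zy$ is the ``long'' side of $\Delta$ spanning more levels than $xy$. Concretely, let $\Gamma'$ consist of a single internal triangle on $a,v,b$ with $\gamma(a)<\gamma(v)<\gamma(b)$ together with the two parallel outer $ab$-edges $e_\ell$, $e_r$, and take $P=(a,v,b)$ (so $t=1$). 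The triangle to the left of edge $av$ has third vertex $b$ with $\gamma(b)>\gamma(v)$, and the triangle to the left of edge $vb$ has third vertex $a$ with $\gamma(a)<\gamma(v)$; no edge of $P$ satisfies your condition, so the induction cannot even start.

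The paper's proof avoids this by distinguishing two local configurations and proving that at least one of them must occur along $P$. Configuration~(a) is the case you handle: an edge $(u,v)$ of $P$ whose left triangle $uvw$ satisfies $\gamma(u)<\gamma(w)<\gamma(v)$; one replaces $(u,v)$ by $(u,w),(w,v)$. Configuration~(b), which your argument omits entirely, is a triangle $uvw$ to the left of $P$ both of whose edges $(u,w)$ and $(w,v)$ lie on $P$; here one replaces those two edges by the single edge $(u,v)$. In either case $P'$ has one fewer triangle on its left, and $s$ is obtained from the separation $s'$ of $P'$ by incrementing the coordinate at level $\gamma(w)$ by exactly one, so appending $s$ keeps the sequence nice and adds exactly the newly covered edge(s) to the used set. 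Crucially, the paper then shows by a short contradiction argument (take a topmost edge of $P$ whose left triangle is ``bad,'' i.e.\ has its apex outside $[\gamma(u),\gamma(v)]$, and derive a contradiction from the absence of both configurations) that Configuration~(a) or~(b) always exists. That existence argument is precisely what your ``fix any edge'' step silently assumes but cannot deliver; to repair the proof you need to add Configuration~(b) and this existence lemma.
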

		
		\begin{claimproof}
			The proof is by induction on the number~$i$ of triangles to the left of~$P$.
			If~$i=0$ (that is, $P=(e_\ell)$), the statement obviously holds.
			So assume~$i>0$.
			
			Suppose there is an edge~$(u,v)$ on~$P$ that is incident to a triangle~$uvw$,
			where~$w$ is to the left of~$P$ and $\gamma(u)<\gamma(w)<\gamma(v)$;
			we call this \emph{Configuration~(a)}.
			In this case, we can replace $(u,v)$ with $(u,w)$ and $(w,v)$
			to obtain a directed $ab$-path~$P'$ from~$P$
			with~$i-1$ triangles to its left and the claim follows by induction.
			
			Next, suppose that there is a triangle $uvw$ to the left of~$P$
			whose edges $(u,w)$ and $(w,v)$ belong to~$P$;
			we call this \emph{Configuration~(b)}.
			We can replace $(u,w)$ and $(w,v)$ with $(u,v)$
			to obtain a directed $ab$-path~$P'$ from~$P$
			with~$i-1$ triangles to its left and the claim follows by induction.
			
			We claim that one of these two configurations always exists.
			Towards a contradiction, assume otherwise.
			Let $e=(u,v)$ be an edge on~$P$ that is incident to a triangle~$\triangle=uvw$,
			where~$w$ is to the left of~$P$.
			Since Configuration~(a) does not exist,
			we may assume without loss of generality that $(\gamma(u)<)\gamma(v)<\gamma(w)$.
			Further, we may assume without loss of generality that $e$ is the topmost edge on~$P$ with this property.
			Since Configuration~(b) does not exist,
			the edge $(v,w)$ of~$\triangle$ does not belong to~$P$.
			Let~$x$ denote the successor of~$v$ along~$P$
			and let~$vxy$ denote the triangle to the left of $(v,x)$ on~$P$.
			Since the edge $(v,w)$ of~$\triangle$ is crossing-free and Configuration~(b) does not exist,
			it follows that $(\gamma(v)<)\gamma(x)<\gamma(y)$;
			a contradiction to the choice of $e$.
		\end{claimproof}
		
		The desired sequence from~$s_1$ to~$s_z$ is obtained by Claim~\ref{claim:peeling}, choosing~$P=(e_r)$.
	\end{proof}
	
	We remark that an exhaustive sweeping sequence using all edges corresponds directly to a
	particularly well-structured path decomposition.
	Thus, \cref{lem:existence-exhaustive-sequence} implies that every (ordered)
	level planar drawing of height~$h$ represents a graph of pathwidth at
	most~$h-1$; a statement that was independently proven
	in~\cite{dfklmnrrww-pclgd-Algorithmica08}.
	However, the path decompositions constructed in the proof
	of~\cite[Lemma 1]{dfklmnrrww-pclgd-Algorithmica08} do not exhibit the
	same properties that are inherent to exhaustive sweeping sequences
	and on which our algorithms heavily rely.
	In particular, these path decompositions may contain bags with
	multiple vertices of a given
	level (unless the drawing is proper). Moreover, the existence of a path
	decomposition of width at most $h-1$ for an ordered level graph~\G of
	height $h$ does not characterize the fact that~\G is ordered level planar
	(recall that, in fact,
	\OLP is NP-hard even when restricted to instances of
	pathwidth~$1$~\cite{DBLP:journals/talg/KlemzR19}).
	
	\paragraph{Computing suitable sweeping sequences.}
	In view of \cref{lem:existence-exhaustive-sequence}, recognizing ordered level planar graphs is equivalent to testing for the existence of suitable sweeping sequences. We now show that this can be done efficiently.
	
	\begin{lemma}
		\label{lem:algo-exhaustive-sequence}
		There is an algorithm that determines whether a given ordered level graph admits an exhaustive sweeping sequence using all of its edges.
		It can be implemented deterministically using
		$\mathcal O^*(2^{\binom{h}{2}}\prod_{j\in [h]}(2\lambda_j+1))
		\subseteq\mathcal O^*(2^{\binom{h}{2}}(2\lambda +1)^h)
		\subseteq \mathcal O^*(2^{\binom{h}{2}}(2n+1)^h)$
		time and space, or
		nondeterministically using polynomial time and $\mathcal O(h^2 + h\log I)$ space,
		where~$\lambda$ and~$h$ denote the width and height of the input graph, respectively,
		$n$ denotes the number of vertices,
		$\lambda_j$ denotes the width of level $j\in [h]$, and
		$I$ denotes the input size.
		Further, the deterministic version can report the sequence (if it exists).
	\end{lemma}

	\begin{proof}
		Let $\mathcal G=(G=(V,E),\gamma,(\prec_i)_{1\le i\le h})$ be an ordered level graph,
		let $s=(p_1,p_2,\dots,p_h)$ be a separation for~$\mathcal G$,
		and let~$U\subseteq E$ 
		be a subset of the edges that are joining pairs of (not necessarily
		consecutive) vertices on~$s$.
		We define $T[s,U]=\texttt{true}$ if there exists a nice sweeping sequence for~$\mathcal G$ that
		starts with $s_1=(0,0,\dots,0)$,
		ends with~$s$,
		and uses all edges in~$U$,
		as well as all edges in~$E$ incident to at least one vertex to the left of~$s$.
		Otherwise, $T[s,U]=\texttt{false}$.
		Additionally, we allow $U=\bot$, in which case $T[s,U]=\texttt{false}$.
		\cref{fig:olp-DP-example} illustrates several \texttt{true} table entries $T[s_i,U_i]$ along with the corresponding sweeping sequences $s_1,s_2,\dots,s_i$.
		Our goal is to determine $T[(|P_1|-1,|P_2|-1,\dots,|P_h|-1),\emptyset]$, which is \texttt{true} if and only if there exists an exhaustive sweeping sequence for~$\mathcal G$ using all edges in~$E$.
		
		\myproofparagraph{Recurrence relation.}
		We will determine the entries $T[s,U]$ by means of a dynamic programming recurrence.
		For the base case, we simply set $T[(0,0,\dots,0),\emptyset]=\texttt{true}$.
		Now assume that~$s\neq (0,0,\dots,0)$.
		For each index $1\le j\le h$ where $p_j\ge 1$, 
		we define a separation
		\[s_j'=(p_1,p_2,\dots,p_{j-1},p_j-1,p_{j+1},\dots,p_h).\]
		Further, we define an edge set $U_j'\subseteq E$ as follows:
		\begin{itemize}%
			\item If $p_j$ represents a vertex~$v$, then
			\begin{itemize}%
				\item $U_j'=\bot$ if $U$ contains edges joining $v$ with vertices on $s$ that are not its predecessor or successor on~$s$;
				\item otherwise $U_j'$ is created from~$U$ by removing the (up to two) edges incident to~$v$.
			\end{itemize}
			\item If $p_j$ represents a gap and, thus, $p_{j-1}$ represents a vertex~$v$,
			then
			\begin{itemize}%
				\item $U_j'=\bot$ if $v$ is adjacent to a vertex to the right of $s$;
				\item otherwise~$U_j'$ is created from~$U$ by
				\begin{itemize}%
					\item
					removing the edge between the predecessor and successor of~$v$ along~$s_j'$ (if it exists and is contained in~$U$); and
					\item adding all edges in~$E$ that join~$v$ with some vertex on~$s_j'$.
				\end{itemize}
			\end{itemize}
		\end{itemize}
		
		\begin{figure}[tb]
			\centering
			\includegraphics[page=2]{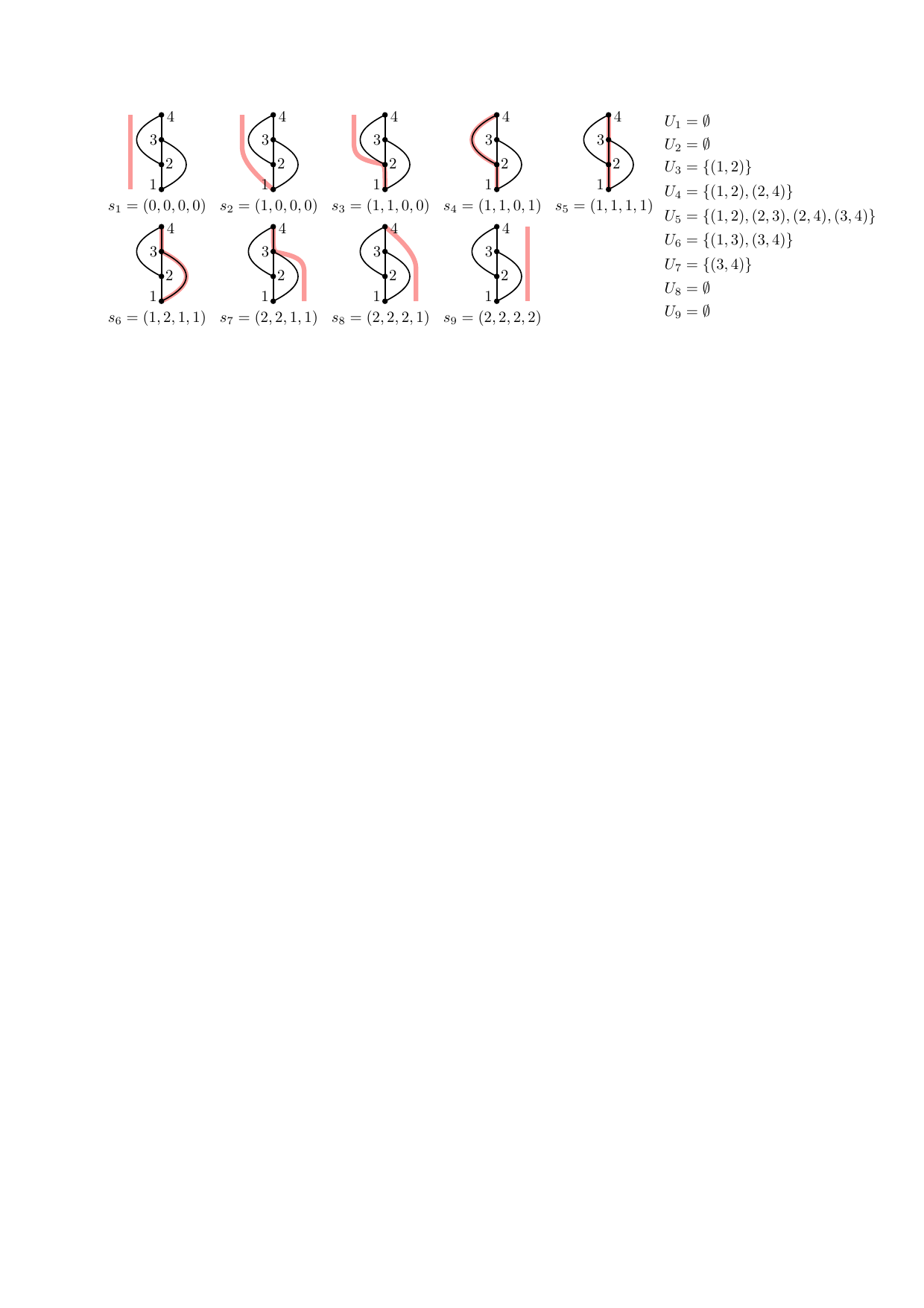}
			\caption{
				An exhaustive sweeping sequence using all edges of the depicted graph
				and the corresponding (cf.\ \cref{lem:sweep-algo}) ordered level planar drawing,
				as well as the corresponding sequence of \texttt{true} dynamic programming
				table entries $T[s,U]$ (cf.\
				\cref{lem:algo-exhaustive-sequence}).
				Note that in any ordered level planar drawing of the graph,
				exactly one of the edges $(2,4),(1,3)$ is located  to the left of the path $(1,2,3,4)$, while the other is located to the right.
				Similarly, in
				any exhaustive sweeping sequence containing separation $s_5$, exactly one of the edges $(2,4),(1,3)$ is used by a separation preceding $s_5$, while the other is used by a separation succeeding $s_5$.
				Hence, when iteratively building an exhaustive sweeping sequence, it is key to remember which edges between vertices of the current separation have already been used~-- this is exactly the purpose of the sets $U$.
				E.g., the fact that $(2,4)\in U_5$ corresponds to $(2,4)$
				being used before $s_5$ (in $s_4$), from which one can infer
				how to proceed.
			}%
			\label{fig:olp-DP-example}
		\end{figure}
		
		For illustrations, refer to \cref{fig:olp-DP-example} (where for each $i\in [9]$ and $s=s_i$ and $U=U_i$, we have $s_j'=s_{i-1}$ and $U_j'=U_{i-1}$ for some $j\in [4]$). The following technical claim describes our recurrence relation.
		
		\begin{claim}\label{lem:DP}
			$T[s,U]=\texttt{true}$ if and only if there exists an index $1\le j\le h$ where $p_j\ge 1$ and such that $T[s_j',U_j']=\texttt{true}$.
		\end{claim}

		\begin{claimproof}
			For the ``if''-direction, assume that there exists an index $1\le j\le h$ where $p_j\ge 1$ and such that $T[s_j',U_j']=\texttt{true}$.
			Then there exists a nice sweeping sequence
			$S'=(s_1,s_2,\dots,s_z=s_j')$ for~$\mathcal G$ that
			starts with $s_1=(0,0,\dots,0)$,
			ends with~$s_j'$,
			and uses all edges in~$U_j'$,
			as well as all edges in~$E$ incident to at least one vertex to the left of~$s_j'$.
			Let~$S$ denote the nice sweeping sequence obtained by appending~$s$ to~$S'$.
			We will show that~$S$ is a certificate for the fact that $T[s,U]$ is indeed  \texttt{true}.
			
			We start by showing that~$S$ uses all edges that are incident to at least one vertex to the left of~$s$.
			Let~$e\in E$ be an edge with an endpoint~$u$ to the left of~$s$.
			If at least one endpoint of~$e$ is also to the left of~$s_j'$, the definition of~$S'$ implies that ($S'$ and, hence) $S$ uses~$e$.
			So assume that no endpoint of~$e$ is to the left of~$s_j'$.
			It follows that $p_j$ represents a gap, $p_{j-1}$ represents the vertex~$u$, and the other endpoint of~$e$ is on $s_j'$ (it cannot lie to the right of ($s$ and, hence) $s_j'$ since $T[s_j',U_j']=\texttt{true}$ and, hence, $U_j'\neq \bot$).
			Consequently, $e\in U_j'$ (by definition of~$U_j'$) and, thus, ($S'$ and) $S$ uses~$e$, as desired.
			
			It remains to show that~$S$ uses all edges in~$U$.
			Obviously, ($s$ and) $S$ use all edges in~$U$ that join two vertices that are consecutive along~$s$.
			So let~$e\in U$ be an edge that joins two vertices that are nonconsecutive along~$s$.
			When creating~$U_j'$ from~$U$, only edges between consecutive vertices on~$s$ are removed.
			Hence, $e\in U_j'$ and, thus, ($S'$ and) $S$ uses~$e$, as desired.
			
			Altogether, it follows that~$S$ is indeed a certificate for the fact that $T[s,U]$ is \texttt{true}.
			
			For the ``only if''-direction, assume that $T[s,U]=\texttt{true}$.
			Then there exists a nice sweeping sequence
			$S=(s_1,s_2,\dots,s_z=s)$ for~$\mathcal G$ that
			starts with $s_1=(0,0,\dots,0)$,
			ends with~$s$,
			and uses all edges in~$U$,
			as well as all edges in~$E$ incident to at least one vertex to the left of~$s$.
			Since~$S$ is nice, there is an index $1\le j\le h$ such that
			\[s_{z-1}=(p_1,p_2,\dots,p_{j-1},p_j-1,p_{j+1},\dots,p_h),\]
			i.e., $s_{z-1}=s_j'$.
			Let~$S'=(s_1,s_2,\dots,s_{z-1}=s')$.
			We will show that~$S'$ is a certificate for the fact that $T[s_j',U_j']$ is indeed  \texttt{true}.
			
			We begin by showing that~$U_j'\neq \bot$.
			To this end, assume otherwise, i.e., $U_j'=\bot$.
			We distinguish two cases.
			First, assume that $p_j$ represents a vertex~$v$.
			This implies that $U$ contains an edge~$e$ joining $v$ with a vertex on $s$ that is not its predecessor or successor.
			The edge~$e$ is not used by~$s$.
			However, it is also not used by~$S'$ since~$v$ is to the right of~$s_j'$;
			a contradiction to the fact that~$S$ uses all edges in~$U$.
			Second, assume that~$p_j$ represents a gap and, thus, $p_{j-1}$ represents a vertex~$v$.
			This implies that~$v$ has an edge~$e$ to a vertex to the right of~$s$.
			Thus, the sequence $S$ does not use~$e$.
			However, the endpoint~$v$ of~$e$ is to the left of~$s$;
			a contradiction to the fact that~$S$ uses all edges with at least one endpoint to the left of~$s$.
			Altogether, this shows that indeed~$U_j'\neq \bot$.
			
			We now show that~$S'$ uses all edges that are incident to at least one vertex to the left of~$s_j'$.
			To this end, let~$u\in V$ be a vertex to the left of~$s_j'$.
			By definition of~$s_j'$, this vertex is also to the left of~$s$.
			By definition of~$S$, all edges incident to~$u$ are used by~$S$.
			However, none of these edges can be used by~$s$ (since $u$ is to the left of~$s$) and,
			hence, each of these edges is used by (some separation in)~$S'$, as desired.
			
			It remains to show that~$S'$ uses all edges in~$U_j'$.
			To this end, let~$e\in U_j'$.
			First, assume that $e\in U_j'\setminus U$.
			The definition of~$U_j'$ implies that $p_j$ represents a gap and $p_{j-1}$ represents a vertex~$v$ that is incident to~$e$.
			By definition of~$S$ and the fact that~$v$ is to the left of~$s$, it follows that~$S$ uses~$e$.
			In fact, since~$s$ cannot use~$e$, it follows that (some separation in)~$S'$ uses~$e$, as desired.
			It remains to consider the case $e\notin U_j'\setminus U$.
			Note that this case assumption implies that both endpoints of~$e$ belong to~$s$.
			We distinguish two subcases:
			first assume that the endpoints of~$e$ are consecutive vertices along~$s$.
			Combining this assumption with the fact that $e\in U_j'\cap U$ and the definition of~$U_j'$ implies that the endpoints of~$e$ are also consecutive vertices along~$s_j'$.
			Hence, ($s'$ and) $S'$ uses~$e$, as desired.
			For the other subcase, assume that the two endpoints of~$e$ are nonconsecutive along~$s$.
			In this case, the separation~$s$ cannot use~$e$ and, hence, the definition of~$S$ implies that (some separation in)~$S'$ uses~$e$, as desired.
			
			Altogether, we have established that~$S'$ is indeed a certificate for the fact that~$T[s_j',U_j']$ is \texttt{true}.
			This concludes the proof of the claim.
		\end{claimproof}
		
		In view of Claim~\ref{lem:DP}, we can now state our algorithms.
		
		\myproofparagraph{Deterministic construction algorithm.}
		Our dynamic programming table~$T$ has a total of
		$2^{\binom{h}{2}}\prod_{j\in [h]}(2\lambda_j+1)$
		entries.
		To compute the value of a table entry $T[s,U]$, we simply construct the tuples $(s_1',U_1'),\dots,(s_h',U_h')$ and then set $T[s,U] = \bigvee_{j \in [h]} T[s_j',U_j']$ (the set~$U_j'$ is only constructed and $T[s_j',U_j']$ only taken into account when $s_j'$ contains no negative entry).
		Disregarding the time spend for the recursive calls,
		the time to process a table entry is clearly polynomial.
		Hence, by employing memoization, we can fill the table using $\mathcal O^*(2^{\binom{h}{2}}\prod_{j\in [h]}(2\lambda_j+1))$ time and space.
		To construct the solution, we can employ the usual back-linking strategy.
		In particular, for each \texttt{true} entry~$T[s,U]$,
		it suffices to store an index~$j$ such that~$T[s_j',U_j']$ is \texttt{true},
		which does not change the asymptotic time/space requirements.
		\cref{fig:olp-DP-example} illustrates a sequence of table entries
		corresponding to a solution.
		
		\myproofparagraph{Nondeterministic decision algorithm.}
		We perform $2n$ steps.
		In each step, we non\-deter\-mi\-ni\-sti\-cally guess the next separation $s$
		and its set of prescribed edges~$U$,
		and check whether for some $j\in [h]$,
		the previous separation $s'$ is equal to~$s_j'$
		and the previous set of prescribed edges~$U'$ is equal to~$U_j'$
		(starting with $s'=(0,0,\dots,0)$ and $U'=\emptyset$).
		At each point in time, we only need to keep two separations and
		two edge sets in memory.
		A separation can be stored using~$\mathcal O(h\log I)$ space.
		Each edge set $U$ can be encoded by means of a $h\times h$ Boolean adjacency
		matrix of size~$\mathcal O(h^2)$.
		The verification in each step is easy to do using polynomial time and
		$\mathcal O(\log I)$ additional space
		(in particular, we only need to store $\mathcal O(1)$ pointers~/ indices).
	\end{proof}
	
	As a corollary of \cref{lem:sweep-algo,lem:existence-exhaustive-sequence,lem:algo-exhaustive-sequence},
	we obtain the algorithmic statements in \cref{thm:olp-XNLP-complete}:
	
	\begin{theorem}
		\OLPlong parameterized by
		the height is \XNLP.
		Moreover, there is a constructive \XP-time algorithm for \OLPlong parameterized by the height.
		\label{thm:olp-XNLP}
	\end{theorem}
	
	\section{\XNLP-Hardness of Ordered Level Planarity}
	\label{sec:olp-hardness}

	For the ease of presentation, 
	we first show that the \OLP problem is \Wh.
	To this end, we use a {\em parameterized reduction}\footnote{For an
		overview of this standard technique, refer to a standard
		textbook~\cite{CyganFKLMPPS2015}.}
	from \textsc{Multicolored Independent Set} (defined below) with $k$ colors
	to \textsc{Ordered Level Planarity} with $\Oh(k)$ levels.
	We then describe how to extend this reduction to obtain \XNLPhness of \OLP.
	We start by introducing the building blocks of our reductions.
	Recall that, normally, the level assignment $\gamma$ of a level graph surjectively maps to a set $[h]$ of consecutive numbers.
	In this section, to facilitate the description of our gadgets, we relax this condition by temporarily allowing level graphs in which not every level is occupied~-- nevertheless, the final outcome of our reduction will be an ordered level graph~$\G$ in the original sense.
	
	\paragraph{Basic building blocks of our reduction.}
	Our construction of~$\G$ is heavily based on two gadgets
	that we call \emph{plugs} and \emph{sockets} (a very basic version of these gadgets was already used earlier, in an NP-hardness proof for \CLP~\cite{DBLP:conf/soda/BrucknerR17}; here we introduce generalized versions).
	We define both in terms of the list of levels their vertices occupy.
	Their $\prec_i$ orderings are according to the indices of their vertices.
	The vertices are deliberately given in an unintuitive
	order to allow
	for the ordering in $\prec_i$ by indices
	and to make the (degenerate) cases behave nicely later.
	Let $\ell_1, \ell_2, \ell_3, \ell_4 \in [h]$
	such that $\ell_1 < \ell_2 < \ell_3 < \ell_4$.
	A (non-degenerate) \emph{$(\ell_1,\ell_2,\ell_3,\ell_4)$-plug}, see \cref{fig:plugs_sockets_a}, contains vertices $u_1$, $u_2$, $u_3$, $u_4$, $u_5$, and $u_6$,
	where $\gamma(u_5)=\ell_1$,
	$\gamma(u_3)=\gamma(u_6)=\ell_2$,
	$\gamma(u_2)=\gamma(u_4)=\ell_3$,
	and $\gamma(u_1)=\ell_4$.
	It contains the edges $u_1u_2$, $u_2u_3$, $u_3u_4$, $u_4u_6$, and $u_6u_5$;
	i.e., a~plug is a path that traverses its four levels in the order
	$\ell_4,\ell_3,\ell_2,\ell_3,\ell_2,\ell_1$.
	Similarly, an \emph{$(\ell_1,\ell_2,\ell_3,\ell_4)$-socket}, see \cref{fig:plugs_sockets_b}, consists of
	vertices $v_1, v_2, \dots, v_{10}$ such that $\gamma(v_3)=\ell_1$,
	vertices $v_2,v_4,v_5,v_9$ occupy level $\ell_2$,
	vertices $v_1,v_6,v_7,v_{10}$ occupy level $\ell_3$,
	and $\gamma(v_8)=\ell_4$.
	It contains the edges $v_1v_2$, $v_2v_3$, $v_3v_5$, $v_5v_7$, $v_4v_6$, $v_6v_8$, $v_8v_{10}$, and $v_{10}v_9$.
	Observe that a socket consists of two disconnected paths
	whose vertices interleave on levels $\ell_2$ and $\ell_3$.
	Let vertices~$v_1$ and~$v_9$ of plugs and
	vertices~$u_1$ and~$u_5$ of sockets be \emph{connecting vertices}.
	Connecting vertices of sockets will be identified with other vertices of the construction
	and connecting vertices of some plugs may be connected to other plugs via additional edges.
	
	\begin{figure}[tb]
		\centering
		\begin{subfigure}[t]{.18\textwidth}
			\centering
			\includegraphics[page=2]{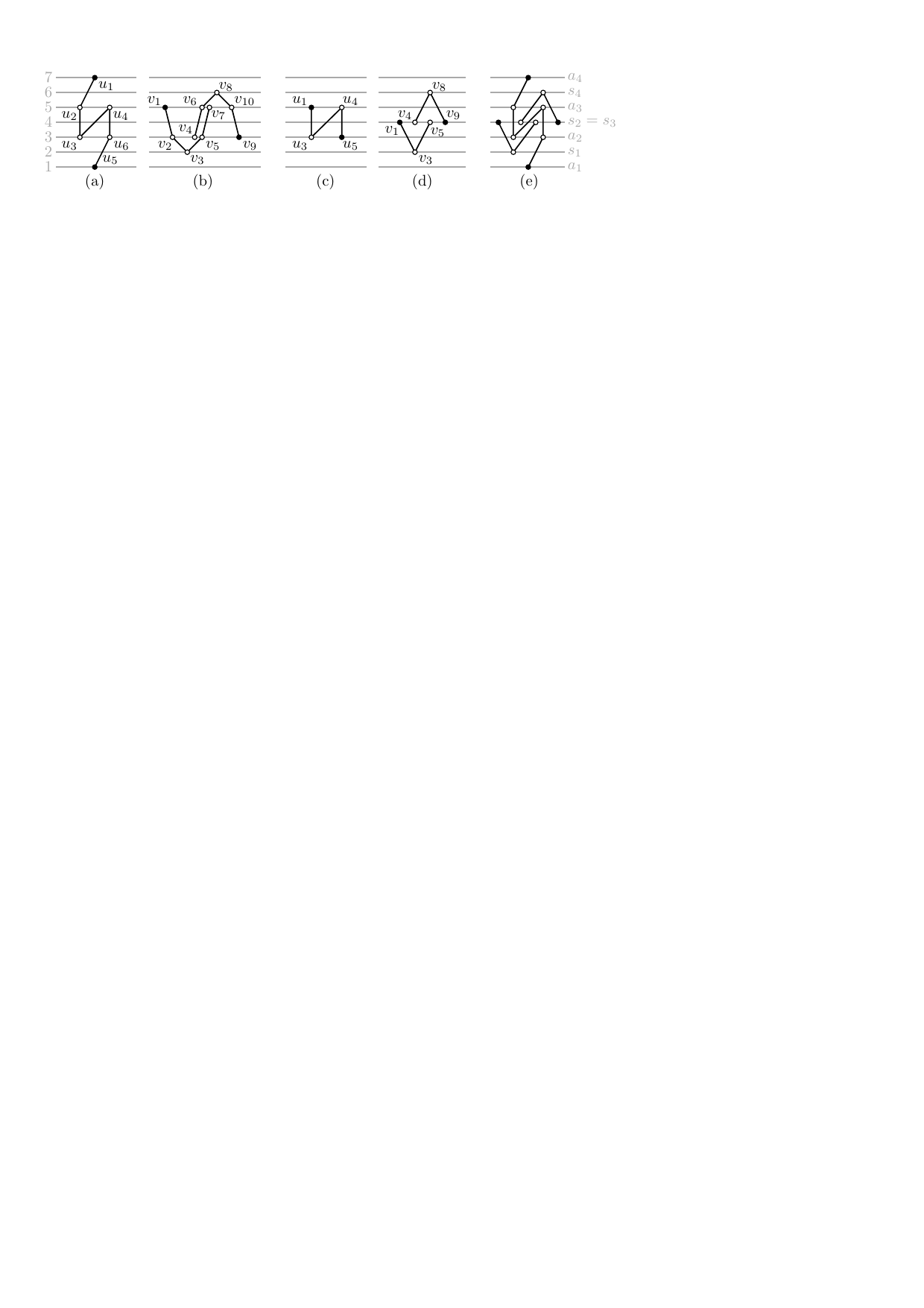}
			\subcaption{\centering}
			\label{fig:plugs_sockets_a}
		\end{subfigure}
		\hfill
		\begin{subfigure}[t]{.2\textwidth}
			\centering
			\includegraphics[page=3]{plugs_sockets}
			\subcaption{\centering}
			\label{fig:plugs_sockets_b}
		\end{subfigure}
		\hfill
		\begin{subfigure}[t]{.17\textwidth}
			\centering
			\includegraphics[page=4]{plugs_sockets}
			\subcaption{\centering}
			\label{fig:plugs_sockets_c}
		\end{subfigure}
		\hfill
		\begin{subfigure}[t]{.17\textwidth}
			\centering
			\includegraphics[page=5]{plugs_sockets}
			\subcaption{\centering}
			\label{fig:plugs_sockets_d}
		\end{subfigure}
		\hfill
		\begin{subfigure}[t]{.23\textwidth}
			\centering
			\includegraphics[page=6]{plugs_sockets}
			\subcaption{\centering~~~~~~~~~~}
			\label{fig:plugs_sockets_e}
		\end{subfigure}
		\caption{Plugs and sockets; (a) a $(1,3,5,7)$-plug,
			(b)~a~$(2,3,5,6)$-socket,
			(c)~a~degenerate $(3,3,5,5)$-plug,
			(d)~a~degenerate $(2,4,4,6)$-socket,
			(e) a $(1,3,5,7)$-plug that is linked
			to a degenerate $(2,4,4,6)$-socket.
			Connecting vertices are filled in black.
			Note how in degenerate gadgets ((c) and (d)), the edges
			and vertices of the repeated levels are contracted.
		}
		\label{fig:plugs_sockets}
	\end{figure}
	
	Now we lift the strict inequality restriction on the levels of our gadgets.
	For plugs we only require $\ell_1 \le \ell_2 < \ell_3 \le \ell_4$ and for sockets we require $\ell_1 < \ell_2 \le \ell_3 < \ell_4$
	and we call a plug or a socket \emph{degenerate} if it has at least one pair of repeated levels,
	i.e., for some $i \in [3]$, $\ell_i=\ell_{i+1}$.
	We create the degenerate plugs and sockets by contracting the edges
	between vertices of the repeated levels while keeping the vertex with the lower index;
	see \cref{fig:plugs_sockets_c,fig:plugs_sockets_d}.
	
	Let $P$ be a plug, and let $\mathrm{com}(P)$ be the connected component of~$P$ in $G$.
	A plug \emph{fits} into a socket if the gadgets could be ``weaved'' as illustrated in \cref{fig:plugs_sockets_e}.
	Formally, we say an $(a_1,a_2,a_3,a_4)$-plug~$P$ fits into a
	$(s_1,s_2,s_3,s_4)$-socket~$S$ when $\min_{v \in \mathrm{com}(P)}\{\gamma(v)\} \le s_1$, $\max_{v \in \mathrm{com}(P)}\{\gamma(v)\} \ge s_4$,
	and $s_1 < a_2 < s_2 \le s_3 < a_3 < s_4$.
	In an ordered level planar drawing of~\G, we say a plug~$P$ \emph{links} to
	an $(s_1,s_2,s_3,s_4)$-socket~$S$ when $P$ fits into $S$,
	and $P$ is drawn between the connecting vertices of~$S$
	(that is, the edges of $P$ traversing level~$s_3$ are to the right of~$v_1$ and
	the edges of $P$ traversing level~$s_2$ are to the left of~$v_9$);
	see \cref{fig:plugs_sockets_e}.
	
	A defining feature of our constructions is a division into
	vertical strips, which is accom\-modated by the following notion.
	Given a set $L$ of levels, a \emph{wall} of~$L$ is a path that starts in a
	vertex on the bottommost level in~$L$, goes through a vertex on each
	intermediate level of~$L$, and ends at a vertex on the topmost level of~$L$.
	Note that each gadget type essentially has a unique drawing in the sense that it corresponds to an ordered level planar graph for which all of its ordered level planar drawings have the same level planar embedding.
	We use \cref{lem:two_plugs_one_socket} to design specific
	plugs that can or cannot link to specific
	sockets in the same ordered level planar drawing.
	
	\begin{lemma}\label{lem:two_plugs_one_socket}
		Consider an ordered level graph $\G$ that contains
		an $(a_1,a_2,a_3,a_4)$-plug $A$, a $(b_1,b_2,b_3,b_4)$-plug $B$, and an $(s_1,s_2,s_3,s_4)$-socket $S$
		that occupy three disjoint sets of levels.
		There is an ordered level planar drawing of the subgraph of $\G$ spanned by $A$, $B$, and $S$ where
		$A$ and $B$ link to $S$ if and only if both $A$ and $B$ fit
		into $S$ and $a_2 < b_2$ and $a_3 < b_3$, or, both vice versa,
		$a_2 > b_2$ and $a_3 > b_3$.
	\end{lemma}
	
	\begin{proof}
		To distinguish the vertices of the plugs $A$ and $B$,
		we use the name of the plug as superscript, so, e.g., we refer to $u_3^A$ and $u_3^B$.
		As we have only one socket,
		we use just $v_1, \dots, v_{10}$ for the vertices of~$S$.
		For the names of the vertices, we assume
		that none of the involved gadgets is degenerate.
		If some of the gadgets are degenerate,
		adjusting the vertex names is straight-forward.
		
		\begin{figure}[tb]
			\centering
			\begin{subfigure}[t]{.45\textwidth}
				\centering
				\includegraphics[page=1]{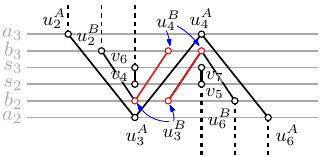}
				\subcaption{If the order of the layers follows
					the pattern $a_2, b_2, b_3, a_3$,
					then there is no crossing-free drawing
					that has both $A$ and $B$ linked to~$S$.}
				\label{fig:two-plugs-one-socket-fail}
			\end{subfigure}
			\hfill
			\begin{subfigure}[t]{.5\textwidth}
				\centering \includegraphics[page=2]{two-plugs-one-socket}
				\subcaption{If the order of the layers follows
					the pattern $a_2, b_2, a_3, b_3$,
					then there is a crossing-free drawing
					that has both $A$ and $B$ linked to~$S$.}
				\label{fig:two-plugs-one-socket-success}
			\end{subfigure}
			\caption{Sketches for the proof of
				\cref{lem:two_plugs_one_socket}.
				The edges of the socket~$S$ are bold.}
		\end{figure}
		
		For the ``only if''-direction.
		Let an ordered level planar drawing where both $A$ and $B$ link to~$S$ be given.
		By the definition of linking, $A$ and $B$ fit into~$S$
		and, hence, $a_2, b_2 < s_2 \le s_3 < a_3, b_3$.
		Suppose for a contradiction that $a_2 < b_2$ and $a_3 > b_3$;
		see \cref{fig:two-plugs-one-socket-fail} for an illustration.
		(The case $a_2 > b_2$ and $a_3 < b_3$ is identical up to
		renaming $A$ and $B$.)
		By the definition of linking and due to the two (uncrossed) paths of $S$,
		the edges $u_2^A u_3^A$ and $u_2^B u_3^B$
		are to the left of $v_4$ and $v_6$,
		while the edges $u_3^A u_4^A$ and $u_3^B u_4^B$
		are to the right of $v_4$ and $v_6$.
		Given $a_2 < b_2$, it follows that this is possible if and only if
		$u_2^B$, $u_3^B$, and $u_4^B$ lie (horizontally) between
		the edges $u_2^A u_3^A$ and $u_3^A u_4^A$.
		Symmetrically,
		the edges $u_3^A u_4^A$ and $u_3^B u_4^B$ are to the left of $v_7$ and $v_5$,
		while the edges $u_4^A u_6^A$ and $u_4^B u_6^B$
		are to the right of $v_7$ and $v_5$.
		Given $a_3 > b_3$, it follows that this is possible if and only if
		$u_3^B$, $u_4^B$, and $u_6^B$ lie between
		the edges $u_3^A u_4^A$ and $u_4^A u_6^A$.
		Therefore, $u_3^B$ and $u_4^B$ lie at the
		same time to the left and to the right
		of $u_3^A u_4^A$, which is a contradiction.
		
		For the ``if''-direction.
		Assume that $s_1 < a_2 < b_2 < s_2 < s_3 < a_3 < b_3 < s_4$;
		see \cref{fig:two-plugs-one-socket-success} for an illustration.
		The case $s_1 < b_2 < a_2 < s_2 < s_3 < b_3 < a_3 < s_4$
		is identical up to renaming $A$ and $B$.
		We can get an ordered level planar drawing where
		both plugs $A$ and $B$ are to the right of~$v_1$,
		and $A$ and $B$ are to the left of~$v_9$ if we place
		the edges occurring between the involved layers
		in the following order from left to right:
		first $v_1 v_2$ and $v_2 v_3$,
		then $u_2^A u_3^A$, then $u_2^B u_3^B$,
		then $v_4 v_6$ and $v_6 v_8$,
		then $u_3^B u_4^B$, then $u_3^A u_4^A$,
		then $v_3 u_5$ and $v_5 v_7$,
		then $u_4^A u_6^A$, then $u_4^B u_6^B$,
		and finally $v_8 u_{10}$ and $v_{10} v_9$.
	\end{proof}
	
	In an (ordered) level planar drawing of~\G,
	we define spatial relations between two gadgets in the following way.
	For a gadget~$X$,
	we say a level $\ell$ (or a gadget~$Y$) is \emph{above} $X$
	if the level $\ell$ (each level hosting vertices of~$Y$) has a larger index
	than any level hosting a vertex of $X$.
	Symmetrically, we define \emph{below} for the levels with a smaller index.
	We say a gadget~$Y$ is \emph{left} (resp.\ \emph{right}) of~$X$
	if on every level all vertices and edges of $Y$ lie to the left (resp. right) of all the vertices of~$X$.
	
	We \emph{connect} plugs and sockets with other gadgets using their connecting vertices.
	The spatial relation of two gadgets is usually clear, so we shortcut
	the description of how the connection is formed in the following way.
	Connecting a socket $S$ to a wall $W$ is done by identifying a vertex $x$
	of $S$ with the vertex of $W$ on level $\gamma(x)$.
	The $x=v_1$ if $W$ is left of $S$ and $x=v_5$ if $W$ is right of $S$.
	Connecting a plug $P$ to a gadget $Y$ means adding an edge from
	a connecting vertex $z$ of $P$ to $Y$.
	We $z=u_1$ if $Y$ is above $P$ and $z=u_5$ if $Y$ is below $P$.
	The other endpoint of the edge depends on $Y$.
	In our case, $Y$ is either a plug or a single vertex.
	In case $Y$ is a plug its connecting vertex
	is determined in the same way we did it for $P$.
	Hence, the connection of $P$ to $Y$ results
	in an edge between $u_1$ of $P$ and $u_5$ of~$Y$, or vice-versa.
	See connecting vertices of plugs and sockets in \cref{fig:plugs_sockets}.
	
	An ordered level planar graph is called \emph{rigid} if all of its
	ordered level planar drawings have a common level planar embedding.
	Recall that $\G$ is proper if $\gamma (v)=\gamma (u)+1$ for every edge~$(u,v)\in E(G)$.
	
	\begin{observation}\label{obs:proper_is_rigid}
		Every proper ordered level planar graph is rigid.
	\end{observation}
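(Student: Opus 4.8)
The plan is to exploit the fact that in a proper level graph every edge connects two \emph{consecutive} levels, so the curve realizing it never meets a level line in its interior. I would start by recalling the relevant definitions: a level planar embedding of a drawing records, for each $i\in[h]$, the left-to-right sequence of vertices and edge-crossings along the line $L_i$, and rigidity means that all ordered level planar drawings of the graph share one such embedding. The key structural step is then: if $(u,v)\in E(G)$ with the graph proper, then $\gamma(v)=\gamma(u)+1$, and since the drawing is upward (the y-coordinate strictly increases when traversing the edge from $u$ to $v$), every interior point of the edge has y-coordinate strictly between $\gamma(u)$ and $\gamma(v)=\gamma(u)+1$. Consequently the edge meets $L_{\gamma(u)}$ and $L_{\gamma(v)}$ only at its endpoints and meets no other line $L_i$ at all.

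From this I would conclude that for every $i\in[h]$ the line $L_i$ is intersected exactly by the vertices of $V_i$ (no edges), so the sequence recorded on $L_i$ by a level planar embedding is simply the left-to-right order of $V_i$. Now I invoke the definition of an \emph{ordered} level planar drawing: this left-to-right order must equal the prescribed total order $\prec_i$. Since the family $(\prec_i)_{i\in[h]}$ is part of the input and hence fixed, the sequence on each $L_i$ is the same in every ordered level planar drawing of the graph. Therefore all ordered level planar drawings of a proper ordered level planar graph induce the identical level planar embedding, which is exactly rigidity.

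I do not expect a genuine obstacle; the argument is essentially definition-unwinding. The one point requiring a little care is making the ``no edge crosses a level line'' claim rigorous, which is where I use both hypotheses simultaneously — properness (the edge spans only the band between two adjacent lines) together with the strictness of y-monotonicity of edges in a level planar drawing (so the endpoints are the only points of the edge lying on those two lines). Isolated vertices, disconnectedness, or non-planarity of the underlying abstract structure play no role, so no further case analysis is needed.
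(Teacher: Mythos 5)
Your proof is correct, and since the paper states this as an \emph{Observation} without any accompanying argument, your write-up simply spells out the definitional reasoning the authors leave implicit. The two hypotheses you isolate are exactly the right ones: properness ensures every edge lives strictly inside the horizontal band between two consecutive level lines, and strict y-monotonicity of edges in an upward drawing ensures the edge meets those two bounding lines only at its endpoints (and meets no other $L_i$). Hence for each $i$ the recorded sequence on $L_i$ consists solely of the vertices of $V_i$, and the ordered-level-planarity requirement pins that sequence to the prescribed total order $\prec_i$, which is part of the input and therefore drawing-independent. This forces a unique level planar embedding, i.e., rigidity. No gap; nothing to add beyond what you wrote.
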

	
	\begin{observation}\label{obs:proper_gadgets}
		A $(\ell_1,\ell_2,\ell_3,\ell_4)$-plug or a $(\ell_1,\ell_2,\ell_3,\ell_4)$-socket correspond to a proper ordered level graph if and only if $\ell_{i+1} \in \{\ell_i,\ell_i+1\}$ for every $i\in [3]$.
	\end{observation}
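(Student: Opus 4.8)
The plan is to prove both directions by directly tracking which pairs of levels are joined by an edge of the (possibly contracted) gadget. First I would record a structural fact that is immediate from the definitions: in the non-degenerate $(\ell_1,\ell_2,\ell_3,\ell_4)$-plug every edge joins a vertex on level~$\ell_i$ with a vertex on level~$\ell_{i+1}$ for some $i\in[3]$; this is seen by reading off the level sequence $\ell_4,\ell_3,\ell_2,\ell_3,\ell_2,\ell_1$ along the plug path $u_1,u_2,u_3,u_4,u_6,u_5$, in which every consecutive pair of entries has consecutive indices. The same statement holds for the non-degenerate socket, by inspecting its two paths $v_1,v_2,v_3,v_5,v_7$ and $v_4,v_6,v_8,v_{10},v_9$ with level sequences $\ell_3,\ell_2,\ell_1,\ell_2,\ell_3$ and $\ell_2,\ell_3,\ell_4,\ell_3,\ell_2$. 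I would also note the analogous fact about degeneracy: forming a degenerate gadget contracts exactly those edges whose two level-indices $i,i+1$ satisfy $\ell_i=\ell_{i+1}$, and contraction never changes the level of any surviving vertex, since it only identifies vertices that already share a level.

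For the ``if''-direction, assume $\ell_{i+1}\in\{\ell_i,\ell_i+1\}$ for every $i\in[3]$. Take any edge $e$ of the resulting (non-degenerate or contracted) gadget; it originates from an edge of the non-degenerate gadget that was not contracted, so by the structural fact its endpoints lie on levels $\ell_i$ and $\ell_{i+1}$ for some $i\in[3]$ (contraction does not alter vertex levels). If $\ell_i=\ell_{i+1}$, then $e$ would have been contracted, a contradiction; hence $\ell_{i+1}=\ell_i+1$ and the endpoints of $e$ lie on consecutive levels. As this holds for every edge, the gadget is proper.

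For the ``only if''-direction I would argue the contrapositive. Suppose $\ell_{i+1}\notin\{\ell_i,\ell_i+1\}$ for some $i\in[3]$. The admissible level ranges are nondecreasing ($\ell_1\le\ell_2\le\ell_3\le\ell_4$ in both cases), so $\ell_{i+1}\ge\ell_i+2$; in particular $\ell_i<\ell_{i+1}$, so no contraction merges the layers $\ell_i$ and $\ell_{i+1}$. The structural fact yields an edge $e$ of the non-degenerate gadget joining a vertex on level $\ell_i$ with one on level $\ell_{i+1}$. Since contraction only merges vertices on a common level, it cannot merge the two endpoints of $e$ (they sit on the distinct levels $\ell_i\ne\ell_{i+1}$) and it does not move either endpoint off its level; hence $e$ survives and still spans levels differing by $\ell_{i+1}-\ell_i\ge 2$. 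Therefore the gadget has an edge between non-consecutive levels and is not proper.

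The only mildly delicate step is this last one: one must be sure that in the contrapositive the ``bad'' edge $e$ is neither deleted nor rerouted by the contraction, and that no contraction of a degenerate layer elsewhere accidentally changes its endpoints' levels. This is precisely what the structural fact about contractions above settles; the only remaining possibility, that a contraction creates parallel edges at some other place, is harmless, since the mere existence of a single edge between non-consecutive levels already witnesses non-properness.
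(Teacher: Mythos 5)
Your proof is correct, and since the paper states this as an unproved observation it takes the only natural route: a direct verification from the definitions. The structural fact you isolate (every edge of a non-degenerate plug or socket joins levels $\ell_i$ and $\ell_{i+1}$ for some $i\in[3]$, as seen by reading off the two level sequences you give) is exactly what makes the equivalence immediate, and your handling of contraction — that it only identifies vertices already on a common level, never moves a surviving vertex to a different level, and removes precisely those edges whose level-pair collapses — is the right way to make the ``obvious'' step airtight. The only cosmetic point is that you need not treat $\ell_1\le\ell_2\le\ell_3\le\ell_4$ as a separate assumption: it is built into the definitions of plug ($\ell_1\le\ell_2<\ell_3\le\ell_4$) and socket ($\ell_1<\ell_2\le\ell_3<\ell_4$), which even tells you that for a plug only $i\in\{1,3\}$ and for a socket only $i=2$ can ever trigger a contraction; but your more general phrasing is harmless.
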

	
	By \emph{inserting a level $i$} into an ordered level graph~\G, we
	mean that the number $h$ of levels of~\G increases by one and,
	for each vertex $v$ of~\G with $\gamma(v) \ge i$, its level
	$\gamma(v)$ increases by one.
	
	\begin{observation}\label{obs:subdivisions}
		A rigid ordered level graph remains rigid even if we insert new levels.
		Moreover, a~rigid graph $\G$ remains rigid even if we subdivide
		an edge~$uv$ by a new vertex~$w$, i.e.,
		if we replace the directed edge $uv$ by the directed edges $uw$ and $wv$,
		and set the x-coordinate of $w$ according to where $uv$ crossed $L_{\gamma(w)}$
		in the unique drawing of $\G$.
	\end{observation}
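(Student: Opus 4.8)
\myproofparagraph{Proof plan.} For both operations the plan is to set up a natural bijection between the ordered level planar drawings of~\G and those of the modified graph~$\G'$, and then to show that under this bijection the level planar embedding of a drawing of~$\G'$ is entirely determined by the level planar embedding of the corresponding drawing of~\G. Since \G is rigid, all of its drawings share a single embedding, so the same then holds for~$\G'$, which is therefore rigid. (If \G admits no ordered level planar drawing at all, both statements hold vacuously, since any drawing of~$\G'$ restricts or contracts to a drawing of~\G.)

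\myproofparagraph{Inserting a level.} Let $\G'$ arise from~\G by inserting a new --- necessarily empty --- level~$i$. Stretching an ordered level planar drawing of~\G by inserting a horizontal strip at height~$i$ and shifting the part above it upward by one yields an ordered level planar drawing of~$\G'$, and collapsing that empty strip reverses the operation; hence the ordered level planar drawings of~\G and of~$\G'$ are in bijection, and corresponding drawings have identical left-to-right sequences on every level other than~$i$ (after the index shift). On level~$i$ the sequence consists only of the edges that, in~\G, cross the open horizontal strip between levels~$i-1$ and~$i$; since that strip contains no vertex, the horizontal order of these edges is constant across it, and therefore equals their order immediately above level~$i-1$, which is part of the unique embedding of~\G. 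So every drawing of~$\G'$ realizes one and the same embedding, determined by that of~\G, and $\G'$ is rigid.

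\myproofparagraph{Subdividing an edge.} Write $m := \gamma(w)$, so that $\gamma(u) < m < \gamma(v)$, and let $\mathcal E$ be the unique embedding of~\G; by rigidity the position at which the edge~$uv$ crosses~$L_m$ is the same in every drawing of~\G, and by construction the rank assigned to~$w$ in~$\prec_m$ is exactly that position. Given an arbitrary ordered level planar drawing~$\Gamma'$ of~$\G'$, I would replace the two curves~$uw$ and~$wv$ by their concatenation; as $\gamma(u) < m < \gamma(v)$, this concatenation is $y$-monotone, passes through no other vertex, and deleting~$w$ creates no crossing, so the result is an ordered level planar drawing~$\Gamma$ of~\G in which the ranks of all vertices of~\G are unchanged, whence $\Gamma$ realizes~$\mathcal E$ by rigidity. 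Conversely, expanding the edge~$uv$ in the unique drawing of~\G by placing~$w$ on it at height~$m$ produces an ordered level planar drawing of~$\G'$, so $\G'$ is indeed ordered level planar. Finally, the embedding of~$\Gamma'$ coincides with~$\mathcal E$ everywhere, except that on level~$m$ the crossing of~$uv$ is occupied by~$w$ at its prescribed rank (which, as noted, is the position where $uv$ crosses~$L_m$ in~$\mathcal E$), and below and above level~$m$ the edge~$uv$ is relabelled~$uw$ and~$wv$, respectively; since all of this is dictated by~$\mathcal E$ together with the fixed order~$\prec_m$, every drawing of~$\G'$ realizes one and the same embedding, so $\G'$ is rigid.

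\myproofparagraph{Main obstacle.} The one point that requires genuine care is checking that the concatenation procedure really turns~$\Gamma'$ into a \emph{valid} ordered level planar drawing of~\G (i.e.\ that the concatenated curve is $y$-monotone and that all prescribed ranks are preserved), and that the rank picked for~$w$ is forced to agree with the crossing position of~$uv$ in~$\mathcal E$. Once this matching of positions is established, rigidity transfers with essentially no further work, and the analogous (and easier) bookkeeping for inserted empty levels is routine.
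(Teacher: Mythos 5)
The paper states this as an Observation and gives no proof, treating it as self-evident. Your write-up is a correct and natural filling-in of the details: the stretch/collapse bijection handles level insertion (with the right remark that the order of edges crossing the new empty strip is constant across it, hence determined by the embedding of $\G$), and the expand/contract bijection handles subdivision, where the crucial point — that $w$'s prescribed rank on $L_m$ agrees with the crossing position of $uv$ in the unique embedding $\mathcal E$ — holds by construction and then forces every drawing of $\G'$ to realize the same embedding.
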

	
	From Observations~\ref{obs:proper_is_rigid},
	\ref{obs:proper_gadgets}, and \ref{obs:subdivisions},
	it follows that plugs, sockets, and walls are all rigid
	since we can start with a proper ordered level graph
	and then insert additional levels.
	
	We are now done with the auxiliary definitions.
	We first give the full proof of $W[1]$-hardness of \OLP.
	Then we extend the construction to show XNLP-hardness of \OLP.
	Last in this section, we show that the XNLP-hardness result
	also works for connected ordered level graphs.
	
	\paragraph{\Whness.}
	\textsc{Multicolored Independent Set} (MCIS)
	is a well-known \Wh problem~\cite{FellowsHRV09,Pietrzak03}.
	In this problem, we are given a graph $H$, an integer $k$
	(the parameter), and a $k$-partition $C_1,C_2,\dots,C_k$ of $V(H)$,
	and the task is to decide whether $H$ has an independent set $X \subseteq V(H)$ that contains,
	for every $j \in [k]$, exactly one vertex of~$C_j$.
	We may assume without loss of generality
	that in $H$, there is no edge whose endpoints have the same color~$C_j$.
	
	\begin{lemma}
		\label{thm:olp_w_hard}
		\OLPlong is \Wh with respect to the height.
	\end{lemma}
	
	\begin{proof}
		Let $\mathcal S$ be an instance of MCIS, that is, a graph $H$, an
		integer $k$, and a $k$-partition $C_1,C_2,\dots,C_k$ of $V(H)$.
		We prove the theorem by constructing an \OLP instance $\G$, with $h \in \mathcal O(k)$ levels, that has an ordered level planar drawing if and only if $\mathcal S$ contains a multicolored independent set of size $k$.
		
		Let $n = |V(H)|$ and $m = |E(H)|$.
		For a color $j \in [k]$, let $m_j=|\{uv \colon
		c(u)=j \lor c(v)=j\}|$, i.e., $m_j$ is the number of edges incident to
		a vertex with color $j$.
		Note that $\sum_{j=1}^k m_j = 2m$ by double counting
		since we assume that the colors of the endpoints of
		every edge are distinct.
		For this proof, we also assume that,
		without loss of generality, the color classes have
		the same size $n'$, i.e., $n' = |C_j| = n/k$ for each $j \in [k]$.

		\begin{figure}[tb]
			\centering
			\includegraphics[page=2]{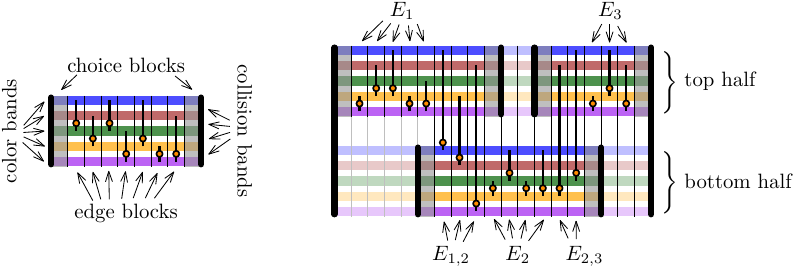}
			\caption{
				Example of our parameterized reduction from
				\textsc{Multicolored Independent Set} (MCIS) to \OLPlong.
				On the left side, there is an instance of MCIS with $k=5$ colors.
				On the right side, there is the schematized grid structure
				of the ordered level graph constructed from the MCIS instance.
				The orange disks in the edge blocks represent the places where
				the collision sockets are placed.
				Here, the solution $\{v_2, v_3, v_6, v_7, v_8\}$ for the MCIS instances is found.
			}%
			\label{fig:w1h_example}
		\end{figure}
		
		\myproofparagraph{Bands, columns, blocks, and cells.}
		We divide the construction vertically and horizontally in the following way, see \cref{fig:w1h_example}.
		The levels are partitioned by their purpose into five types -- \emph{rigid}, \emph{color}, \emph{high}, \emph{pass-through}, and \emph{collision} levels.
		We use (horizontal) \emph{bands} to denote consecutive sets of levels
		that belong together conceptually;
		we use two band types called \emph{color} bands and \emph{collision} bands.
		We consider any vertices that lie on levels of a band
		and any edges between such vertices to be part of the band.
		Note that bands contain a mixture of level types.
		We typically use relative numbering to speak about the levels
		of a band, e.g., the lowest level of a band is numbered $1$.
		The vertical slices of the construction are delimited by the walls.
		
		We begin the construction with $7 \cdot k + 3 \cdot (k-1) = 10k-3$ rigid levels.
		We group these rigid levels into $k$ color bands that alternate
		with $k-1$ collision bands:
		Each color band contains $7$ rigid levels and each collision band contains $3$ rigid levels.
		Let the \emph{color-$j$ band} be the $j$-th color band from the bottom and let it
		be associated with color $j$.
		Similarly, the \emph{collision-$j$ band} denotes the $j$-th collision band from the bottom, i.e., the collision band between the color-$j$ and the color-$(j+1)$ band.
		We later enlarge the bands by inserting more levels using \Cref{obs:subdivisions}.
		
		We add $1 + 2 \cdot (n'-1) + m \cdot (2 n'-1) = (m+1)\cdot
		(2n'-1)$ walls that divide each level created so far into $m \cdot (2n'-1)+2n'-2$ line segments and two rays.
		We order the line segments naturally by $\prec_\ell$ for each level $\ell$.
		For $i \in [m \cdot (2n'-1)+2n'-2]$, we let the $i$-th
		\emph{column} be the union of the $i$-th line segments
		over all levels, vertices placed on those line segments, and edges between those vertices;
		we will not use the rays.
		The first and the last $(n'-1)$ columns on the left/right
		are called left/right \emph{choice block}.
		We partition the remaining $m \cdot (2 n'-1)$ columns into $m$
		\emph{edge blocks} each with $2 n'-1$ consecutive columns.
		Every edge block is associated with one edge of $E(H)$.
		Let the \emph{edge-$uv$ block} be the edge block
		associated with the edge $uv \in E(H)$.
		We typically use relative numbering to denote columns in a block;
		so the leftmost column would be the first one etc.
		
		We call the intersection of a band and a column a \emph{cell}.
		So far, the construction contains many empty cells.
		The aim is to fill each color band cells with plugs and sockets of various types to create a ``shifting'' mechanism that represents a selection of a vertex of that color.
		The collision bands come into play later to prevent a selection of neighboring vertices.
		
		\myproofparagraph{Placement of sockets.}
		Now we place the sockets into the cells.
		Let us call the two walls on the boundary of a cell its left and right wall.
		By \emph{adding} a socket to a cell we mean adding the socket to
		$\mathcal{G}$ and identifying its right connecting vertex with the
		right wall vertex on the same level and doing the same for the left one.
		See \cref{fig:w_hard_gadgets_alone} to see sockets placed in cells.
		Recall that color bands contain $7$ rigid levels.
		We add a $(2,4,4,6)$-socket, see \cref{fig:plugs_sockets_d}, to each cell in all color bands within the
		choice blocks (note that we use relative level indices).
		We call these \emph{choice sockets}.
		For every $uv\in E(H)$ we add $(3,4,4,5)$-sockets, called \emph{color sockets}, to all cells that are in the edge-$uv$ block and in either color-$c(u)$ band or the color-$c(v)$ band.
		To all cells in the color-$i$ bands in the edge-$uv$ block where $i$ is different from $c(u)$ and $c(v)$,
		we add $(1,2,6,7)$-sockets, called \emph{pass-through sockets}.
		Recall that collision bands contain three rigid levels.
		We do not add any sockets to the collision band cells that are in the choice blocks.
		In each edge block, we add a single socket as follows.
		For $uv\in E(H)$, let $q_{uv} = \min\{c(u),c(v)\}$;
		note that $q_{uv}<k$.
		Into the collision-$q_{uv}$ band of edge-$uv$ block, we add to a $n'$-th cell
		a $(1,2,2,3)$-socket, which we call \emph{collision socket}.
		See construction overview in \cref{fig:w1h_example}.
		
		\begin{figure}[tb]
			\centering
			\includegraphics[page=1]{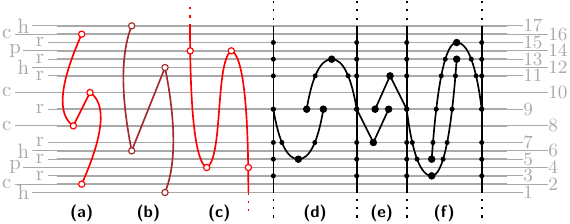}
			\caption{
				Plugs and sockets drawn on color band levels.
				Sockets are placed in cells.
				(a) color plug,
				(b) high plug,
				(c) pass-through plug,
				(d) choice socket,
				(e) color socket,
				(f) pass-through socket.
			}%
			\label{fig:w_hard_gadgets_alone}
		\end{figure}
		
		\myproofparagraph{Refinement.}
		Next, we make the gadgets proper by finding edges that
		span more than one level and subdividing them by vertices on all
		the intermediate levels, setting the x-coordinates of the new vertices according to the unique drawings of the gadgets.
		By \Cref{obs:proper_is_rigid} the construction is rigid and by \Cref{obs:subdivisions} it will remain rigid even if we insert further levels.
		This completes the part of the construction on rigid levels.
		
		Now we add the other types of levels, which host the plugs.
		To each color band, where we already have $7$ rigid levels (for now denoted by
		\texttt{R}), we insert $4$~color levels (\texttt{C}), $4$~high
		levels (\texttt{H}), and $2$~pass-through levels (\texttt{P}) so
		that the final order from bottom to top is
		\texttt{HCRPRHRCRCRHRPRCH};
		see \cref{tab:reserved_levels}.
		Then we subdivide the leftmost and the rightmost wall to contain vertices on all levels -- forcing everything else to be drawn between them.
		We refer to the levels by their relative order within the band.
		Note that the level insertion changed the indices of rigid levels and of the gadgets they contain.
		\begin{table}
			\centering
			\begin{tabular}{l@{\qquad}ccccccccccccccccc}
				\toprule
				level & 1 & 2 & 3 & 4 & 5 & 6 & 7 & 8 & 9 & 10 & 11 & 12 & 13 & 14 & 15 & 16 & 17 \\
				type  & \texttt{H} & \texttt{C} & \texttt{R} & \texttt{P} & \texttt{R} & \texttt{H} & \texttt{R} & \texttt{C} & \texttt{R} & \texttt{C}  & \texttt{R}  & \texttt{H}  & \texttt{R}  & \texttt{P}  & \texttt{R}  & \texttt{C} & \texttt{H} \\
				\bottomrule
			\end{tabular}
			\medskip
			
			\caption{
				Order of the levels within each color band.
				\texttt{H} stands for high levels
				hosting the high plugs,
				\texttt{C} for color levels
				hosting the color plugs,
				\texttt{R} for rigid levels
				hosting the rigid structure, and
				\texttt{P} for pass-through levels.
			}%
			\label{tab:reserved_levels}
		\end{table}
		
		\myproofparagraph{Placement of plugs.}
		To each of the $k$ color bands,
		we now add a set of plugs that will guide the mechanism for selecting vertices in the MCIS instance $\mathcal S$.
		None of these plugs are connected to the rest of the graph, i.e., each plug is an isolated connected component of $\G$.
		In the color-$j$ band, we add as many plugs
		as there are choice and color sockets in total
		(so we do not include pass-through sockets here):
		Recall that the color-$j$ band contains in total
		$(2n'-2)+m_j \cdot (2n'-1)$ choice and color sockets.
		We add $n'-1$ of $(1,6,12,17)$-plugs, which we call \emph{high plugs}, and $n'-1 + m_j \cdot (2n'-1)$ of $(2,8,10,16)$-plugs, which we call \emph{color plugs}.
		See \cref{fig:w_hard_gadgets_alone}a and \ref{fig:w_hard_gadgets_alone}b.
		We place the high plugs from left to right one after another so that their x-coordinates do not overlap, similarly we place the color plugs.
		
		\myproofparagraph{Shifting mechanism (choosing vertices).}
		Observe that color plugs can fit into choice sockets and color
		sockets while high plugs can fit only into choice sockets.
		Further, neither color plugs nor high plugs can fit into pass-through sockets.
		The leftmost and rightmost walls of $\G$ force the plugs to be drawn
		into the cells while a high plug and a color plug cannot
		occupy the same socket due to \cref{lem:two_plugs_one_socket}.
		Hence, in every realization of $\G$ every choice and color socket is linked with exactly one plug.
		When we say that we \emph{shift color $j$ by $i$ steps},
		we mean that a realization of $\G$ links $i$ high plugs in the left choice block
		and $(n'-1)-i$ high plugs in the right choice block
		of the color-$j$ band.
		The high plugs can be interspersed among color plugs in the choice blocks arbitrarily.
		Shifting color $j$ by $i$ represents \emph{selecting
			the $(i+1)$-th vertex of color $j$};
		see \cref{fig:w_hard_shifting}.
		As we have $n'-1$ choice sockets
		in the left choice block,
		the represented vertex choice is in~$[n']$.
		Let $A$ be a color-$j$ plug that links to the $t$-th
		color socket of an edge block when color $j$ is shifted by 0 steps.
		Recall that the relative order of the color plugs is fixed
		because they occupy the same set of levels.
		Hence, with shift $i$ plug $A$ links to the $(t + i)$-th
		color socket of the edge block.
		Note that shift $i$ forces $i$ of the last color-$j$ plugs in the
		edge block overflow to the next block.

		\begin{figure}[tb]
			\centering
			\includegraphics[page=1]{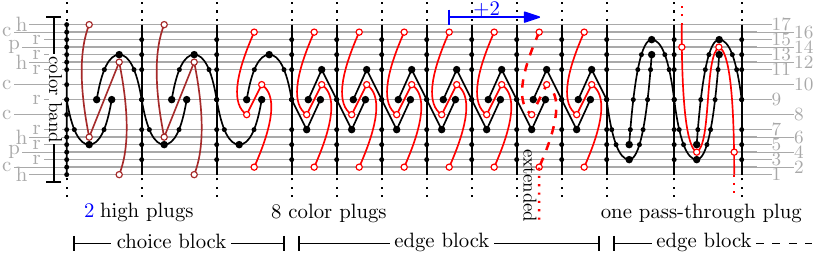}
			\caption{
				An example of a color band in our \Whness reduction for color
				classes of size~4 with high-plugs (h), color-plugs (c),
				pass-through-plugs~(p), and the rigid structure (r).
				Note how two high-plugs within the choice-block force all the color-plugs to be shifted by
				two columns to the right.
				Further note the special role of the dashed color-plug
				that extends to the collision-gadget beyond these levels.
			}%
			\label{fig:w_hard_shifting}
		\end{figure}
		
		\myproofparagraph{Collisions (ensuring independence).}
		Next, we make sure that for each $uv\in E(H)$,
		the vertices $u$ and $v$ cannot be selected at the same time.
		Intuitively, we achieve this by extending two color plugs
		within the edge-$uv$ block so that they end up occupying
		the same collision socket, where at most one can fit~--
		making the shifts that represents selecting $u$ and $v$
		impossible to draw.
		
		Recall that to the edge-$uv$ block we already placed a collision socket
		into $n'$-th cell of collision-$q_{uv}$ band, where $q_{uv}=\min\{c(u),c(v)\}$.
		
		Now to each collision band (whose three rigid levels are
		denoted by \texttt{R}), we insert two types of levels~--
		one called \emph{above levels} (\texttt{A})
		and the other called \emph{below levels} (\texttt{B}).
		The resulting sequence of levels within each collision band is \texttt{ARABRBARB}, also noted in \Cref{tab:collision_levels}.
		We add $(1,3,7,7)$-plugs called \emph{A-collision plugs}
		and $(4,4,6,9)$-plugs called \emph{B-collision plugs}.
		We set their x-coordinates to be in the middle cell but due to the choice of levels the collision plugs are not restricted to be drawn in a particular cell.
		Note that with the inserted levels,
		each collision socket is now a $(2,5,5,8)$-socket.
		
		\begin{table}[tb]
			\centering
			\begin{tabular}{l@{\qquad}ccccccccc}
				\toprule
				level & 1 & 2 & 3 & 4 & 5 & 6 & 7 & 8 & 9 \\
				type  & \texttt{A} & \texttt{R} & \texttt{A} & \texttt{B} 
				& \texttt{R} & \texttt{B} & \texttt{A} & \texttt{R} & \texttt{B}\\
				\bottomrule
			\end{tabular}
			\medskip
			
			\caption{Order of the levels within each collision band.
				\texttt{A} stands for above levels
				hosting the A-collision plugs,
				\texttt{B} for below levels
				hosting the B-collision plugs, and
				\texttt{R} for rigid levels
				hosting the rigid structure.%
			}
			\label{tab:collision_levels}
		\end{table}
		
		For this description let $c(u) < c(v)$, without loss of generality by swapping $u$ and~$v$.
		Let $\idx(u)$ and $\idx(v)$ denote the \emph{index} of
		the vertices $u$ and $v$
		within their color, i.e.,
		vertex~$u$ is the $\idx(u)$-th vertex of color $c(u)$
		and $v$ the $\idx(v)$-th vertex of color~$c(v)$.
		Let~$X$ be the $(n' - \idx(u) + 1)$-th color plug in the color-$c(u)$ band of the edge-$uv$ block and
		let~$Y$ be the $(n' - \idx(v) + 1)$-th color plug in the color-$c(v)$ band of the edge-$uv$ block.

		Finally, we describe how to connect $X$ and $Y$ to the collision plugs.
		We connect $X$ to the B-collision plug, this is straight-forward as they
		occupy neighboring bands.
		However, to connect $Y$ with the A-collision plug the connection needs to traverse over many color bands.
		To accommodate that, we insert $(4,4,14,14)$-plugs called
		\emph{pass-through plugs}, see \cref{fig:w_hard_gadgets_alone}c, into each color band
		between the collision-$c(u)$ band and the color-$c(v)$ band.
		We place the pass-through plugs in the middle cell of the edge-$uv$ block.
		We connect the A-collision plug in the collision-$c(u)$ band with
		the pass-through plug in the color-$(c(u)+1)$ band, then connect
		pass-through plug in color-$i$ band with the pass-through plug in color-$(i+1)$
		band for all $i\in\{c(u)+1,\dots,c(v)-2\}$, and last we connect
		the pass-through plug in color-$(c(v)-1)$ band with $Y$.
		We refer to the two connected components of $X$ and $Y$
		as the (pair of) \emph{extended color plugs} of $uv$.
		Note that the pass-through plugs fit only into pass-through sockets.
		
		\begin{claim}\label{claim:plugs_final_column}
			Let $uv \in E(H)$.
			If $u$ is selected in color $c(u)$
			and $v$ is selected in color $c(v)$,
			the pair of extended color plugs
			in the edge-$uv$ block lies in the
			$n'$-th column of the block.
		\end{claim}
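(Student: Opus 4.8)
The plan is to first locate the color plugs $X$ and $Y$ precisely using the shifting bookkeeping already established, and then to propagate this information along the connecting edges of the two extended color plugs, using walls as barriers. Assume without loss of generality that $c(u)<c(v)$, so that $q_{uv}=c(u)$. Recall that selecting $u$ means shifting color $c(u)$ by $\idx(u)-1$ steps and selecting $v$ means shifting color $c(v)$ by $\idx(v)-1$ steps.

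First I would pin down where $X$ and $Y$ are drawn. The color plugs of a fixed color band occupy a common set of levels, hence appear in a fixed left-to-right order, and the $t$-th color plug of the edge-$uv$ block links to its $(t+i)$-th color socket under shift~$i$. Since $X$ is the $(n'-\idx(u)+1)$-th color plug of the edge-$uv$ block in the color-$c(u)$ band and the shift there is $\idx(u)-1$, plug $X$ links to the $\big((n'-\idx(u)+1)+(\idx(u)-1)\big)=n'$-th color socket of the block, which sits in the $n'$-th cell (the middle cell) of the block; hence $X$ is drawn in the $n'$-th column. The identical computation with $\idx(v)$ and the color-$c(v)$ band shows that $Y$ is drawn in the $n'$-th column as well. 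Note that this is already consistent with the collision socket, which also sits in the $n'$-th cell of the collision-$c(u)$ band of the block.

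Next I would show that the remaining parts of the two extended color plugs stay in the $n'$-th column. Let $W$ and $W'$ be the two walls delimiting the $n'$-th column of the edge-$uv$ block. Although these interior walls carry vertices only on rigid levels, each is still realized as an uncrossable $y$-monotone curve that is present (as a vertex or as an edge crossing) on every level strictly between its bottommost and topmost vertex; I would check that this range covers every level traversed by the edges gluing an extended color plug together, all of which live in the bands color-$c(u)$ through color-$c(v)$ and the collision bands between them. Granting this, since $X$ lies between $W$ and $W'$, the edge joining $X$ to the B-collision plug cannot leave the strip between $W$ and $W'$, so the B-collision plug also lies in the $n'$-th column; walking inductively along the chain $Y,\ (\text{pass-through plug in color-}(c(v)-1)),\ \dots,\ (\text{pass-through plug in color-}(c(u)+1)),\ (\text{A-collision plug})$, each plug is glued by such an edge to a plug already known to lie in the $n'$-th column, hence lies there itself. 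As these two components are exactly the extended color plugs of $uv$, the claim follows.

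I expect the third step to be the main obstacle: one must verify that $W$ and $W'$ genuinely act as barriers on all the non-rigid, plug-hosting levels traversed by the connecting edges — which uses the fact (cf.\ \Cref{obs:subdivisions}) that inserting a level into the construction stretches a wall segment across the new level without opening any crossing opportunity — and that the connecting edges indeed remain within the level range in which $W$ and $W'$ are present. A complete proof would go through the connecting edges band by band, list the levels each traverses, and confirm that $W$ and $W'$ reach all of them; the degenerate case $c(v)=c(u)+1$, where there are no pass-through plugs and $X$, $Y$ attach directly to the collision plugs, is easier and should be treated separately.
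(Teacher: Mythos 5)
Your computation for $X$ and $Y$ is exactly the paper's proof of this claim: the paper's \texttt{claimproof} consists solely of the arithmetic $n'+1-\idx(u)+(\idx(u)-1)=n'$ (and the symmetric one for $Y$), and does not discuss the remaining parts of the extended color plugs at all. Your additional second and third steps---arguing that the walls form $y$-monotone barriers on every level they traverse (not just the rigid ones on which they carry vertices) and then walking inductively along the chain of connecting edges to confine every pass-through and collision plug to the $n'$-th column---are correct and in fact exactly the argument the paper uses, but in the \emph{next} claim (\cref{claim:no_drawing_one_edge}), phrased as ``the extended parts must stay in the same column as they cannot cross rigid walls.'' So you have proved a literally stronger reading of the claim (the full extended plugs, not just $X$ and $Y$), folding in material the paper defers by one claim; the decomposition of work between the two claims is slightly different, but the reasoning is the same. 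Your closing caveat about verifying that the interior walls reach all the relevant levels is well placed and is exactly the kind of detail the paper elides; the key fact justifying it is that walls were created as paths spanning all rigid levels before any non-rigid levels were inserted, and by \cref{obs:subdivisions} their edges then stretch across every inserted level in their range, so they remain continuous barriers from their bottommost to their topmost vertex, which comfortably covers every level hosting the connecting edges of the extended plugs.
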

		\begin{claimproof}
			Since color $c(u)$ is shifted by $\idx(u) - 1$
			steps to the right,
			the corresponding extended color plug is in column
			$n'+1-\idx(u) + (\idx(u)-1) = n'$ of the edge-$uv$ block.
			Similarly, for the corresponding extended color plug
			on the color-$c(v)$ band shifted by $\idx(v)-1$,
			we have $n'+1-\idx(v) + (\idx(v)-1) = n'$.
		\end{claimproof}
		
		We finish the construction of $\G$ by performing the described way of adding
		a pair of extended color plugs for each edge of $H$.
		
		\myproofparagraph{Correctness.}
		We started off by adding a rigid structure that divides
		the ordered level planar drawing into cells.
		Then we inserted extra levels that host various plugs and allow the
		plugs to flow freely between cells.
		The bounding walls occupy all levels and so restrict every plug to be
		drawn between them.
		We added sockets to cells that restrict the number of plugs in a cell to one.
		We modelled selection of vertices in MCIS by shifting a set of color plugs
		within the color band.
		Then we ensured that the selected vertices
		must be independent by creating collision via extended color plugs.
		We finish the proof by arguing correctness.
		
		\begin{claim}\label{claim:no_drawing_one_edge}
			For $uv \in E(H)$, there is no ordered level planar drawing of $\mathcal{G}$ where the color-$c(u)$ band is shifted by $\idx(u)-1$ and the color-$c(v)$ band is shifted by $\idx(v)-1$.
		\end{claim}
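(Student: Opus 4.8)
The plan is to obtain a contradiction by invoking \cref{lem:two_plugs_one_socket} for the collision gadget of the edge-$uv$ block. First I would fix the contrary hypothesis: suppose there is an ordered level planar drawing $\Gamma$ of $\mathcal{G}$ in which the color-$c(u)$ band is shifted by $\idx(u)-1$ and the color-$c(v)$ band by $\idx(v)-1$. By symmetry we may assume $c(u) < c(v)$, so $q_{uv} = c(u)$ and the collision socket of the edge-$uv$ block lies in the $n'$-th cell of the collision-$c(u)$ band. Applying \cref{claim:plugs_final_column} to these two shifts, both extended color plugs of $uv$ occupy the $n'$-th column of the edge-$uv$ block in $\Gamma$; one of them is the component of $X$, which enters the collision-$c(u)$ band as its B-collision plug, and the other is the component of $Y$, which enters that band as its A-collision plug through the chain of pass-through plugs.

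The crucial and, I expect, hardest step is to show that in $\Gamma$ both the A-collision plug and the B-collision plug link to the collision socket. For this I would argue that (i)~the two extended components are confined to the $n'$-th column of the edge-$uv$ block, where the only socket inside the collision-$c(u)$ band is the collision socket; (ii)~each of these components has vertices both strictly below and strictly above the rigid levels of the collision-$c(u)$ band, whereas a lone A- or B-collision plug cannot span those levels (which is exactly why the extensions towards $X$ and towards $Y$ are attached), so each component must cross the rigid structure of that band; and (iii)~by the rigidity of the gadgets and the bookkeeping of \cref{apx:xnlphard} (walls confine plugs to columns, and a plug crossing a socket's levels inside its cell must link to it), the only way to cross that structure inside the $n'$-th column is to link to the collision socket. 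Thus $\Gamma$ restricts to an ordered level planar drawing of the subgraph spanned by the A-collision plug $A$, the B-collision plug $B$, and the collision socket $S$ in which both $A$ and $B$ link to $S$.

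To finish, I would read off the level parameters after the insertion of the above/below levels into the collision band: $A$ is a $(1,3,7,7)$-plug on levels $\{1,3,7\}$, $B$ is a $(4,4,6,9)$-plug on levels $\{4,6,9\}$, and $S$ is a $(2,5,5,8)$-socket on levels $\{2,5,8\}$, so $A$, $B$, $S$ occupy three pairwise disjoint sets of levels, as required by \cref{lem:two_plugs_one_socket}. Here $a_2 = 3 < 4 = b_2$ but $a_3 = 7 > 6 = b_3$, i.e., the layers of $S$ and the plugs appear in the order $a_2, b_2, b_3, a_3$, so the condition of \cref{lem:two_plugs_one_socket} is violated and no ordered level planar drawing of the subgraph spanned by $A$, $B$, $S$ can have both $A$ and $B$ linked to $S$. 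This contradicts the drawing extracted in the previous step, so no drawing $\Gamma$ as assumed exists, proving the claim.
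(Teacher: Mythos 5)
Your proposal is correct and follows essentially the same approach as the paper's own proof: assume a drawing exists, use Claim~\ref{claim:plugs_final_column} to place both extended plugs in the $n'$-th column, note the rigid walls confine them there, conclude both collision plugs must link to the collision socket, and invoke Lemma~\ref{lem:two_plugs_one_socket} for the contradiction. You additionally spell out the level parameters ($a_2=3<4=b_2$ but $a_3=7>6=b_3$) and the disjointness hypothesis, which the paper leaves implicit; this extra bookkeeping is correct.
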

		\begin{claimproof}
			Towards a contradiction, assume that the color plugs in the color-$c(u)$ band
			are shifted by $\idx(u)-1$ and the color plugs
			in the color-$c(v)$ band are shifted by $\idx(v)-1$.
			Then by \Cref{claim:plugs_final_column} the extended plugs of $uv$ end up in the $n'$-th column -- the same column that has a collision socket of edge-$uv$ block in the collision-$c(u)$ band.
			The two extended color plugs form connected components
			to the collision-$c(u)$ band so in
			an ordered level planar drawing the extended parts must
			stay in the same column as they cannot cross rigid walls.
			We also see that A-collision plug and B-collision plug
			cannot both be drawn in the same collision socket
			due to \cref{lem:two_plugs_one_socket}.
			Hence, such a drawing does not exist.
		\end{claimproof}

		\begin{claim}\label{claim:everything_works}
			There exists an ordered level planar drawing of $\G$ if and only if $\mathcal S$ is a yes-instance.
		\end{claim}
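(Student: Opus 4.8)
The plan is to prove the two directions separately: for the ``only if'' direction I read a shift assignment (and hence a candidate solution) off an arbitrary drawing, and for the ``if'' direction I build a drawing from a given independent set.

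\textbf{From a drawing to a solution.} Suppose $\G$ admits an ordered level planar drawing $\Gamma$. Recall that the leftmost and rightmost walls occupy all levels, so every plug is drawn strictly between them, and (as argued above) \cref{lem:two_plugs_one_socket} together with the rigidity of the walls and sockets forces every choice socket and every color socket to be linked by exactly one plug, with high plugs linking only to choice sockets. Hence, for each color $j$, the $n'-1$ high plugs of the color-$j$ band split into some number $i_j\in\{0,\dots,n'-1\}$ linking into the left choice block and $n'-1-i_j$ linking into the right choice block; i.e., $\Gamma$ shifts color $j$ by $i_j$ steps, which I interpret as selecting the $(i_j+1)$-th vertex $x_j$ of color $j$. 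The set $X=\{x_1,\dots,x_k\}$ contains exactly one vertex of each color. If some edge $uv\in E(H)$ had both endpoints in $X$, then $\Gamma$ would shift color $c(u)$ by $\idx(u)-1$ and color $c(v)$ by $\idx(v)-1$, contradicting \cref{claim:no_drawing_one_edge}. Thus $X$ is an independent set and $\mathcal S$ is a yes-instance.

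\textbf{From a solution to a drawing.} Conversely, let $X=\{x_1,\dots,x_k\}$ with $x_j\in C_j$ be a multicolored independent set. I set the shift of color $j$ to $i_j:=\idx(x_j)-1$: place $i_j$ high plugs into the left choice sockets and the remaining $n'-1-i_j$ into the right choice sockets of the color-$j$ band, fill the remaining choice sockets and all color sockets, in left-to-right order, with the color plugs (this is consistent because the color plugs share the same levels and hence have a fixed relative order), and put each pass-through plug into a pass-through socket. It remains to route, inside each edge-$uv$ block (say $c(u)<c(v)$, so $q_{uv}=c(u)$), the pair of extended color plugs together with the associated $A$- and $B$-collision plugs and the pass-through chain. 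Here I use that $X$ is independent, so at most one of $u,v$ lies in $X$. If neither does, then by the column computation underlying \cref{claim:plugs_final_column} each of the two extended color plugs lands in a column of the block different from the $n'$-th one, so neither needs the (unique) collision socket of the block and both, with their collision plugs and the pass-through chain, fit inside their respective columns. If exactly one of $u,v$ lies in $X$, then only its extended color plug (with its pass-through chain and the incident collision plug) reaches column $n'$; it links to the collision socket while the other extended color plug stays in a different column. In either case the edge block is drawn crossing-free; assembling these local drawings over all color bands, choice blocks and edge blocks yields an ordered level planar drawing of $\G$.

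\textbf{Main obstacle.} The delicate part is the ``if'' direction: one must verify that the concrete, globally consistent placement above is genuinely crossing-free, in particular that the shift propagates correctly through all $m$ edge blocks at once and that a single extended plug can always link to a collision socket while its partner and every pass-through plug pass through the intervening cells without interference. This relies on a ``many plugs into many sockets'' strengthening of \cref{lem:two_plugs_one_socket}~-- namely that a prescribed left-to-right sequence of plugs can be simultaneously linked to a matching sequence of sockets within a band, and that a plug not forced into a socket can always be tucked into a cell without blocking others~-- which is precisely the package of additional gadget properties collected in \cref{apx:xnlphard}; the remainder is bookkeeping over the grid structure.
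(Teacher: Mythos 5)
Your proposal is correct and takes essentially the same approach as the paper: read the shifts off a drawing and apply \cref{claim:no_drawing_one_edge} for the forward direction, and place the plugs according to the independent set for the backward direction. One small inaccuracy is that \cref{apx:xnlphard} contains no explicit ``many plugs into many sockets'' strengthening of \cref{lem:two_plugs_one_socket}; the paper's actual justification that the global placement succeeds is the observation that the different plug types (high, color, pass-through, collision) occupy pairwise disjoint sets of levels, so each type can be placed independently, combined with the fixed left-to-right order of same-type plugs that you already invoke---so the gap is in attribution, not in substance.
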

		\begin{claimproof}
			For the one direction, assume $\mathcal S$ is a yes-instance.
			Let $s_j$ be the index of a vertex of color $j$ that is chosen in the solution $\mathcal S$.
			We draw the rigid part of $\G$ in its unique embedding.
			Then we draw high and color plugs of color-$j$ band shifted by $s_j-1$.
			All types of plugs occupy disjoint sets of levels, hence,
			drawing one type does not influence the possibility
			of drawing the other.
			In $\mathcal S$ we have, for each edge, chosen at most
			one incident vertex.
			So in each edge-$uv$ block at most one extended plug links to the
			collision socket of $uv$.
			
			In the other direction, assume we have an ordered level planar drawing of $\G$.
			Focusing on the rigid part we note it still has only one embedding, which must be used in this drawing as well.
			This naturally divides the final drawing into columns and bands.
			The leftmost and rightmost walls force all the remaining gadgets to be drawn in between them, placing each plug in some column.
			For each color $j$ we can identify the number $s_j$ of high plugs in the color-$j$ band left choice block.
			Due to \Cref{claim:no_drawing_one_edge}, we know that these
			shifts cannot represent adjacent vertices in $\mathcal S$
			and, hence, correspond to a solution for~$\mathcal S$.
		\end{claimproof}
		
		MCIS is \Wh with respect to the solution size $k$.
		Our construction uses $17 \cdot k + 9 \cdot (k-1) = 26k-9$ levels and can clearly be carried out in polynomial time.
		Hence, \OLPlong is \Wh with respect to the number of levels.
	\end{proof}

	\paragraph{\XNLPhness.}
	We now extend our construction for \Whness to obtain \XNLPhness as well.
	We design a \emph{parameterized tractable log-space
		reduction}~\cite[Section~V.B.]{XNLP2021}~--
	a parameterized reduction that runs in $\Oh(g(k) \cdot
	n^c)$ time and uses only $\Oh(f(k) \cdot \log n)$ space
	for internal computation, where $g$ and $f$ are computable functions,
	$k$ is the parameter, and $n$ is the input size.
	It suffices to design the reduction such that it returns
	a single bit of the output on demand.
	Observe that with such a reduction,
	we can retrieve the entire output by requesting one bit at a time.
	We reduce from \textsc{Chained Multicolored Independent Set} (CMCIS)~--
	an \XNLP-complete problem defined by Bodlaender et al.~\cite{XNLP2021},
	which is a sequential version of MCIS.
	As in MCIS, we are given a $k$-colored graph~$H$ with
	color classes $C_1, \dots, C_k$ and, additionally,
	there is an $r$-partition $V_1, \dots, V_r$ of $V(H)$
	such that for every $vw \in E(H)$,
	if $v \in V_i$ and $w \in V_j$, then $|i-j| \le 1$.
	The task is to select an independent set $X \subseteq V(G)$
	such that, for each $i \in [r]$ and for each color $j \in [k]$,
	$|X \cap V_i \cap C_j| = 1$.
	
	\begin{theorem}\label{thm:olp_xnlp_hard}
		\OLPlong is \XNLPh with respect to the height.
	\end{theorem}
	\begin{proof}
		Let $\mathcal S$ be an instance of the \textsc{Chained Multicolored Independent Set} (CMCIS).
		We prove this theorem by constructing an \OLP instance $\G$, with $h \in \Theta(k)$ levels, that has an ordered level planar drawing if and only if
		there exists a solution to $\mathcal S$.
		Then we argue how to retrieve a single output bit using limited memory.
		
		The instance we build uses the structures that
		we have described in detail in the proof of \cref{thm:olp_w_hard},
		namely (different types of) plugs, sockets, color/collision bands, the whole mechanism that represents the choice of a vertex in the solution
		by the shifts of plugs within a color band, and the collision mechanism that makes it impossible to use shifts that would represent choosing neighboring vertices.
		We go through the construction while assuming familiarity with the aforementioned terms.
		
		For $i \in [r]$, we denote the edges between vertices of $V_i$ by $E_i$,
		and we denote the edges that connect a vertex from~$V_i$
		with a vertex from~$V_{i+1}$ by $E_{i,i+1}$.
		Moreover, we let $n_i = |V_i|$, $m_i = |E_i|$, and $m_{i,i+1} = |E_{i,i+1}|$.
		For the reduction, we may assume that all
		of the~$r$ partitions have the same size $n_i$,
		for each $V_i$ there are $n_i / k =: \hat{n}$ vertices per color,
		and each $E_i$ contains at least one edge, i.e., $m_i \ge 1$.
		If this was not the case, we could simply add vertices
		(with a specific color~$j$) until all partitions have $n_i$ vertices
		and $\hat{n}$ vertices per color.
		The new vertices get an edge to every vertex in its partition
		except for the vertices of color~$j$.
		Hence, they cannot appear in the solution if there are two or more colors.
		The number of vertices being added is in $\Oh(k \cdot r \cdot \max_{i \in r} n_i)$.
		
		Roughly speaking, the whole construction combines multiple times the \Whness constructions~--
		partially overlapping horizontally.
		More precisely, we first create $2k$ color bands that alternate with $2k-1$ collision bands.
		As before, each color band starts with $7$ rigid levels and each collision band starts with $3$ rigid levels.
		We divide the color bands into two \emph{halves}; we call the $k$ color bands that occupy
		the lower levels \emph{bottom} half and the other~$k$ color bands \emph{top} half;
		see \cref{fig:xnlp_example} for an illustration.
		
		Then, we add $1 + 2 \cdot (\hat{n}-1) + m \cdot (2 \hat{n}-1)$ walls where $m$ is the total number of edges, i.e., $m = \sum_{i=1}^r m_i + \sum_{i=1}^{r-1} m_{i,i+1}$.
		The walls divide the final drawing into columns, each column in a band forms a cell.
		Several columns form a block, however, we will also use blocks that span only one half.
		In such a case, the construction fills all unused cells with pass-through sockets.
		The first and last $(\hat{n}-1)$ columns form the first and the last choice block.
		The remaining columns are divided into $m$ edge blocks each with $(2 \hat{n}-1)$ columns.
		These edge blocks are partitioned into \emph{sectors}.
		Going from left to right, we call the first $m_1$ edge blocks the $(1)$-sector,
		which represent the edges in $E_1$,
		the next $m_{1,2}$ blocks the $(1,2)$-sector,
		which represent the edges in $E_{1,2}$,
		and so on, alternating between $E_i$ and $E_{i,i+1}$ until $E_r$ is reached.
		
		For $i \in [r]$, let $V_i$ be assigned to the top half if $i$ is odd, and to the bottom half otherwise.
		The colors of the vertices in $V_i$ are represented by the $k$ color bands in its assigned half.
		We now add all the remaining parts.
		For each edge block, we add color, pass-through, and collision sockets accordingly
		as in the proof of \cref{thm:olp_w_hard}.
		For edges in $E_{i,i+1}$, we put the color sockets in one of the top-half color bands
		and in one of the bottom-half color bands.
		Note that an edge block representing an edge that is incident to a vertex from~$V_i$
		lies in the $(i-1,i)$-, $(i)$-, or $(i,i+1)$-sectors.
		For each $i \in [r]$, we place two choice blocks in the half where $V_i$ is assigned to:
		one in the $\hat{n}-1$ columns just before the $(i-1,i)$-sector and one just after the $(i,i+1)$-sector.
		Last, we insert the non-rigid levels into each color and collision band as before.
		We subdivide the very first and last wall to have vertices on all levels.
		For each $i \in [r]$, we also subdivide the first wall of the left choice block of $V_i$
		in the half of $V_i$ and do the same for the last wall of the right choice block.
		
		The resulting construction has $r$ times (some instance of) the \Whness construction
		lying in the top or the bottom half,
		where the only difference is that
		some of the extended color plugs are shared between
		two of these constructions in different halves.
		Note that this does not influence the shifting mechanism; it still works the same way.
		Each edge block internally looks the same, the ones for $E_i$
		just span fewer levels than those for $E_{i,i+1}$.
		Hence, the edge block arguments from \cref{claim:plugs_final_column,claim:no_drawing_one_edge}
		work the same and we can derive a statement analogous to \cref{claim:everything_works}.
		
		Recall that for a parameterized tractable log-space	reduction,
		we need to specify how to retrieve from the input CMCIS instance
		in \FPT-time a single bit of the output \OLP instance
		while maintaining only $\Oh(f(k) \cdot \log n)$ space
		across all requests of a bit.
		We traverse our grid structure column by column.
		To locate the plugs somewhere (for the reduction, we just need
		to represent the \OLP instance not a specific drawing of it),
		we assume that all high plugs are assigned to the right choice block.
		This gives every color, pass-through, and collision plug its fixed column.
		Therefore, we need to save as our status only
		the information belonging to the current column,
		e.g., to which edge (if any) this column belongs etc.
		Every column spans $\Oh(k)$ levels and every socket has $\Oh(1)$ vertices and edges.
		Hence, the total number of vertices and edges in one column is $\Oh(k)$
		and what type they are can be computed in polynomial time and $\Oh(k \log n)$ space
		where the $\log n$ factor comes from saving the number of
		the currently considered column, vertex and edge.
		When we need to proceed with the next column,
		we can find the next edge in polynomial time
		if needed, which means we do not need to first
		sort all edges.
		Moreover, if our output encoding expects first
		all vertices, then all edges, etc.,
		our algorithm might do several iterations
		through our column structure
		when considering all requests of single bits together.
	\end{proof}
	
	\begin{figure}[tb]
		\centering
		\includegraphics[page=1]{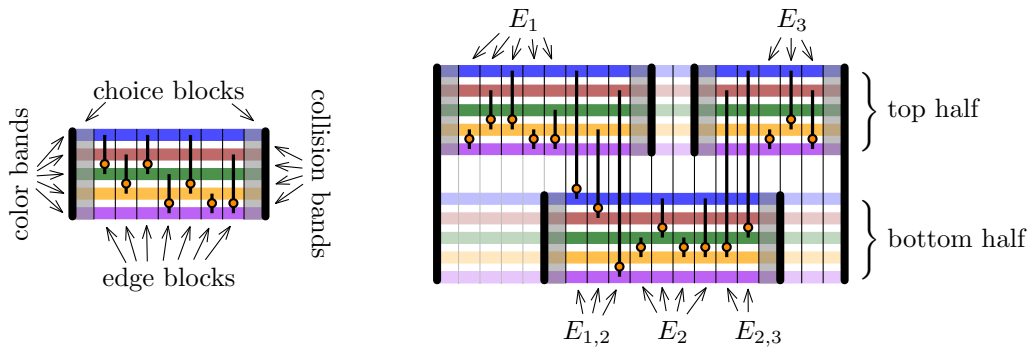}
		\caption{Comparison of the \Whness construction (left) and
			the \XNLPhness construction (right) with $r=3$ and $k=5$.
			The thick walls terminating the choice blocks have vertices
			on all levels.  Vertical bars and orange disks between color
			bands represent the extended plugs and the collision
			sockets, respectively, which are used for the collision
			mechanism.
		}%
		\label{fig:xnlp_example}
	\end{figure}

	The construction in the proof of \cref{thm:olp_xnlp_hard} can be
	altered to make the graph connected
	as we show now.
	
	\begin{theorem}\label{thm:olp_connected_xnlp_hard}
		\OLPlong is \XNLPh with respect to the height, even when the input graph $G$ is connected.
	\end{theorem}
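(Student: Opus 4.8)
The plan is to post-process the ordered level graph~$\G$ from the proof of \cref{thm:olp_xnlp_hard} so that it becomes connected without changing the answer. Note that in that construction the walls and the sockets already form a single connected ``skeleton'' (each socket is identified with wall vertices), so the only obstruction to connectivity is that every plug is its own component. The first step is therefore to add a single new apex vertex~$q$ on a new topmost level, placed in the leftmost column, and to add an edge from the topmost vertex of each wall to~$q$. Since walls span all levels and~$q$ is the unique vertex on the top level, these wall-to-$q$ edges have essentially forced routing and do not alter which drawings of the skeleton are realizable; the skeleton is now connected.

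Second, I would connect the plugs to~$q$. The difficulty is that the horizontal freedom of a plug to slide between the columns of its band is exactly what drives the shifting (vertex-selection) and collision mechanisms, so the connecting paths must not remove this freedom nor create a new ``escape route'' by which a plug could avoid a socket. Following the sketch in the excerpt, I would duplicate every column into a \emph{left part} and a \emph{right part} separated by a new wall: the left parts carry all gadgets exactly as before, while the right parts are empty except for vertical connector paths leading to~$q$. Each plug is linked through its connecting vertex~$u_1$ to a chain of a new, constant-size ``connector plug'' gadget that climbs level by level through the right part of the plug's own column (so the chain follows the plug when it slides), until it reaches a wall vertex joined to~$q$. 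To keep different chains from interfering, each band is assigned its own dedicated block of new connector levels; since $\Theta(k)$ bands each contribute such a block, the height grows from $\Theta(k)$ to $\Theta(k^2)$. As in the original proof, I would subdivide long edges so the new gadgets are proper and hence rigid (using the observations in \cref{apx:xnlphard}).

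Third comes correctness, which parallels the original argument. In the ``yes'' direction, given a CMCIS solution, take the drawing of the old~$\G$, leave the left parts untouched, and route every connector chain monotonically upward through the otherwise-empty right parts to~$q$; since the connector levels are disjoint from all gadget levels, no crossings arise. In the converse direction, from an ordered level planar drawing of the new graph, delete~$q$ and all connector apparatus: the remaining drawing places each plug in exactly one socket as before, provided one checks that the connector chain attached to a plug cannot help it leave its band or skip a socket — this holds because the chain occupies levels strictly above the plug and is funneled through a single column of the right part, so horizontally it constrains the plug no more than the flanking walls already do. Hence the new (connected) instance is ordered level planar iff the CMCIS instance is a yes-instance. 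Finally, the reduction remains a parameterized tractable log-space reduction: the column-by-column sweep used for \cref{thm:olp_xnlp_hard} still works, because the contents of each column's right part are determined locally from the contents of its left part, so a single output bit is computed in \FPT\ time and $\Oh(f(k)\log n)$ space, and the height $\Theta(k^2)$ is a function of the parameter only. The main obstacle I expect is the converse direction above — rigorously ruling out that routing all plug-chains to a shared apex introduces new topological flexibility (e.g., two chains of one band ``swapping'' around a wall or around a socket edge) — which is precisely why the construction isolates each band's connectors in a separate, well-separated block of levels and routes them through the empty right parts.
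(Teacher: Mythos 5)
Your high-level plan coincides with the paper's: add an apex vertex~$q$ on a new topmost level, connect the walls to~$q$, split every column into a left part (carrying the original gadgets) and a right routing strip (the paper's \emph{tunnel}), send each plug up a chain of new routing gadgets through the tunnels to~$q$, and budget $\Theta(k)$ new levels per color band for a total height of $\Theta(k^2)$. So in strategy you are on track, and your intuition that each band must have its own dedicated block of levels is the right one.

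The unresolved piece, which you yourself flag as ``the main obstacle,'' is not closed by that intuition alone, and one specific claim in your sketch is actually inconsistent with it: the paper's routing gadgets are \emph{not} constant size. A \emph{tunnel-$i$ plug} is an $(i,i,i+3k,i+3k)$-plug (spanning $\Theta(k)$ levels) and a \emph{tunnel socket} is a $(1,3k+1,3k+1,6k+1)$-socket; the whole point of this sizing is that, by \cref{lem:two_plugs_one_socket}, all tunnel-$i$ plugs with distinct~$i$ can simultaneously link to the same tunnel socket and do so on pairwise disjoint level sets. That disjointness is exactly what rules out the swapping/nesting phenomena you worry about and shows the tunnels add no new flexibility; a chain built from constant-size plugs does not come with such a guarantee, and your proposal does not supply a substitute argument. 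You also omit that the wall between a column and its tunnel has to be opened up (the paper duplicates a wall vertex, redirects an edge, and tunnels the orphaned lower wall piece to~$q$) so a plug's chain can actually escape the left part into the right part; without some such step the chains are trapped.
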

	
	\begin{proof}
		Recall that in the proof of \cref{thm:olp_xnlp_hard}
		(and \cref{thm:olp_w_hard}),
		we first create many walls, and then fill in sockets and plugs.
		The only gadgets that are disconnected from the rigid structure are the plugs.
		In short, our goal is to extend all plugs to connect them
		to a single vertex that is placed at a new topmost level.
		To this end, we add a new \emph{tunnel column} to every column.
		A tunnel column contains sockets that are not much different from pass-through sockets, though they allow up to $3k$
		plugs to fit at the same time.
		These sockets are placed in new bands,
		which are inserted inside the color bands.
		The new bands have their own levels for the different types
		of plugs because they need to be drawn mutually independent.
		Therefore, we use roughly $6k$ extra levels per color band,
		resulting in a total number of levels in $\Theta(k^2)$.
		
		We now proceed to show the construction in more detail.
		We begin with a slightly altered construction from the proof of \cref{thm:olp_xnlp_hard}.
		We create the rigid levels and add the walls that form the columns.
		Then we split each column by a wall into two vertical strips~--
		we refer to the right strip as \emph{tunnel},
		the left one we still call column.
		We continue with the construction from the proof of
		\cref{thm:olp_xnlp_hard} as if the tunnels did not exist.
		After this construction is complete, we continue as follows.
		Observe that each connected component of the rigid structure
		contains a wall because every connected component of every
		socket is attached to a wall.
		We add a new topmost level and create a new vertex $q$ on it.
		We connect $q$ to all topmost vertices of all the walls making the rigid structure connected, see \cref{fig:hard_connected}d.
		
		\begin{figure}[tb]
			\centering
			\includegraphics[page=1]{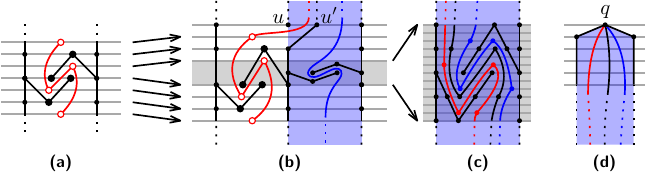}
			\caption{
				Our modifications to make the ordered level graph
				in our \XNLPhness reduction connected.
				Tunnels have a transparent blue background,
				while the tunnel band has a gray background.
				(a) Simplified color band with a color plug in a color socket;
				(b) column divided into the left column and a tunnel, showing how a plug continues in the tunnel;
				(c) a tunnel socket and three tunnel plugs within the tunnel band of a tunnel;
				and (d) the top level where the tunnel ends by connecting to $q$.
			}%
			\label{fig:hard_connected}
		\end{figure}
		
		Into each color band, we insert a \emph{tunnel band} between
		the levels $9$ and $10$ (in the relative level numbering
		of the color bands), see \cref{fig:hard_connected}a and b.
		A tunnel band consists of $6k+1$ levels,
		which are grouped in a top and a bottom half.
		Each half contains $k$ triplets of levels.
		The $j$-th triplet is made of the rigid-$j$ level,
		color-$j$ level, and high-$j$ level,
		as shown in \cref{fig:tunnel_levels}.
		To close off the tunnel band, it contains an extra rigid level on top.
		
		\begin{table}[tb]
			\caption{Purposes of the levels of a tunnel band:
				$\texttt{R}_j$ stands for rigid-$j$ level, $\texttt{C}_j$ for color-$j$ level, and $\texttt{H}_j$ for high-$j$ level.
				Furthermore, \texttt{R} stands for the rigid level on top.}
			\label{fig:tunnel_levels}
			\centering
			\begin{tabular}{l@{\qquad}ccccccccccccccc}
				\toprule
				level & $1$ & $2$ & $3$ & $4$ & $5$ & $\dots$ & $3k-1$ & $3k$ & $3k+1$ & $3k+2$ & $\dots$ & $6k$ & $6k+1$ \\
				type & $\texttt{R}_1$ & $\texttt{C}_1$ & $\texttt{H}_1$ & $\texttt{R}_2$ & $\texttt{C}_2$ & $\dots$ & $\texttt{C}_k$ & $\texttt{H}_k$ & $\texttt{R}_1$ & $\texttt{C}_1$ & $\dots$ & $\texttt{H}_k$ & \texttt{R}\\
				\bottomrule
			\end{tabular}
		\end{table}
		
		Let a $(1,3k+1,3k+1,6k+1)$-socket with respect to a tunnel band
		be called \emph{tunnel socket}.
		Let an $(i,i,i+3k,i+3k)$-plug be called \emph{tunnel-$i$ plug} for $i \in \{2,\dots,3k\}$.
		By \cref{lem:two_plugs_one_socket}, any number of distinct tunnel plugs
		can fit into a tunnel socket, moreover,
		they occupy mutually disjoint sets of levels.
		
		For a vertex $v$ and $i \in \{2,\dots,3k\}$,
		let \emph{tunnel from $v$ to $q$ through $i$} mean adding
		a tunnel-$i$ plug to each color band between
		the color band of~$v$ and the topmost vertex~$q$,
		connecting $v$ to the bottommost of these tunnel-$i$ plugs,
		connecting pairs of tunnel-$i$ plugs in neighboring color bands,
		and connecting the topmost plug to $q$.
		In case $v$ is in the topmost color band, we just connect $v$ with $q$ with an edge.
		Our aim is to make the construction connected by tunneling
		from a vertex of every connected component to $q$.
		The remaining connected components
		are the plugs and extended plugs described
		in the proofs of \cref{thm:olp_w_hard,thm:olp-XNLP-complete}.
		
		Similar to the closing argument of \cref{thm:olp_xnlp_hard}, let us assume for the ease of description that all high plugs are placed in the right choice block.
		Let us focus on a single column-tunnel pair along with their plugs to describe the procedure that is performed on each such pair.
		In each color-$j$ band for $j \in [k-1]$ we take the first rigid vertex $u$ above the color-$j$ band in the wall between the column and its tunnel, we duplicate $u$ to create $u'$, place $u'$ to the right of $u$ and set the edge of the wall incident to $u$ from below to be connected to $u'$ instead.
		This operation splits the wall into two parts and the lower part becomes disconnected from $q$; we fix this by tunnelling from $u'$ to $q$ through $3j-2$.
		In each color-$j$ band for $j \in [k]$ within this column-tunnel pair, if the socket is assigned a high or a color plug, then we take the top vertex $v$ of the plug assigned the socket and tunnel from $v$ to $q$ through $3j-1$ if the plug is a color plug, or through $3j$ if it is a high plug.
		
		Now we argue that our alterations did not impact how the construction works.
		First, each tunnel band was inserted between two adjacent levels
		of a color band so it is evident that no color or high plugs
		can fit the tunnel socket,
		and no tunnel plugs can fit the choice or color sockets.
		We let the connections of the tunnel-$(3i-2)$ plugs,
		which we use to attach vertices of the walls to $q$,
		occupy the rigid levels so their positions within the rigid structure is fixed.
		Similarly, we let the connections of the tunnel-$(3i-1)$ plugs
		occupy levels that are reserved for color plugs,
		and we let the connections of the tunnel-$(3i)$ plugs occupy levels 
		that are used only for high plugs.
		Hence, the set of feasible drawing does not change
		by adding tunnels and this connected instance of \OLPlong
		has a planar drawing if and only if the \textsc{CMCIS}
		instance has a solution.
		
		We argue that to retrieve a single output bit,
		we can still use the same approach as in the proof of \cref{thm:olp_xnlp_hard}.
		We again traverse our construction with
		a vertical sweep-line.
		For the argument, we again assume that all high plugs
		are located in the right choice blocks.
		This fixes not only color, pass-through, and collision plugs, but also the newly added tunnel plugs.
		The cornerstone observation is that even though we altered the construction,
		we know that in each column-tunnel pair,
		every level contains a constant number of vertices.
		Every column-tunnel pair spans $\Oh(k^2)$ levels that
		in total contain no more than $\Oh(k^2)$ vertices and edges.
		For a global orientation of the column-tunnel,
		$\Oh(n)$ bits suffices.
		To get one bit of the output, we can go over the columns-tunnel pairs
		one by one.
	\end{proof}
	
	\begin{remark}\label{rmk:connected-many-levels}
		We now hint at the reason why the bound of $\Theta(k^2)$ levels
		might be necessary to make our construction connected.
		At first glance, it may seem feasible to restrict the number of levels in each tunnel socket to two.
		Every edge block contains at most two plugs in each column so why do we need $\Omega(k)$ levels in each tunnel socket?
		The choice blocks do contain many plugs in each column, but those could be spread out into separate choice blocks, each being relevant for only one color.
		The main issue is that even for the lowest numbered color we need to connect its plugs to the topmost vertex $q$.
		To do that we build a path over all other color bands.
		If we did not add extra levels to such bands, then choices for the lowest color would not have been independent of the upper colors.
	\end{remark}

	\section{NP-Hardness of 4-Level \CLPlong}
	\label{sec:clp-hardness}
	
	We show the hardness of 4-level \CLP by reducing from the
	\textsc{$3$-Partition} problem.
	In the \textsc{$3$-Partition} problem,
	we are given a multiset $S = \{s_1, \dots, s_n\}$
	of $n=3m$ positive integers whose sum is~$mB$.
	The task is to decide whether there exist a partition of $S$
	into $m$ triplets $S_1,S_2,\dots,S_m$ such that
	the sum of the numbers in each triplet is equal to $B$.
	\textsc{$3$-Partition} is known to be strongly NP-hard even when every integer in $S$ is strictly between $B/4$ and $B/2$ \cite{GJ75}.
	This is important for us because we represent each number~$k$ in~$S$ in
	unary encoding by a $k$-clip, a graph structure that we define next;
	see \cref{fig:clip}.
	The bounds on these numbers also play a role in ensuring that each triplet
	contains exactly three numbers.
	
	\begin{definition}[$k$-Clip]\label{def:clip}
		A \emph{$k$-clip} is graph that contains $2k+1$ edges that have one
		endpoint on level $2$ and alternate their second endpoint between
		levels $3$ and $1$ (starting with level $3$); the clip also contains
		a central vertex
		on level $4$ which is connected to all vertices on level~$3$.  The
		order of the level-$2$ vertices is fixed to a linear order by
		constraints.
	\end{definition}
	
	\begin{figure}[tb]
		\centering
		\includegraphics[page=1]{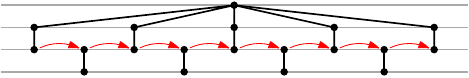}
		\caption{
			An example of a $4$-clip that represents number $4$.
			Red arrows show second-level constraints that fix the drawing of the clip.
		}%
		\label{fig:clip}
	\end{figure}
	
	\begin{lemma}\label{lem:disjoint_clips}
		If $\G$ contains two clips $P_1$ and $P_2$, then, in any $4$-level
		planar
		drawing $\Gamma$ of~$\G$, the clips $P_1$ and $P_2$ do
		not cross, i.e., on every level either all vertices of $P_1$ lie
		before all vertices of $P_2$ or vice versa.
	\end{lemma}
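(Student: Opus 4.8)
The plan is to show that $P_1$ and $P_2$ cannot \emph{interleave} on any level -- that is, on no level does a vertex of one clip lie strictly between two vertices of the other -- from which the asserted separation follows by the elementary fact that if two disjoint subsets of a finite linearly ordered set are not separated (one entirely before the other), then some element of one lies strictly between two elements of the other. So suppose for a contradiction that $\Gamma$ is a $4$-level planar drawing in which $P_1$ and $P_2$ interleave on some level.

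The structural observation I would extract from the definition of a $k$-clip is: a clip $P$ is connected, all its vertices lie on levels $1,\dots,4$, it has exactly one vertex $x_P$ on the top level~$4$, and for \emph{every} vertex $w$ of $P$ there is a directed (hence y-monotone) path $\sigma_P^w$ inside $P$ that starts at a level-$1$ vertex, passes through $w$, and ends at $x_P$. This is a short case check on the definition: from $x_P$ one goes down one edge to a level-$3$ vertex and then through its level-$2$ neighbour down to level~$1$; from a level-$3$ vertex one goes up one edge to $x_P$ and down as before; each level-$2$ vertex of the zig-zag has both a level-$3$ and a level-$1$ neighbour; and a level-$1$ vertex reaches $x_P$ via its level-$2$ neighbour and that neighbour's level-$3$ neighbour.

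Since interleaving on some level means, by the elementary fact above, that some vertex of one clip lies strictly between two vertices of the other, after possibly swapping $P_1$ and $P_2$ we may assume there are a level $\ell$ and vertices $v,v'\in V(P_1)$ and $w\in V(P_2)$ with $\gamma(v)=\gamma(v')=\gamma(w)=\ell$ appearing in the left-to-right order $v,w,v'$ on level~$\ell$. Let $\sigma:=\sigma_{P_2}^{w}$ and let $\widehat\sigma$ be the bi-infinite y-monotone curve obtained from $\sigma$ by attaching an upward vertical ray from $x_{P_2}$ and a downward vertical ray from the level-$1$ endpoint of $\sigma$; being homeomorphic to a line, $\widehat\sigma$ splits the plane into two regions. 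As $P_1$ and $P_2$ are vertex-disjoint and $\Gamma$ is planar, the drawing of $P_1$ meets no vertex or edge of $\sigma$; and the two rays lie in the open half-planes $y>4$ and $y<1$, which contain no point of $P_1$ since $P_1$ is drawn in the strip $1\le y\le 4$. Hence $\widehat\sigma$ is disjoint from $P_1$, and the connected set $P_1$ lies entirely in one of the two regions.

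Now the contradiction: because $\widehat\sigma$ is y-monotone it meets the line $L_\ell$ in exactly one point, and since $\sigma$ passes through the vertex $w$ on level~$\ell$ this point is $w$ itself. Thus $w$ separates $L_\ell$ into a left ray and a right ray lying in the two different regions of $\widehat\sigma$, so $v$ and $v'$ -- on opposite sides of $w$ along $L_\ell$ -- lie in different regions, contradicting that all of $P_1$ lies in one region. I expect the only delicate point to be the structural observation (making sure a genuinely y-monotone path from level~$1$ to level~$4$ runs through \emph{every} vertex, endpoints of the level-$2$ zig-zag needing a moment's care); alternatively one can first rule out interleaving on level~$3$ using the two apexes $x_{P_1},x_{P_2}$ and then propagate to levels $1$ and~$2$ via the monotone short connections to level~$3$, which avoids spines through arbitrary vertices. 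Everything else is a routine Jordan-curve argument.
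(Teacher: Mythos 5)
Your key structural observation about the $k$-clip is false, and with it the Jordan-curve argument collapses. A $k$-clip is \emph{not} a connected zigzag path in which every level-2 vertex has both a level-1 and a level-3 neighbour. It consists of $2k+1$ pairwise non-adjacent \emph{pendant} edges, each with its own level-2 endpoint; the odd-indexed ones go up to a distinct level-3 vertex (and those $k+1$ level-3 vertices are in turn joined to the single level-4 apex), while the even-indexed ones go down to a distinct level-1 vertex. Each level-2 vertex therefore has exactly one neighbour, not two. The gadget is held together only by the chain of constraints $m_1\prec_2 m_2\prec_2\cdots\prec_2 m_{2k+1}$ on its level-2 vertices -- which is exactly why the definition and figure caption emphasize that these second-level constraints are what ``fix the drawing of the clip.'' As a graph, a $k$-clip thus has $k+1$ connected components: one piece touching levels $2,3,4$ and $k$ isolated edges on levels $1,2$.

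Consequently both pillars of your proof fail. No y-monotone path inside a clip runs from level 1 to level 4 (the clip's graph does not even connect those two levels), so for any vertex $w$ of $P_2$ the curve $\widehat\sigma$ covers at most two levels of the strip $1\le y\le 4$ via edges of $P_2$; the rest is a vertical ray sitting inside the strip, which the drawing of $P_1$ is free to cross. Likewise, ``$P_1$ is connected'' is false, so $P_1$ need not lie in a single region of $\widehat\sigma$ even if $\widehat\sigma$ avoided it. Your closing alternative (rule out interleaving on level 3 via the two apexes and propagate downward along short monotone connections) is much closer to the paper's actual argument, but as sketched it still has a gap: the down-pointing level-2 vertices have no edge to level 3 at all, so propagation from level 3 cannot reach them. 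The paper first separates level 3 via the apexes, then the \emph{up} level-2 vertices by planarity of the pendant edges, then invokes the $\prec_2$ constraints \emph{within each clip} to sandwich the remaining level-2 vertices between the up ones, and only then pushes the order down to level 1 along the pendant edges. It is the constraints, not connectivity inside the clips, that carry the argument at level 2.
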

	\begin{proof}
		First, consider the level planar drawing $\Gamma$ and let $v_1 \in P_1$ and $v_2 \in P_2$ be the unique level-$4$ vertices of these clips.
		We assume without loss of generality (by swapping names of $P_1$ and $P_2$) that $v_1$ is drawn to the left of $v_2$.
		We immediately see that on the levels $2$ and $3$ all vertices of the clip $P_1$ must be drawn before all vertices of the clip $P_2$.
		For the level-$1$ vertices of a clip it holds that they must be drawn in the same order as their adjacent vertices on level $2$.
		Therefore, the statement also follows for the level-$1$ vertices.
	\end{proof}
	
	For each $j \in [3m]$, the number $s_j \in S$ shall have its own $s_j$-clip and be forced to be drawn in a mountain chain, shown in \Cref{fig:mountanWhole}.
	Each mountain chain represents one of the final $m$ ``buckets'' of size $B$.
	An assignment of $k$-clips onto these mountain chains represents
	a partition of $S$ into triplets.
	
	\begin{definition}[Mountain Chain]\label{def:mountainchain}
		A \emph{mountain} is a path on $5$ vertices $(v_1,v_2,v_3,v_4,v_5)$
		where $\gamma(v_1) = 1$, $\gamma(v_2) = 2$, $\gamma(v_3) = 3$,
		$\gamma(v_4) = 2$, and $\gamma(v_5) = 1$.
		
		A \emph{mountain $k$-chain} is created by combining $k$ mountains
		in a chain by identifying, for each pair of consecutive mountains,
		$v_5$ of the first one with $v_1$ of the second one.
		After forming the chain, we add two walls that consist of a path from level $1$ to level $4$.
		The level-$1$ vertices of these walls are identified with the leftmost and rightmost level-$1$ vertices of the mountain chain.
		We can have sequences of mountain chains if we identify
		the second and the first wall of two mountain chains.
	\end{definition}
	We observe that each mountain $k$-chain contains $k+1$ \emph{valleys}, one between each pair of adjacent mountains, and another between the outer mountains and the adjacent walls.
	
	\begin{figure}[tb]
		\centering
		\includegraphics[page=2]{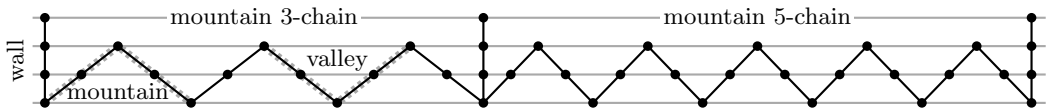}
		\caption{
			A drawing of mountains in a sequence of two mountain chains.
			The mountain chains are made up of mountains glued together
			by their first-level vertices and with a wall from
			level $1$ to level $4$ on both ends.
			Observe that the valleys are formed between adjacent mountains
			and between the left-, respectively, rightmost
			mountain and the neighboring walls.
		}%
		\label{fig:mountanWhole}
	\end{figure}
	
	Having all basic building blocks at hand,
	we are ready to prove the hardness of \CLP.
	
	\begin{theorem}
		\CLPlong is \NPh for four levels.
		\label{thm:CLP-NPhard}
	\end{theorem}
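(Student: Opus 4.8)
The plan is to reduce from \textsc{3-Partition} in the restricted form guaranteed by~\cite{GJ75}: we are given a multiset $S=\{s_1,\dots,s_{3m}\}$ of positive integers with $\sum_j s_j=mB$ and $B/4<s_j<B/2$ for every~$j$, and we must decide whether $S$ can be split into $m$ triples each summing to~$B$. From such an instance I build a constrained level graph~$\G$ of height~$4$ as follows. I place $m$ mountain $B$-chains (\cref{def:mountainchain}) in a horizontal sequence, identifying the right wall of each chain with the left wall of the next, so that there are $m+1$ walls $W_0,\dots,W_m$ and $m$ \emph{chambers}, where the chamber between $W_{i-1}$ and $W_i$ is the interior of the $i$-th mountain $B$-chain; see \cref{fig:mountanWhole}. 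For each $s_j\in S$, I add an $s_j$-clip (\cref{def:clip}) as a separate connected component. Finally, I add constraints placing the level-$4$ vertex of every clip strictly between the level-$4$ vertices of~$W_0$ and~$W_m$ in~$\prec_4$. Since a clip uses all four levels and each wall is a path from level~$1$ to level~$4$, any clip edge crossing a wall would violate planarity, so these constraints force each clip to lie entirely inside a single chamber. The construction has height~$4$, polynomial size, and is polynomial-time computable, so it remains to show that~$\G$ is constrained level planar iff $S$ is a yes-instance.

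Everything reduces to a \textbf{capacity lemma}: in any $4$-level planar drawing of a single mountain $B$-chain with its two walls, a family of clips can be drawn inside it exactly when their sizes sum to at most~$B$, in which case the clips are pairwise side by side. Granting this, the ``if'' direction is routine: given a valid partition $S_1,\dots,S_m$, draw each mountain $B$-chain in its (essentially unique, wall-determined) drawing and place the three clips of $S_i$ consecutively inside chamber~$i$; their sizes sum to exactly~$B$, so the capacity lemma yields a crossing-free placement, and clips in different chambers cannot interfere because chambers are wall-separated. For the ``only if'' direction, take a constrained level planar drawing of~$\G$. The added constraints together with the wall structure put every clip inside exactly one chamber; by \cref{lem:disjoint_clips} the clips inside one chamber are pairwise non-crossing, hence side by side, so the capacity lemma bounds the total size of the clips in each chamber by~$B$. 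Summing over all $m$ chambers gives $mB\le\sum_{\text{chambers}}(\text{total size})\le mB$, so the clips of every chamber sum to exactly~$B$. Since each $s_j>B/4$, no chamber can hold four clips; since each $s_j<B/2$, a chamber holding one or two clips would have total size below~$B$; hence every chamber holds exactly three clips, and these $m$ triples form the desired partition of~$S$.

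The crux, and the step I expect to be the main obstacle, is the capacity lemma, which is precisely where the geometry of \cref{fig:clip} and \cref{fig:mountanWhole} enters. The intuition is that a $k$-clip ``consumes'' exactly $k$ of the $B$ mountains: the level-$4$ vertex of the clip together with its edges to the $k{+}1$ level-$3$ vertices, combined with the constrained left-to-right order of the clip's level-$2$ vertices, forces those $k{+}1$ level-$3$ vertices into $k{+}1$ consecutive ``peak gaps'' of the chain (a peak gap being the region between two consecutive mountain peaks, or between a peak and a wall), and each of the $k$ level-$1$ vertices of the clip lying between two consecutive such level-$3$ vertices must then be routed into the interior of the unique mountain between them. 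Thus two side-by-side clips cannot share a mountain, which gives the necessity direction ``total size $\le B$''; conversely, clips whose sizes sum to at most~$B$ can be nested one after another into disjoint blocks of consecutive mountains, giving sufficiency. Turning this picture into a proof requires a careful case analysis showing that inside a chamber a $k$-clip has essentially no other drawing -- here one uses that the mountain chain is wall-rigid and that the clip's level-$2$ order is fixed by constraints -- and this planarity bookkeeping is the technical heart of the argument; the remainder is the counting above.
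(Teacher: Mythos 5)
Your proposal follows exactly the paper's reduction: the same mountain $B$-chain/$s_j$-clip construction, the same $\prec_4$ constraints confining the clip apex between $w_0$ and $w_m$, and the same counting via the $B/4<s_j<B/2$ bounds. What you package as the ``capacity lemma'' is precisely what the paper establishes through its sequence of claims (unique drawing of the chain, one clip edge per mountain via \cref{lem:disjoint_clips}, each clip confined to one chamber), and your intuitive sketch of it matches their argument, so the approach is the same even though you leave those details as acknowledged future work.
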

	
	\begin{proof}
		We describe a polynomial-time reduction from \textsc{$3$-Partition}.
		We are given integers~$m$, $B$ and a multiset of integers $S$ of $n=3m$ positive integers such that their sum is~$mB$.
		We create the \CLP instance $\G = (G, \gamma, (\prec_i)_{i\in [4]})$ as follows.
		For each $s_j \in S$ where~$j \in [3m]$, create an $s_j$-clip.
		We also create a sequence of $m$ mountain $B$-chains $C_1,\dots,C_m$.
		Let $w_0,\dots,w_m$ denote (in order)
		the vertices of the walls on level~4.
		We add the constraint $w_0 \prec_4 w_m$ and, for each $s_j$-clip,
		we add the constraints $w_0 \prec_4 x$ and $x \prec_4 w_m$,
		where $x$ is the level-4 vertex of the $s_j$-clip.
		
		First, let us prove that, given a Yes-instance of
		\textsc{$3$-Partition}, the constrained level planar
		obtained by our reduction admits a level planar drawing on
		four levels.
		We draw the mountain chain naturally from left to right in the
		ordering which satisfies $w_0 \prec_4 w_m$.
		Consider a solution of the \textsc{$3$-Partition} instance.
		For each triplet $(a,b,c)$ in that solution,
		we take the respective $a$-clip, $b$-clip, and $c$-clip,
		and draw them all within a single mountain $B$-chain.
		In particular, we take one clip and draw it such that
		the edges between the levels~$1$ and~$2$ occupy the spaces below
		a set of consecutive mountains, and the edges between
		the levels~$2$ and~$3$ occupy the valleys around these mountains.
		This way the $a$-clip fills $a$ of the mountains,
		the $b$-clip fills $b$ of them, and $c$-clip fills $c$ mountains.
		Together, their edges on the levels~$1$ and~$2$ occupy
		exactly the $B$ mountains of a mountain $B$-chain as $a+b+c=B$.
		See \cref{fig:mountainClipConsB} for a simplified example of such a drawing.
		The $m$ triplets fit into the $m$ mountain $B$-chains as described above.
		This drawing is clearly planar and satisfies all the constraints, concluding the first part of the proof.
		\begin{figure}
			\centering
			\includegraphics[page=3]{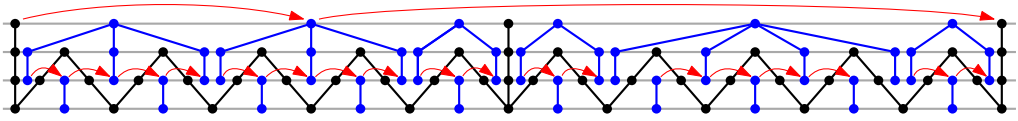}
			\caption{
				A level planar drawing of clips of size $2,2,1,1,3,1$
				into mountain chains of size $5$.
				For clarity, we show the constraints that force clips
				between first and last wall only for a single clip.
			}%
			\label{fig:mountainClipConsB}
		\end{figure}
		
		Second, we prove a series of claims showing that if there is
		a level planar drawing on four levels,
		then it must look exactly like we described above.
		Hence, such a planar drawable constrained level graph
		represents a Yes-instance of \textsc{$3$-Partition}.
		We start with the drawing of the mountain chains.
		
		\begin{claim}
			There is a unique level planar drawing of the mountain $B$-chains $C_1,\dots,C_m$.
		\end{claim}
		\begin{claimproof}
			The mountains of the mountain chains together
			form of a path of length $m \cdot 4B + 1$,
			which are arranged on the levels in order
			$1, 2, 3, 2, 1, \dots, 2, 1$ and that can be drawn in two ways~--
			left-to-right or right-to-left.
			Then, as the walls go up to level $4$,
			they can only be drawn on the upper side of that long path,
			so there is a unique space for each wall to be drawn.
			Finally, due to the $w_0 \prec_4 w_m$, only one of
			the two ways to draw the path and the walls remains.
		\end{claimproof}
		
		\begin{claim}\label{clm:single_edge}
			In any level planar drawing,
			any mountain contains at most one
			edge (between levels~1 and~2) of any clip.
		\end{claim}
		\begin{claimproof}
			There is a unique way to draw a clip because
			the constraints give a total order on its level-$2$ vertices.
			Notice that in this drawing every edge~$e$ between the level~1
			and~2 is next to two edges between levels~2 and~3 of the same clip.
			These edges between levels~2 and~3 cannot be drawn ``under''
			the mountain so one of them ends up to the left and one
			to the right of the mountain containing~$e$.
			Therefore, no two edges between levels~1 and~2
			of the same clip can end up ``under'' the same mountain.
			
			Assume that two edges between levels~1 and~2
			of different clips end up under the same mountain.
			Then their neighboring edges between the levels~2 and~3
			would interleave each other,
			which contradicts contradicting \cref{lem:disjoint_clips}.
		\end{claimproof}
		
		\begin{claim}
			In any level planar drawing,
			every $s_j$-clip  ($j \in [3m]$) occupies mountains of
			a single mountain $B$-chain and is drawn in such a way that
			each of its edges between levels~1 and~2 occupies one mountain.
		\end{claim}
		\begin{claimproof}
			The unique drawing of the sequence of mountain $B$-chains
			and the constraints $w_0 \prec_4 x$ and $x \prec_4 w_m$
			for the level-$4$ vertex of each $s_j$-clip imply that
			the clips must to be drawn in the same space as the mountains.
			The walls split the drawing into a right-of-the-wall part
			and a left-of-the-wall part,
			so each clip can be drawn onto at most a single mountain $B$-chain.
			The only space to the draw level-$1$ vertices is ``under''
			the mountains, hence, the edges between levels~1 and~2
			of each clip occupy mountains.
			\cref{clm:single_edge} showed that each mountain contains at most one such edge.
		\end{claimproof}
		
		By design, the number of edges between levels~1 and~2 within
		the clips is equal to the number of mountains within
		the mountain $B$-chains.
		In the claims, we proved that each mountain holds exactly one 
		such edge and that each clip falls into a single mountain $B$-chain.
		As the clips represent positive numbers that are strictly between
		$B/2$ and $B/4$ it is clear that each mountain $B$-chain contains
		exactly $3$ clips, which directly corresponds to a solution
		of the corresponding \textsc{$3$-Partition} instance.
	\end{proof}

	\section{A Linear-Time Algorithm for 2-Level \CLPlong}
	\label{sec:clp-2lvl}
	
	In this section, we present an optimal linear (in the input size) time algorithm for 2-level
	\CLP, which serves as a warm-up
	for our much more involved approach for 3-level \CLP (discussed in
	\cref{sec:clp-3lvl}).
	We first discuss how to handle isolated vertices and
	then how to handle connected input graphs.
	Finally, we observe that connected components of a disconnected input graph can be handled individually, which allows us to state our main algorithm.
	
	\paragraph{Handling isolated vertices.}
	Observe that a 1-level \CLP instance $\mathcal{G} = (G, \gamma, (\prec_i)_{i \in [1]})$
	contains no edges, so that all drawings of~$\mathcal G$ are crossing-free by default.
	Hence, finding a constrained level drawing of~$\mathcal G$ is equivalent to finding a linear
	extension of the partial order $\prec_1$, which can be done in linear time by performing a
	topological sorting of the directed graph corresponding to~$\prec_1$ (recall that a topological sorting of a directed (multi)graph can be performed in time that is linear in its number of vertices and edges; here the number of edges corresponds to the number of constraints).
	Following this idea, any constrained level graph can be transformed into an equivalent
	simpler instance by removing its isolated vertices:
	
	\begin{lemma}[{\hspace{1sp}\cite[Lemma 4]{clp-vc}}]\label{lem:isolated_vertices}
		Let~$\mathcal{G} = (G, \gamma, (\prec_i)_i)$ be a constrained level graph, let~$G'$ be
		the subgraph of~$G$ induced by the non-isolated vertices $V'$,
		and let~$\gamma'$ and~$(\prec'_i)_i$ be the restrictions of~$\gamma$ and~$(\prec_i)_i$
		to~$V'$, respectively.
		There is an algorithm that, given~$\mathcal G$ and a constrained level planar drawing~$\Gamma'$
		of $\mathcal{G}'=(G',\gamma',(\prec'_i)_i)$, constructs a constrained level
		planar drawing of~$\mathcal G$ in linear\footnote{The authors of \cite{clp-vc} did not
			analyze the polynomial factors in the runtimes of their algorithms and just stated that the
			runtime of \cite[Lemma 4]{clp-vc}] is polynomial. However, it is easy to see that their
			approach only requires linear time.
		} (in the size of the input) time.
	\end{lemma}
	
	To give a more complete picture, we include the proof idea:
	
	\begin{proof}[Proof of \cref{lem:isolated_vertices}]
		For each level~$V_\ell$, the drawing~$\Gamma'$ induces a linear order~$<_\ell'$ on
		$V_\ell'$ that extends $\prec'_\ell$.
		Klemz and Sieper~\cite{clp-vc} showed that the order $<_\ell=(<_\ell'\cup \prec_\ell)$ is
		acyclic and that, hence, the desired drawing can be obtained by computing a linear
		extension of  $<_\ell$ for each level~$V_\ell$.
		This can be done by performing a topological sorting of the directed
		graph corresponding to
		$\prec_\ell$ augmented with the directed edges corresponding to the transitive reduction
		(i.e., not including transitive edges) of~$<_\ell'$.
		Note that the transitive reduction of~$<_\ell'$ contains only $\mathcal O(V_\ell)$ edges,
		which are given as part of the encoding of~$\Gamma'$.
		Thus, the {\it total} runtime for computing the topological sortings on all levels is indeed
		linear in the size of the input ($\mathcal G$ and $\Gamma'$).
	\end{proof}
	
	\begin{figure}[tb]
		\centering
		\includegraphics{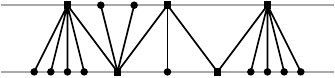}
		\caption{A 2-level drawing of a caterpillar. The vertices of its spine are drawn as squares.}
		\label{fig:caterpillar_level}
	\end{figure}
	
	\paragraph{Handling connected graphs.}
	A \emph{caterpillar} is a tree in which all vertices are within distance at most one of a
	central path (in other words, removing all degree-1 vertices results in a graph that is a path),
	which is called \emph{spine}.
	For a caterpillar $G$ with at least three vertices, let $p(G)$ be the subgraph of $G$ induced
	by $\{v \in V(G)\mid \mathrm{deg}(v) \geq 2\}$, i.e., $p(G)$ is the spine of $G$.
	A drawing of a caterpillar on two levels can be seen in
	\cref{fig:caterpillar_level}.
	Caterpillars are precisely the connected graphs that can be embedded on two
	levels~\cite{HS71} and we can construct constrained level planar drawings of them
	efficiently:
	
	\begin{lemma}
		\label{lem:catInPolyTime}
		Let $\mathcal{G} = (G=(V,E), \gamma, (\prec_i)_{i \in [2]})$ be a constrained level graph
		where $G$ is a connected.
		There is an algorithm that, given~$\mathcal G$, either creates a constrained level planar
		drawing of $\mathcal{G}$ or determines that such a drawing does not exist in
		linear (in the size of the input) time.
	\end{lemma}
	
	\begin{proof}
		If $G$ is not a caterpillar (which can be tested in linear time), $\mathcal{G}$ is not
		(constrained) level planar.
		So assume that $G$ is a caterpillar.
		We may assume that~$G$ has at least three vertices, otherwise a constrained level planar
		drawing is trivial to construct in constant time.
		
		There are two possible options to draw $p(G)=(v_1,v_2,\dots,v_k)$ 
		on two levels in a level planar fashion: as a zigzag path alternating between
		the two levels with either $v_1v_2$ as the left-most edge and $v_{k-1}v_k$ as the right
		most edge or vice versa.
		Suppose we want to test whether a constrained level planar drawing of the former form exists (testing whether a
		drawing of the latter form exists can be done symmetrically).
		We will do so by performing two topological sortings, one in a directed graph obtained
		by augmenting the graph~$H_1$ corresponding to~$\prec_1$, the other in a directed graph
		obtained by augmenting the graph~$H_2$ corresponding to~$\prec_2$.
		The desired drawing (if it exists) is then obtained by simply placing the vertices of each
		level in the order given by its topological sorting.
		Without loss of generality, we may assume that $v_1\in V_1$.
		
		To ensure that $p(G)$ is drawn crossing-free in the desired zigzag fashion, we add the
		edges $(v_1,v_3),(v_3,v_5),(v_5,v_7),\dots$ to $H_1$ and we add the edges
		$(v_2,v_4),(v_4,v_6),(v_6,v_8),\dots$ to $H_2$.
		To ensure that the remaining edges also have to be drawn in a crossing-free fashion,
		we observe that for each $i\in \{2,3,\dots,k-1\}$, every neighbor $u$ of the vertex
		$v_i$ has to be drawn between the vertices $v_{i-1}$ and $v_{i+1}$ on the level that does
		not contain~$v_i$.
		Similarly, every neighbor $u$ of $v_1$ has to be drawn on $L_2$ and to the left of
		$v_2$ (if it exists) and every neighbor $u$ of $v_k$ has to be drawn on the level that does
		not contain~$v_k$ and to the right of $v_{k-1}$ (if it exists).
		It is easy to see that these conditions are not only necessary, but also sufficient to
		guarantee a crossing free level drawing.
		Moreover, they can be enforced by adding at most two constraints per neighbor $u$ of a spine vertex.
		So overall, it suffices to augment $H_1$ and $H_2$ with at most $2n$ edges, where
		$n$ is the number of vertices in $G$, to ensure that the drawing corresponding to the two
		topological orderings (if they exist) is crossing-free and realizes $p(G)$ in the desired
		fashion.
		Moreover, by definition of $H_1$ and $H_2$, the drawing also satisfies all constraints,
		i.e., it is constrained level planar.
		Conversely, if at least one of our two augmented directed graphs~$H_1',H_2'$ admits
		no topological ordering, then there is no constrained level planar drawing of~$\mathcal G$.
		
		Given that the sizes of $H_1'$ and $H_2'$ are both linear in the size of~$\mathcal G$,
		the two topological sortings can be obtained in the desired runtime.
	\end{proof}
	
	\paragraph{Main algorithm.}
	The fact that each edge of a 2-level planar drawing spans all (two) levels implies that the
	connected components can be handled individually (\cref{lem:catsDontCross}), which leads
	to our linear-time algorithm for 2-level CLP (\cref{thm:clp-2lvl}).
	
	\begin{lemma}\label{lem:catsDontCross}
		Let $\mathcal{G} = (G, \gamma)$ be a 2-level graph, and let $A$ and $B$ be two
		distinct 
		connected components of $G$, both containing at least one edge.
		Then for each level drawing of~$\mathcal G$ there exists an unbounded y-monotone curve that separates $A$ from $B$ (i.e., $A$ lies to the left of $B$ on every level or $B$ lies to the left of $A$ on every level).
	\end{lemma}
	
	\begin{proof}
		Let $\Gamma$ be a level planar drawing of $\mathcal{G}$,
		and let $e = vu \in E(A)$.  Then $e$ splits the horizontal strip bounded by $L_1$
		and $L_2$ into a
		left and a right part, and all paths in $G$ from the left to the right part
		have to intersect $u$ or $v$.  Given that $A$ and $B$ are disjoint, it follows that $B$
		must lie entirely to the
		left or to the right of $e$.
		A symmetric statement applies to each edge $e'\in E(B)$.
		In combination, this yields the claim.
	\end{proof}
	
	\begin{theorem}
		\label{thm:clp-2lvl}
		There
		exists an algorithm that, given a constrained 2-level graph
		$\mathcal{G} = (G, \gamma,$ $(\prec_i)_{i\in [2]})$, either computes a
		constrained level planar drawing of $\mathcal{G}$ or determines that such a drawing does
		not exist in linear (in the size of the input)
		time.
	\end{theorem}
	\begin{proof}
		In view of \cref{lem:isolated_vertices}, we may assume that~$G$ contains no isolated
		vertices, as they can be handled in a linear time postprocessing step.
		We begin by computing the connected components, denoted by $C^1,C^2,\dots,C^k$, of
		$G$, which is easily done in linear time via BFS.
		Then we create a constrained level graph $\mathcal G^j=(C^j,\gamma^j,(\prec_i^j)_{i\in [2]})$
		for each connected component~$C^j$, where $\gamma^j$ and $(\prec_i^j)i$ are the
		restrictions of $\gamma$ and $(\prec_i)_i$ to the vertices of $C^j$, respectively.
		This can be done in linear time by sweeping the set of constraints once.
		While doing so, we also create (in linear time) a directed multigraph~$H$ with a node for
		each connected component~$C^j$ and with a directed edge from a component $C^v$ to a
		component~$C^u$ for each constraint $v\prec_i u$ with $i\in [2], v\in C^v, u\in C^u$.
		
		In view of \cref{lem:catsDontCross}, a constrained level planar drawing of $\mathcal G$
		exists if and only if (a) $H$ is acyclic and (b) each of the graphs $\mathcal G^j$ admits a
		constrained level planar drawing.
		Condition (a) can be tested by applying a topological sorting algorithm on $H$ in linear time.
		Condition (b) can be tested by applying \cref{lem:catInPolyTime} for each $\mathcal G^j$.
		Given that the graphs $\mathcal G^j$ partition $\mathcal G$, this takes linear time in total.
		Finally, to construct the desired drawing (if it exists), we simply arrange the drawings of the
		graphs $\mathcal G^j$ in the order given by the topological sorting of~$H$.
	\end{proof}

	\section{Tractability of 3-Level \CLPlong}
	\label{sec:clp-3lvl}

	In this section we show that, given a constrained level graph
	$\G = (G, \gamma, (\prec_i)_{i\in [3]})$ of height~3,
	we can decide in polynomial time whether \G
	admits a constrained level planar drawing.
	We assume that \G is level planar (which can be tested in linear
	time~\cite{DBLP:conf/gd/HeathP95, DBLP:conf/gd/JungerLM98})
	and proper (otherwise we subdivide edges that connect levels 1 and~3).
	Besides referring to levels 1, 2, and~3 by their indices,
	we also call them \emph{bottom} level, \emph{middle} level, and \emph{top} level, respectively.
	Furthermore, we call the pair of bottom and middle level \emph{lower band}
	and the pair of middle and top level \emph{upper band}.
	
	Throughout this section, we successively add new constraints to the
	given constrained level graph~\G that we deduce from the structure
	of~$G$, $\gamma$, and the current set of constraints.  In the end, this yields a
	total order of the vertices for each of the three levels 
	that corresponds to a
	constrained level planar drawing of~\G, or we come to the conclusion
	that \G does not admit such a drawing.  In the very beginning and
	whenever we add new constraints, we exhaustively add the following
	\emph{implicit} constraints:
	\begin{itemize}%
		\item Transitivity:
		For every triplet $(a, b, c)$ of vertices on the same level,
		if there is a constraint from $a$ to $b$ and from $b$ to $c$,
		then there is also a constraint from $a$ to $c$.
		Formally, \\
		$\forall a,b,c\in V(G)$ with $i := \gamma(a) = \gamma(b) = \gamma(c) \colon (a \prec_i b) \land (b
		\prec_i c) \Rightarrow (a \prec_i c)$.
		\item Planarity:
		For every pair $(ab, cd)$ of edges such that $a$ and $c$,
		and $b$ and $d$ lie on the same level,
		if there is a constraint from $a$ to $c$,
		then there is also a constraint from $b$ to $d$ or $b = d$.
		Formally,
		$\forall ab, cd \in E(G)
		\text{ with } i := \gamma(a) = \gamma(c)
		\text{ and }  j := \gamma(b) = \gamma(d) \colon$
		$(a \prec_i c) \Rightarrow (b \prec_j d) \lor (b = d)$.
	\end{itemize}
	The transitivity constraints ensure that the orderings
	$(\prec_i)_{i\in[3]}$ remain transitive while the planarity
	constraints can be added without violating realizability, as they
	need to be respected in every constrained level planar drawing. The
	propagation of these constraints is quite useful as it can dictate the
	relative positions of vertices that are initially unrelated, see
	\cref{fig:planar-leaves}.
	Note that we ignore the runtime of adding implicit constraints
	throughout this section and get back to it in \cref{thm:clp-3lvl}.
	
	In this section, when we state asymptotic running times such as
	``linear'' or ``quadratic'', we refer to the number of vertices
	of~$G$.  Note that, due to the constraints, the size of the input can
	already be quadratic (in the number of vertices).
	
	\begin{figure}[t]
		\centering
		\begin{subfigure}[t]{.29\textwidth}
			\centering
			\includegraphics[page=1]{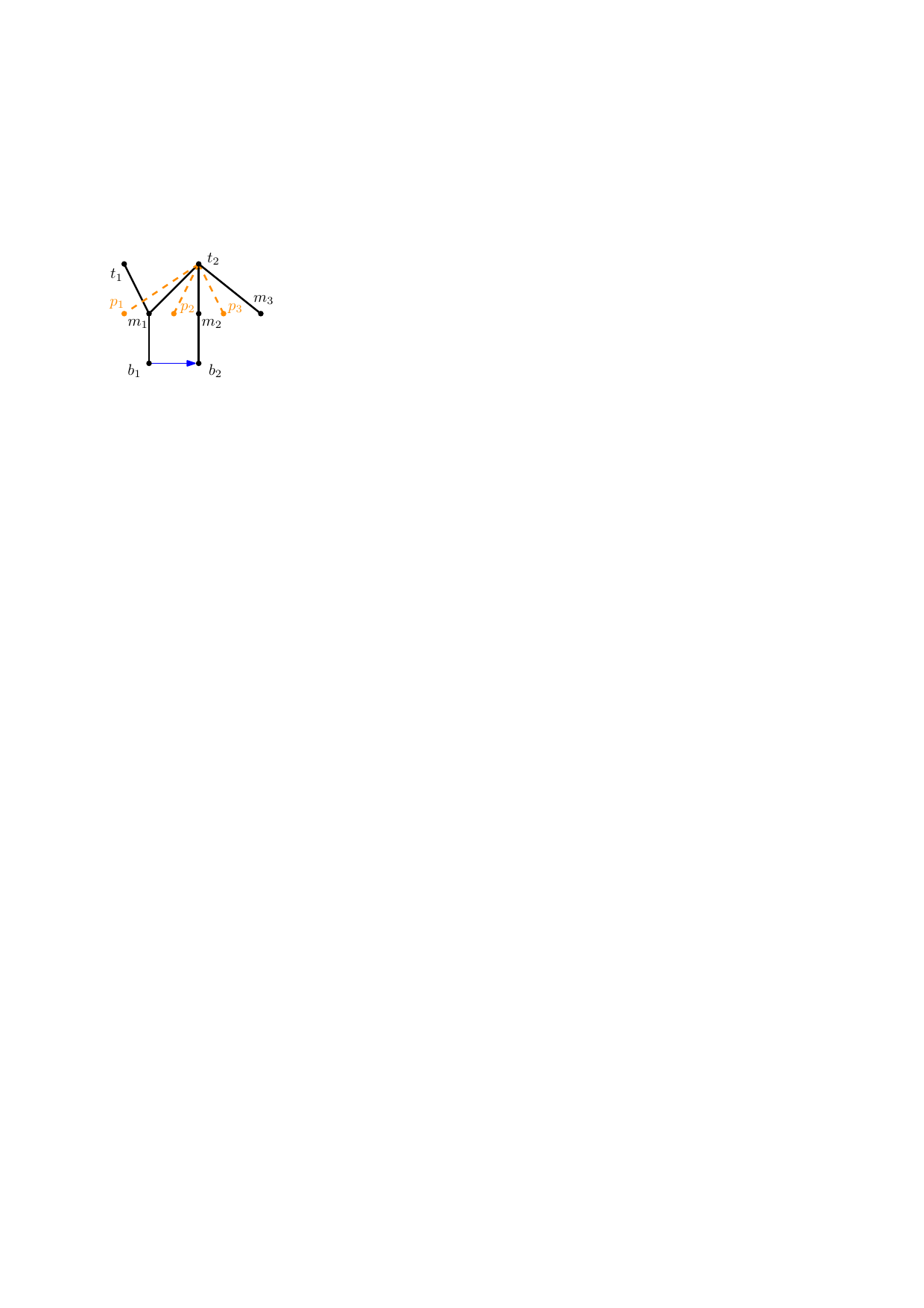}
			\subcaption{Initial situation with the given constraint
				$b_1 \prec_1 b_2$ but without implicit constraints.}
			\label{fig:planar-leaves-a}
		\end{subfigure}
		\hfill
		\begin{subfigure}[t]{.34\textwidth}
			\centering%
			\includegraphics[page=3]{clp3_leaves}
			\subcaption{Constraint $b_1 \prec_1 b_2$ and edges
				$b_1m_1$ and $b_2m_2$ yield
				$m_1 \prec_2 m_2$; edges $m_1t_1$ and $m_2t_2$ then yield
				$t_1 \prec_3 t_2$.}
			\label{fig:planar-leaves-b}
		\end{subfigure}
		\hfill
		\begin{subfigure}[t]{.29\textwidth}
			\centering
			\includegraphics[page=4]{clp3_leaves}
			\subcaption{Constraint $t_1 \prec_3 t_2$ and edges
				$t_1m_1$ and $t_2m_3$ yield
				$m_1 \prec_2 m_3$.
				Hence, $m_3$ cannot be placed at~$p_1$.}
			\label{fig:planar-leaves-c}
		\end{subfigure}
		
		\caption{Constrained level graph with only one constraint
			($b_1 \prec_1 b_2$).  Among the three possible positions $p_1$,
			$p_2$, and $p_3$ to place $m_3$ relative to
			$m_1$ and $m_2$, position
			$p_1$ is excluded due to the implicit constraints ensuring
			planarity.}
		\label{fig:planar-leaves}
	\end{figure}

	\paragraph{Some simplifying assumptions.}
	For the remainder of this section, we assume that, for each level,
	the current set of constraints among the vertices on this level is acyclic.
	Otherwise, we reject the instance.
	Due to \cref{lem:isolated_vertices}, we can assume, without loss of
	generality, that~\G contains no isolated vertices.
	We now make some general observations for constrained level graphs
	that we will later apply to the concrete constrained level graph~\G
	that we are given as input.
	To this end, let $\J = (J, \gamma, (\prec_i)_{i\in [h]})$
	be a constrained level graph of height~$h$.
	
	We define the \emph{component--constraint graph $H_\J$} of \J
	as follows.
	The nodes of~$H_\J$
	are the connected components of~$J$, and there is an arc
	from a component~$C$ to a component~$D$ if there are vertices~$u$
	in~$C$ and~$v$ in~$D$ with $u \prec_i v$ for some level~$i$;
	see \cref{fig:hooking} (the terms ``hook chain'', ``hook anchor'', and ``hook piece'' will be explained later).
	We define, for a component $C$ of $J$ (that is, a node of $H_\J$), the graph $\J_C$
	as the subgraph of~\J induced by the vertices of~$C$.
	Similarly, we define, for a set~\C of components of~$J$ (that is, a set of nodes of~$H_\J$),
	the subgraph \emph{$\J_\C$} $= \bigcup_{C \in \C} \J_C$ of~\J.

	\begin{lemma}\label{lem:3lclp_strongly_connected}
		A constrained level graph \J admits a constrained level planar
		drawing if and only if, for every strongly connected component
		$\mathcal C$ of the component--constraint graph~$H_\J$,
		the corresponding subgraph $\J_\C$
		admits a constrained level planar drawing.
	\end{lemma}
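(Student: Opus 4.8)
The plan is to prove both directions; the forward direction is essentially trivial, so the work lies in the backward direction. For the forward direction, I would simply observe that if $\G$ admits a constrained level planar drawing $\Gamma$, then restricting $\Gamma$ to the vertices and edges of any induced subgraph $\G_\C$ yields a drawing that is still crossing-free, still level preserving, and still respects all constraints among vertices of $\G_\C$ (the orders $\prec_i$ restricted to $V(\G_\C)$ are just linearizations inherited from $\Gamma$). Hence each $\G_\C$ is constrained level planar.

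For the backward direction, the idea is to glue the drawings of the strongly connected components (SCCs) of $H$ side by side in an order consistent with the partial order that $H$ induces on its SCCs. First I would contract each SCC of $H$ to a single node, obtaining a DAG $H'$; fix a topological order $\C_1, \C_2, \dots, \C_t$ of $H'$. For each $\C_i$, take (by assumption) a constrained level planar drawing $\Gamma_i$ of $\G_{\C_i}$. The claim is that concatenating $\Gamma_1, \dots, \Gamma_t$ from left to right on each of the three levels — i.e., placing, for every level $\ell\in[3]$, all level-$\ell$ vertices of $\G_{\C_1}$ (in their order in $\Gamma_1$), then all level-$\ell$ vertices of $\G_{\C_2}$, and so on — produces a constrained level planar drawing of $\G$. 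Crossing-freeness is clear because the $x$-ranges used by distinct $\Gamma_i$ are disjoint, so no edge of $\G_{\C_i}$ can cross an edge of $\G_{\C_j}$, and edges within one $\Gamma_i$ were already crossing-free. Level preservation is immediate.

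The main point to verify — and the only genuine obstacle — is that this concatenation respects every constraint $u \prec_\ell v$ of $\G$. If $u$ and $v$ lie in the same SCC $\C_i$, then this constraint is internal to $\G_{\C_i}$ and is respected by $\Gamma_i$, hence by the concatenation. If $u \in V(\G_{\C_i})$ and $v \in V(\G_{\C_j})$ with $i \ne j$, then by definition of $H$ there is an arc in $H$ from the component of $u$ to the component of $v$, hence an arc from $\C_i$ to $\C_j$ in the contracted graph; since $i\ne j$ and $\C_1,\dots,\C_t$ is a topological order of $H'$, we get $i < j$, so all of $\G_{\C_i}$ is placed to the left of all of $\G_{\C_j}$ on every level, giving $u \prec_\ell v$ as required. (Here I would also note that if some SCC-level cycle forced $i<j$ and $j<i$ simultaneously, that would already make some $\G_\C$ non–constrained-level-planar, so this case does not arise; more simply, $H'$ is acyclic by construction.) Finally I would remark that the implicit transitivity and planarity constraints added earlier are automatically satisfied, since they are satisfied within each $\Gamma_i$ and are never generated across components that are unordered in $\prec_\ell$. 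This completes the argument; the construction is clearly computable in polynomial time, matching the ``constructively'' claim of \cref{thm:clp-summary}.
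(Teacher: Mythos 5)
Your proof is correct and follows essentially the same approach as the paper: contract the strongly connected components of $H$ to obtain an acyclic graph $H'$, topologically sort it, and concatenate the individual drawings from left to right in that order. The paper's proof is simply a terser version of the same argument, leaving the verification steps (crossing-freeness, level preservation, and constraint satisfaction across SCCs) to the reader.
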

	\begin{proof}
		Obviously, a constrained level planar drawing of \J
		contains, for each strongly connected component $\mathcal C$ of $H_\J$,
		a constrained level planar drawing of $\J_\C$.
		
		Now, given, for each strongly connected component $\mathcal C$ of $H_\J$,
		a constrained level planar drawing of $\J_\C$,
		we construct a constrained level planar drawing of \J.
		Let $H_\J'$ be the directed graph obtained from $H_\J$ by contracting
		each strongly connected component to a single node.
		Clearly, $H_\J'$ is acyclic. We sort $H_\J'$
		topologically and place the individual drawings of the strongly connected
		components according to that order from left to right.
	\end{proof}
	
	Hence, from now on we assume, without loss of generality,
	that the component--constraint graph~$H_\J$ of any level graph~\J
	is strongly connected.
	(Otherwise, we compute the component--constraint graph $H_\J$ of $\J$ in quadratic time
	and treat each strongly connected component of $H_\J$ individually.)
	In particular, this assumption holds for~\G.
	Since we assume that \G also does not contain
	any isolated vertices and is proper,
	every connected component contains at least one vertex on the
	middle level.
	For brevity, we call connected components just \emph{components}
	and if we speak of components of a constrained level graph
	$\J = (J, \gamma, (\prec_i)_{i\in [3]})$,
	we mean the components of~$J$.
	
	Next, we investigate
	some properties that are specific to constrained level planar drawings
	on three levels.
	Provided \G is constrained level planar,
	these properties will lead to further assumptions on the structure of \G.
	
	\paragraph{On the interaction of components.}

	\begin{figure}[tb]
		\centering
		\begin{subfigure}[t]{.7\textwidth}
			\centering
			\includegraphics[page=1]{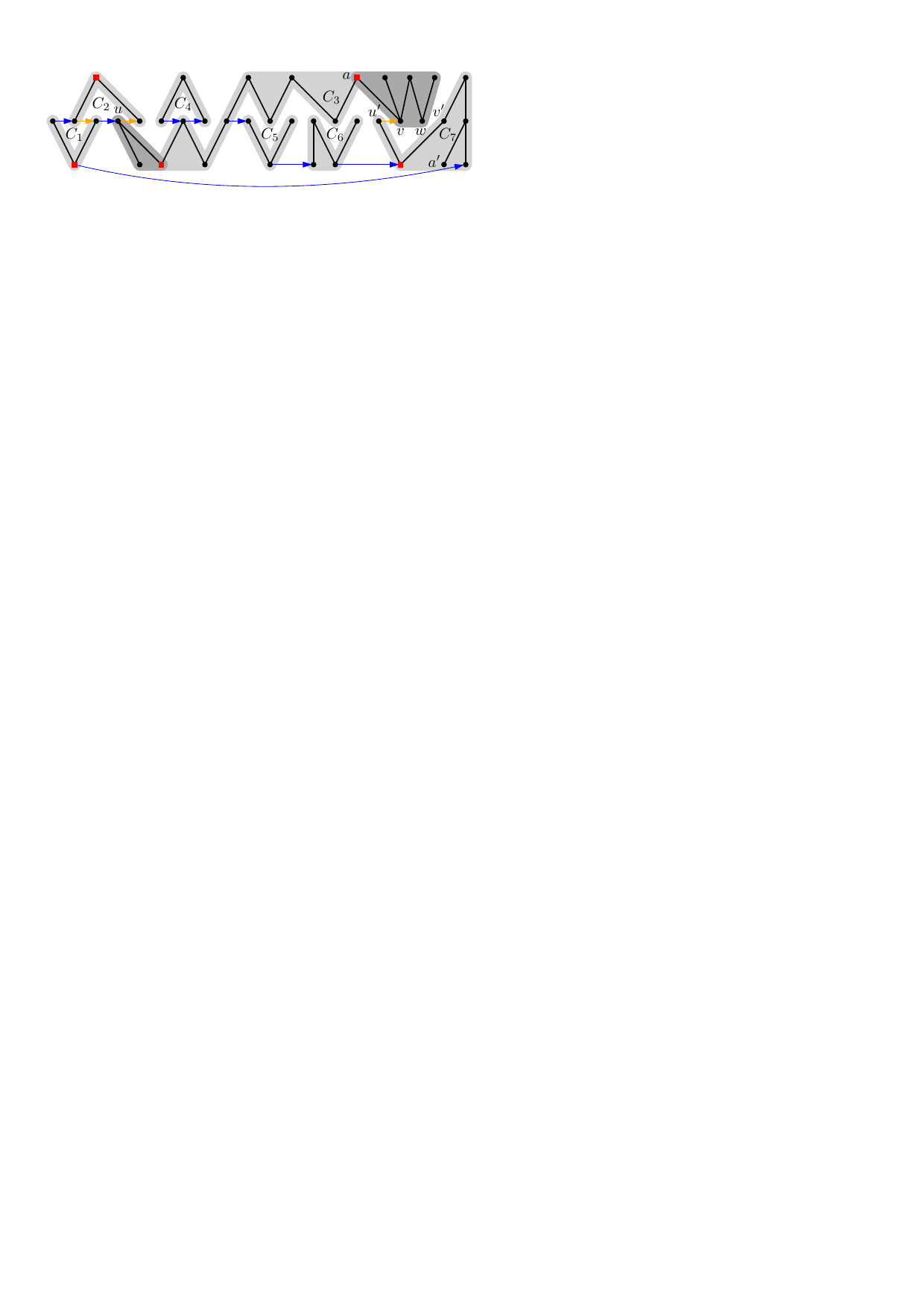}
			\subcaption{constrained level planar drawing of~\J:
				components $C_1,\dots,C_7$ in gray, hook anchors in red,
				and hook pieces of~$C_3$ in dark gray}
			\label{fig:hooking1}
		\end{subfigure}
		\hfill
		\begin{subfigure}[t]{.27\textwidth}
			\centering
			\includegraphics[page=2]{clp3_H}
			\subcaption{the corresponding
				com\-ponent--constraint graph~$H_\J$}
			\label{fig:hooking2}
		\end{subfigure}
		
		\caption{Finding the hook chain $\langle C_1, C_2, C_3, C_7
			\rangle$ (corresponding arcs marked in orange).}
		\label{fig:hooking}
	\end{figure}
	
	Let $\Gamma$ be a constrained level planar drawing of
	a proper constrained level graph $\J = (J, \gamma, (\prec_i)_{i\in [3]})$
	of height~3.
	Let $\sigma$ be the order of
	the vertices on the middle level of $\Gamma$ from left to right.
	For a subgraph $X$ of $J$, let the \emph{span} of $X$ in~$\Gamma$,
	$\Span_\Gamma(X)$, be the smallest interval of~$\sigma$ that contains all
	middle-level vertices of~$X$.
	A component~$C$
	\emph{encloses} a component~$C'$ if
	$\Span_\Gamma(C')\subsetneq\Span_\Gamma(C)$;
	we then call $C'$ an \emph{enclosed} component.
	For example, in \cref{fig:hooking}, $C_3$ encloses $C_4$, $C_5$, and $C_6$.
	Recall that a \emph{caterpillar} is a tree in
	which all vertices are within distance at most one of a central path,
	which we call its \emph{spine}; see \cref{fig:caterpillar_level} for an example.
	
	\begin{observation}
		\label{obs:enclosed-components-caterpillars}
		An enclosed component occupies exactly two levels
		and is a caterpillar.
	\end{observation}
	\begin{proof}
		An enclosed component~$C$ occupies at least two levels
		because we assume that the given constrained level graph has no isolated vertices.
		Further, $C$ cannot occupy three levels because otherwise,
		the component that encloses~$C$ could not
		have middle-level vertices to the left and to the right
		of the middle level vertices of~$C$.
		Caterpillars are precisely the connected graphs
		that can be embedded into two levels~\cite{HS71}.
	\end{proof}
	
	Let $C$ and $C'$ be two distinct components of~\J,
	and let $\Gamma$ be a constrained level planar drawing of~\J.
	We say that $C'$ \emph{hooks into}~$C$ \emph{from the right}
	(in~$\Gamma$)
	if there are vertices $u \ne v$ of
	$C$ and vertices $u' \ne v'$ of $C'$ such that $u$, $u'$, $v$, $v'$
	occur in this order on the middle level, and $u$ and
	$v'$ are the leftmost and rightmost middle-level vertices of
	$C \cup C'$, respectively.  For example, in \cref{fig:hooking},
	$C_7$ hooks into~$C_3$ from the right,
	but $C_3$ does not hook into~$C_7$ from the right.
	For a pair of components $(C,C')$ such that $C'$ hooks
	into~$C$ from the right,
	we introduce names for some of the vertices of $C$;
	the same names apply symmetrically to the vertices of~$C'$.
	We call the vertices in $C \cap \Span_\Gamma(C')$
	\emph{hook vertices} of~$C$ towards~$C'$.
	For example, in \cref{fig:hooking}, $u$ is a hook vertex
	of~$C_3$ towards~$C_2$,
	and $v$ and $w$ are hook vertices of~$C_3$ towards~$C_7$.
	Let the \emph{hook piece} of $C$ towards $C'$ be the subgraph of~$J$ induced by the closed
	neighborhood of the hook vertices of $C$ towards~$C'$.
	For example, in \cref{fig:hooking1}, the hook pieces of~$C_3$
	are marked in dark gray.
	
	\begin{observation}
		\label{obs:hooking}
		Let \J be a proper constrained level graph of height~3,
		and let $\Gamma$ be a constrained level planar drawing of~\J.
		If~\J has two components $C$ and $C'$ such that one hooks into
		the other from the right in $\Gamma$, then the following
		statements hold.
		\begin{enumerate}
			\item The hook piece~$p$ of $C$ towards $C'$ is a
			caterpillar on two levels.
			\item\label{item:hook-anchor}
			The hook piece~$p$ of $C$ towards $C'$ contains
			exactly one vertex with a neighbor outside of~$p$.
			\item All neighbors of the hook vertices of~$C$ towards $C'$
			lie on the same level
			(see, e.g., the neighbors of $v$ and $w$ in~\cref{fig:hooking1}).
			\item If all neighbors of the hook vertices of~$C$ towards $C'$
			lie on the top level, then all neighbors of the hook vertices
			of~$C'$ towards $C$ lie
			on the bottom level, and vice versa
			(see, e.g., the neighbors of $v$ and $w$ versus the neighbor
			of~$u'$ in~\cref{fig:hooking1}).
		\end{enumerate}
	\end{observation}
	\begin{proof}
		~
		\begin{enumerate}
			\item Consider the drawing $\Gamma'$ where we
			have removed $C$ except for~$p$.
			By definition, $C'$ encloses $p$ in $\Gamma'$.
			Due to \cref{obs:enclosed-components-caterpillars},
			$p$ is a caterpillar on two levels.
			\item By definition, the middle-level vertices of $p$
			and the middle-level vertices of $C \setminus p$
			form two disjoint intervals.
			Hence, their closed neighborshoods intersect
			in at most one vertex.
			\item Since $p$ is a caterpillar on two levels,
			all neighbors of the hook vertices of~$p$ lie on the same level.
			\item If the neighbors of the hook vertices of~$C$ towards $C'$
			and the neighbors of the hook vertices of~$C'$ towards $C$
			would lie on the same level, their incident edges would cross.
		\end{enumerate}
	\end{proof}
	
	The vertex in the hook piece of~$C$
	that has neighbors outside the hook
	piece is unique due to item~\ref{item:hook-anchor} of \cref{obs:hooking};
	we call it the \emph{hook anchor} of~$C$ towards~$C'$.
	In \cref{fig:hooking}, the vertices that are hook anchors
	are marked by red squares.
	Let a \emph{hook chain} be a maximal sequence
	$\langle C_1,\dots,C_k\rangle$ of $k\ge 2$
	components such that, for every $i\in [k-1]$,
	$C_{i+1}$ hooks into $C_{i}$ from the right.
	If a component $C$ neither hooks into another component from the right
	nor another component hooks into $C$ from the right
	and $C$ is not enclosed by another component,
	then we call $\langle C \rangle$ a
	\emph{degenerate} hook chain;
	see the red component in \cref{fig:3levelclp}.
	
	\begin{lemma}\label{clm:span-empty-intersection}
		Let \J be a proper constrained level graph of height~3,
		and let $\Gamma$ be a constrained level planar drawing of~\J.
		If $C$, $C'$, and $C''$ are three distinct components of~\J, then
		$\Span_\Gamma(C) \cap \Span_\Gamma(C') \cap \Span_\Gamma(C'') = \emptyset$.
	\end{lemma}
	
	\begin{proof}
		Suppose for a contradiction that there are three distinct
		components~$C$, $C'$, and $C''$ with
		$\Span_\Gamma(C) \cap \Span_\Gamma(C') \cap \Span_\Gamma(C'') \ne \emptyset$.  Without
		loss of generality, we can assume that~$C'$ and~$C''$
		have at least two vertices on the middle level each.
		Let~$v$ be a vertex of $C$ that lies in
		$\Span_\Gamma(C) \cap \Span_\Gamma(C') \cap \Span_\Gamma(C'')$.
		We can also assume that $v$ has a neighbor~$\bar{v}$ on the top level. 
		Let $u'$ be the middle-level vertex of~$C'$
		to the left of~$v$ that is closest to~$v$.
		Let $w'$ be the middle-level vertex of~$C'$ to the right
		of~$v$ that is closest to~$v$.  Such vertices must exist since
		$v \in \Span_\Gamma(C')$.  Let $u''$ and $w''$ be vertices of $C''$
		defined in the same way as $u'$ and $w'$.
		Due to the edge $v\bar{v}$,
		all paths from~$u'$ to~$w'$ must contain a vertex~$v'$ on
		the bottom level whose predecessor lies to the left of~$v$,
		while its successor lies to the right of~$v$.
		Symmetrically, all paths from~$u''$ to~$w''$ must contain a vertex~$v''$ with the same properties as~$v'$.
		Then, however, one of the edges incident to~$v'$
		must cross one of the edges incident to~$v''$.
		This yields the desired contradiction.
	\end{proof}
	
	We next prove some properties of hook chains
	in constrained level planar drawings on three levels.
	
	\begin{lemma}
		\label{lem:hook-chain-enclosed}
		In a constrained level planar drawing~$\Gamma$
		of a proper constrained level graph~\J of height~3,
		every component is either part of a hook chain or
		enclosed by a component that is part of a hook chain.
	\end{lemma}
	\begin{proof}
		Observe that \cref{clm:span-empty-intersection} implies
		that no enclosed component can enclose another component.
		
		Now, we prove that if there is a component~$C$
		that is not part of a hook chain,
		then $C$ is an enclosed component.
		Because $\langle C \rangle$ is not a degenerate hook chain
		and because of the strong connectivity of~$H_\J$,
		there is a component $C'$ such that
		$\Span_\Gamma(C) \cap \Span_\Gamma(C') \ne \emptyset$
		and $\Span_\Gamma(C) \nsupseteq \Span_\Gamma(C')$.
		If $\Span_\Gamma(C) \subseteq \Span_\Gamma(C')$,
		then $C$ is an enclosed component.
		Otherwise $C$ and $C'$ overlap.
		Without loss of generality, there is a vertex~$v$ in $C$
		on the middle level that lies to the left of $\Span_\Gamma(C')$.
		There is no constraint from a vertex of $C'$ to a vertex of $C$
		because otherwise $C'$ would hook into~$C$ from the right.
		However, because of the strong connectivity of~$H_\J$,
		there is, for some $\ell > 2$,
		a path $\langle C' = C_1, C_2, \dots, C_\ell = C \rangle$
		in $H_\J$ such that, for every $i \in [\ell - 1]$,
		it holds that $\Span_\Gamma(C_i) \cap \Span_\Gamma(C_{i+1}) \ne \emptyset$.
		Since $\Span_\Gamma(C') \cap \Span_\Gamma(C_2)$ is to the right
		of $\Span_\Gamma(C) \cap \Span_\Gamma(C')$,
		$\Span_\Gamma(C) \cap \Span_\Gamma(C_{\ell-1})$ is to the left
		of $\Span_\Gamma(C) \cap \Span_\Gamma(C')$,
		and, for every $i \in \{2, \dots, \ell - 1\}$,
		$\Span_\Gamma(C) \cap \Span_\Gamma(C') \cap \Span_\Gamma(C_i) = \emptyset$
		(due to \cref{clm:span-empty-intersection}), this is not possible.
		Hence, every component is part of a hook chain
		or is enclosed by a component of a hook chain.
	\end{proof}
	
	\begin{lemma}\label{lem:3lclp_hookchain1}
		A constrained level planar drawing~$\Gamma$
		of a proper constrained level graph~\J of height~3
		contains exactly one (possibly degenerate) hook chain.
		Every cycle in the component--constraint graph $H_\J$ contains at least one node that corresponds to
		a component that is part of the hook chain.
	\end{lemma}
	\begin{proof}
		First, we show that there is exactly one hook chain.
		There cannot be two different hook chains $\mathcal C_1$ and~$\mathcal C_2$
		because $\Span_\Gamma(\mathcal C_1) \cap \Span_\Gamma(\mathcal C_2) = \emptyset$,
		but then, if say $\mathcal C_1$ is to the left of $\mathcal C_2$,
		there cannot be a directed path from a component in $\mathcal C_2$
		to a component in $\mathcal C_1$ in $H_\J$,
		which contradicts the strong connectivity of~$H_\J$.
		Since there needs to be at least one component~$C$
		whose span is not contained in the span of another component,
		$C$ is part of a hook chain or $\langle C \rangle$
		is a degenerate hook chain.
		
		Finally, suppose that there is a cycle in $H_\J$ containing no
		node representing a component of the hook chain.
		Then, two consecutive components $C_i$ and $C_{i+1}$
		of the cycle may not hook.
		Hence, $C_{i+1}$ must be enclosed by $C_i$ (or vice versa).
		However, by the same argument $C_{i+1}$ must
		enclose a component $C_{i+2}$ (if it exists), which
		is, again, not possible by \cref{clm:span-empty-intersection}.
		If the cycle contains only two components, $C_1$ may
		enclose $C_2$ (or vice versa), but this would
		mean that $\langle C_1 \rangle$ (or $\langle C_2 \rangle$) is a degenerate hook chain.
	\end{proof}
	
	\begin{lemma}
		\label{lem:3lclp_hookchain2}
		Let $\Gamma_1$ and $\Gamma_2$ be two constrained level planar
		drawings of a proper constrained level graph \J of height~3
		whose hook chains are $\C_1$ and~$\C_2$, respectively.
		If $\C_1$ and $\C_2$ start in the same component~$C$
		and end in the same component $C'$, then $\C_1 = \C_2$.
		
		If \J admits a constrained level planar drawing
		with a hook chain starting
		in a component~$C$ and ending in a component~$C'$,
		we can determine this hook chain in linear time given~$C$, $C'$, and~$H_\J$.
	\end{lemma}
	\begin{proof}
		In a hook chain, for each pair of consecutive components,
		the later component has a constraint to the earlier component.
		Therefore, a hook chain from $C$ to $C'$ in~$\Gamma_1$
		(resp.~$\Gamma_2$) corresponds to a simple directed path~$\pi_1$
		(resp.~$\pi_2$) from~$C'$ to~$C$ in~$H_\J$.
		If $\pi_1=\pi_2$, then clearly $\C_1 = \C_2$.
		Now suppose for a contradiction that $\pi_1 \ne \pi_2$.
		Traverse $\pi_1$ and $\pi_2$ simultaneously until
		they differ for the first time.
		Let $\hat{C}$ be the last component that lies on both $\pi_1$ and $\pi_2$,
		and let $\hat{C}_1$ and~$\hat{C}_2$
		be its distinct successors on $\pi_1$ and $\pi_2$, respectively.
		In~$\Gamma_1$, $\hat{C}$ hooks into $\hat{C}_1$ from the right,
		and hence $\Span_{\Gamma_1}(\hat{C}) \cap \Span_{\Gamma_1}(\hat{C}_1)
		\ne \emptyset$.
		According to \cref{clm:span-empty-intersection},
		the vertices of $\Span_{\Gamma_1}(\hat{C}) \cap \Span_{\Gamma_1}(\hat{C}_1)$
		(i.e., the middle-level vertices of the hook pieces of $\hat{C}$ and $\hat{C}_1$)
		divide~$\Gamma_1$ into a left and a right part.
		Hence, every component distinct from $\hat{C}$ and $\hat{C}_1$
		lies either completely in the left or completely in the right part.
		Since there is a constraint from a vertex of $\hat{C}$
		to a vertex of $\hat{C}_2$,
		$\hat{C}_2$ lies in the right part.
		Repeating this argument, all components of $\pi_2$
		succeeding $\hat{C}$ lie in the right part,
		which, in particular, includes $C$.
		However, due to \cref{lem:3lclp_hookchain1,lem:hook-chain-enclosed},
		the first vertex on the middle level in $\Gamma_1$
		belongs to~$C$, which means that~$C$ lies in the left part~--
		a contradiction.
		
		We can find the unique simple path in $H_\J$ from~$C'$ to~$C$
		in linear time using a breadth-first search starting in~$C'$.
	\end{proof}
	
	Due to \cref{lem:3lclp_hookchain2}, it suffices to check,
	for a quadratic number of pairs of
	components $(C,C')$,  whether~\G admits a constrained level planar
	drawing with a hook chain starting in~$C$ and ending in~$C'$.
	We simply check each pairwise combination of components
	(including degenerate hook chains where $C = C'$).
	Therefore, from now on, we assume that we know the hook chain of~\G.
	
	\begin{lemma}
		\label{lem:3lclp_hookanchor}
		Given a proper constrained level graph $\J = (J, \gamma, (\prec_i)_{i\in [3]})$ of height~3 and a sequence of components of~\J
		that appears as the hook chain in a level planar drawing of \J,
		we can find in linear time four sets of vertices such that
		there is a constrained level planar drawing of \J
		whose set of hook anchors is one of these sets.
	\end{lemma}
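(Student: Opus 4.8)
Having fixed the (by \cref{lem:3lclp_hookchain2} unique) hook chain $\mathcal C=\langle C_1,\dots,C_k\rangle$, the plan is to show that in every constrained level planar drawing $\Gamma$ of~\G realizing $\mathcal C$, the set of hook anchors is — after possibly passing to another valid drawing — one of at most four sets that can be read off from $G$, $\gamma$, and $H$ in linear time. If $\mathcal C$ is degenerate ($k=1$) there are no hook anchors, so we output $\emptyset$; thus assume $k\ge 2$.

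First I would record the local picture at each overlap. In $\Gamma$, for $i\in[k-1]$, the hook vertices of $C_i$ towards $C_{i+1}$ (the middle-level vertices of $C_i$ inside $\Span(C_{i+1})$) and those of $C_{i+1}$ towards $C_i$ interleave on the middle level, so by planarity all of the former have their neighbours on one of the two remaining levels and all of the latter on the other (the observation preceding \cref{clm:span-empty-intersection}). Hence each overlap carries a binary \emph{orientation} $o_i\in\{\uparrow,\downarrow\}$, where $\uparrow$ means $C_i$'s hook piece uses the top level. The central structural claim is that the $o_i$ are not free: using \cref{clm:span-empty-intersection} (the overlaps of $C_i$ with $C_{i+1}$ and of $C_{i+1}$ with $C_{i+2}$ are disjoint intervals of $\sigma$, the first entirely to the left of the second) together with connectivity of $C_{i+1}$ and $y$-monotonicity, one should be able to show that the orientation of a single ``reference'' overlap forces the rest, so that only a constant number of orientation patterns are realizable for $\mathcal C$.

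Second, for a fixed realizable orientation pattern I would pin down the hook anchors. The hook anchor of $C_i$ towards $C_{i+1}$ is the unique vertex of the hook piece with a neighbour outside it, and the hook piece is the set of middle-level vertices of $C_i$ that lie to the right of (and including) $C_{i+1}$'s leftmost middle-level vertex together with their neighbours on the level prescribed by $o_i$. The difficulty is that this set, hence a priori the anchor, depends on where $C_{i+1}$ is drawn inside $C_i$. To remove this dependence I would pass to a \emph{canonical} valid drawing, e.g.\ one in which every component is pushed as far left along the middle level as the constraints and planarity permit; in such a drawing the leftmost middle-level vertex of $C_{i+1}$ is forced, so each hook piece and its anchor becomes determined. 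A single further binary ambiguity may survive at the two ends of the chain (where $C_1$ and $C_k$ carry only one hook piece), which together with the orientation choice accounts for the bound of four.

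Finally, computing an anchor set from $\mathcal C$ and an orientation pattern is, for each consecutive pair $(C_i,C_{i+1})$, a matter of walking $C_i$ from its non-overlapped side, guided by $\gamma$ and $o_i$, until the first vertex whose deletion separates the hook piece from the rest of $C_i$; this is linear in $|\G|$ (which already includes the transitivity/planarity closure), and only constantly many patterns need to be tried. The main obstacle is the structural claim of the second paragraph — simultaneously bounding the realizable orientation patterns by a constant and showing that passing to a canonical drawing fixes every hook piece. Both steps reduce to a careful planarity case analysis of how the hook pieces of three consecutive chain components can coexist, with \cref{clm:span-empty-intersection} as the central tool; keeping track of which endpoints lie in which band and on which side of which already-placed edge is where the real work lies.
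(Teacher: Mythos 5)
Your proposal is a plan rather than a proof: the two steps you single out in your final paragraph as ``the main obstacle'' — bounding the realizable orientation patterns by a constant, and showing that a left-pushed canonical drawing fixes every hook piece — are precisely the content that a proof needs, and you leave them open. Beyond that, the plan itself misses the key simplification that makes the paper's argument short. Since ``$C'$ hooks into $C$'' is \emph{defined} via a constraint $u'\prec_2 v$, the sets $V_{C\gets C'}$ (middle-level vertices of $C$ that receive a constraint from $C'$) and $V_{C'\to C}$ are always nonempty, and their members are forced to be hook vertices in every drawing that realizes the chain; hence the band used by each hook piece is read directly off the neighbourhoods of $V_{C\gets C'}$ in $G$. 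There is no orientation degree of freedom to enumerate, so the combinatorial-explosion worry you raise never arises, and \cref{clm:span-empty-intersection} is not needed for this step at all.

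Consequently you also misattribute the source of the bound of four. It does not come from an orientation choice times an end ambiguity. In the paper, for every \emph{interior} component $C_i$ the constraint to its predecessor $\overline C$ singles out a vertex $u\in C_i$ (the one carrying the constraint), and the hook anchor is then the farthest vertex $a$ from $u$ in a BFS tree of $C_i$ whose subtree contains $V_{C_i\gets C_{i+1}}$ — a uniquely determined vertex. The only ambiguity arises at the two \emph{endpoints} of the chain: $C_1$ (and symmetrically $C_k$) has no predecessor to supply $u$, and if $C_1$ is confined to a single band it is a caterpillar whose two spine endpoints are the only viable choices of $u$. That is $\le 2$ choices at each end, giving $\le 4$ anchor sets overall. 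Finally, your canonicalization (``push every component as far left as possible'') is not the one that works here; the paper's shifting argument goes the other way, moving surplus hook vertices of $C_i$ \emph{out} of the overlap with $C_{i+1}$ so that the hook piece is as small as possible and the anchor becomes adjacent to $V_{C_i\gets C_{i+1}}$. Without the constraint-driven identification of $V_{C\gets C'}$ and of the BFS root $u$, your ``walk from the non-overlapped side'' heuristic has nothing concrete to walk from, and the claimed linear running time is unsubstantiated.
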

	
	\begin{proof}
		For simplicity, we just use ``hook chain'' when we speak
		of the input sequence of components that appears
		as the hook chain in at least one level planar drawing.  Note that
		the statement of the lemma is trivially true for degenerate hook chains. 
		Otherwise, we consider every pair $(C,C')$ of consecutive components
		in the hook chain and compute a hook anchor of~$C$ towards~$C'$.
		The hook anchor of~$C'$ towards~$C$ can be computed symmetrically.
		
		Let $V_{C \gets C'}$ be the set of middle-level vertices of $C$ that are involved in
		constraints from~$C'$, and let $V_{C'\to C}$ be the set of middle-level vertices
		of~$C'$ that are involved in constraints to~$C$.
		For example, in \cref{fig:shift-hook-pieces},
		$V_{C \gets C'} = \{v,w'\}$.
		Observe that in any constrained level planar drawing of \J,
		the vertices in $V_{C \gets C'}$ are hook vertices
		of~$C$ towards~$C'$, and the vertices in $V_{C' \to C}$ are
		hook vertices of~$C'$ towards~$C$.
		Therefore, the neighbors of $V_{C \gets C'}$ all lie on the top level
		and the neighbors of $V_{C' \to C}$ all lie on the bottom level (or vice versa).
		We assume, without loss of generality, that the neighbors
		of~$V_{C \gets C'}$ all lie on the top level.
		
		If $C$ has a predecessor $\overline{C}$ in the hook chain,
		then there exists a vertex $u \in C$ with a constraint to some vertex in $\overline{C}$.
		Consider the BFS-tree~$T$ (of $J$)
		rooted in $u$.  Let $a$ be the vertex
		farthest from $u$ in $T$ such that the subtree rooted in $a$
		contains $V_{C \gets C'}$.  We claim that there is a constrained
		level planar drawing of~\J where~$a$ is the hook anchor of~$C$ towards~$C'$.
		Note that a hook anchor is a cut vertex of~$C$ separating $u$ and $V_{C \gets C'}$.
		Moreover, if there is a constrained level planar drawing~$\Gamma$ of \J
		where $C'$ hooks into $C$ from the right and the hook anchor of $C$ towards $C'$
		is not adjacent to a vertex of $V_{C \gets C'}$,
		then there is another constrained level planar drawing~$\Gamma'$ of~\J,
		which is identical to $\Gamma$ except that
		the hook vertices of~$C$ towards~$C'$ that are to
		the left of all vertices from~$V_{C \gets C'}$ are ``shifted''
		to the left of the leftmost middle-level vertex of~$C'$.
		For example, in \cref{fig:shift-hook-pieces},
		$u_1$, $u_2$, and $u_3$ are shifted to the left of~$u'$.
		Then, the ``shifted'' vertices are no longer hook vertices
		and the hook anchor of $C$ towards $C'$
		is adjacent to a vertex of $V_{C \gets C'}$.
		The correctness follows from the fact that the vertices that are
		not in the subtree
		of~$a$ in~$T$ do not have incoming constraints from~$C'$. 
		Note that, due to \cref{clm:span-empty-intersection},
		there cannot be another component between $C$ and $C'$.
		
		\begin{figure}[tb]
			\centering
			\begin{subfigure}[t]{.47\textwidth}
				\centering
				\includegraphics[page=1]{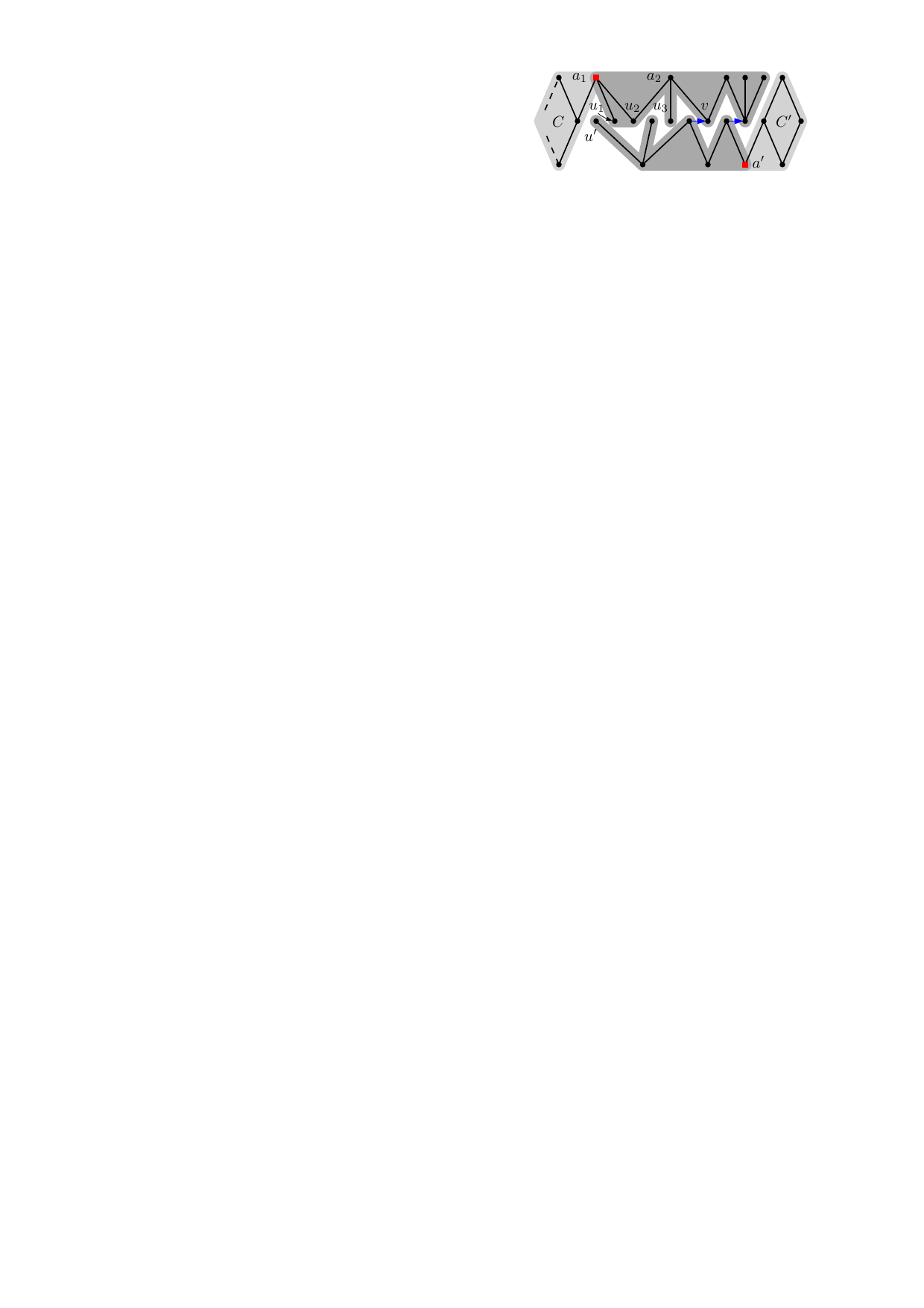}
				\subcaption{Constrained level planar drawing $\Gamma$
					with hook anchor $a_1$ in $C$ towards $C'$,
					which is not adjacent to a vertex in $V_{C \gets C'}$.}
				\label{fig:shift-hook-pieces-before}
			\end{subfigure}
			\hfill
			\begin{subfigure}[t]{.47\textwidth}
				\centering
				\includegraphics[page=2]{clp3_shift_hook_pieces}
				\subcaption{Constrained level planar drawing $\Gamma'$
					with hook anchor $a_2$ in $C$ towards $C'$,
					which is adjacent to vertex $v \in V_{C \gets C'}$.}
				\label{fig:shift-hook-pieces-after}
			\end{subfigure}
			
			\caption{By ``shifting'' hook vertices of $C$ to the left,
				we obtain, from a constrained level planar drawing $\Gamma$,
				another constrained level planar drawing $\Gamma'$
				where the hook anchor of $C$ towards $C'$ is adjacent
				to a vertex with a constraint from a vertex of $C'$ (here, $v$).
				Hook pieces are indicated by dark gray background color,
				hook anchors by red squares, and constraints by blue arrows.}
			\label{fig:shift-hook-pieces}
		\end{figure}
		
		Now, we consider the case that $C$ is the first component of the hook chain.
		If $C$ has a vertex~$u$ on the bottom level, we determine the hook
		anchor of~$C$ towards~$C'$ using a BFS-tree rooted in~$u$ as above.
		Otherwise, $C$ is contained in the upper band and is a caterpillar. 
		Since $C$ hooks, $C$ has at least three vertices and a nonempty spine.
		We take each of the at most two endpoints of the spine of~$C$
		as~$u$ and execute the rest of the algorithm once for each of them
		in order to find the hook anchor of~$C$ towards~$C'$.
		In the symmetric case, when $C'$ is the last component and we
		want to find the hook anchor towards $C$, we do the same as above.
		This results in at most four sets of vertices such that there exists a
		constrained level planar drawing that has precisely one of these sets
		as hook anchors.
		
		Regarding the running time, we execute a (linear-time) BFS
		at most four times per component of the hook chain.
		Together, the components have size linear in the size of~$J$,
		Hence, the overall running time is linear.
	\end{proof}
	
	Due to \cref{lem:3lclp_hookanchor},
	we can find, for \G and a given hook chain~\C,
	four sets of vertices one of which must be the set of hook anchors
	if \G admits a constrained level planar drawing with hook chain~\C.
	We simply check each of the four possibilities.
	Therefore, from now on, we assume that we know the hook chain
	and its set of hook anchors.
	
	Further note that every component contains at least one
	middle level vertex and that both the number of components
	and the number of vertices on the middle level are bounded by $O(n)$.
	Hence, within $O(n^2)$ time, instead of guessing $C$ and $C'$ directly,
	we can guess two middle level vertices $s$ and $t$
	and choose the components such that $s\in C$ and $t\in C'$.
	We will later use $s$ and $t$ as the first and last vertex
	on the middle level.

	\paragraph{Drawing a constrained level graph with given hook chain, hook anchors, $s$ and $t$.}
	It remains to find a constrained level planar drawing
	of \G, given a (possibly degenerate) hook chain,
	a set of hook anchors as well as the first and last vertices
	of the middle level $s$ and $t$.
	In several steps, we modify \G
	(we call the resulting constrained level graphs $\G = \G_0, \G_1, \dots, G_{10}$)
	such that, in step $i$, $\G_i$ is constrained level planar
	if and only if $\G_{i-1}$ is constrained level planar.
	This culminates in a proper constrained level graph~$\G_{10}$
	whose constraints yield a total order on each level
	such that we can easily check whether~$\G_{10}$
	is constrained level planar.
	In the affirmative case, we know that \G is
	constrained level planar, too, and we
	can find a constrained level planar drawing of~\G
	in $\mathcal{O}(n^5)$ time, where $n$ is the number of vertices of~\G;
	see \cref{thm:clp-3lvl}.  We can assume that every intermediate graph
	is level planar (when ignoring the constraints) and that its set of
	constraints is acyclic because these properties can be tested in at
	most quadratic time and if one does not hold, we can abort.
	
	We start the construction of $\G_1$ with a copy of \G.
	Given the hook chain $\langle C_1,\dots,C_k\rangle$
	together with a set of hook anchors
	$\{a_1, a_2', a_2, a_3', \dots, a_{k-1}, a_k' \}$
	determined according to \cref{lem:3lclp_hookanchor},
	we add edges to $\G_1$ in order to
	connect all components of the hook chain, as follows.
	For each $i \in [k-1]$, let $w_i$ 
	be the vertex on the spine of the hook piece of~$C_{i+1}$ 
	towards~$C_i$ that is farthest from~$a'_{i+1}$.
	Let $w'_{i+1}$ be defined symmetrically within $C_i$
	and with respect to~$a_i$.
	For example, in \cref{fig:maincomponent},
	on the spine of the hook piece of $C_7$ towards $C_6$, $w_6$ is farthest from~$a'_7$, and,
	on the spine of the hook piece of $C_6$ towards $C_7$, $w'_7$ is farthest from~$a_6$.
	Note that $w_i$ and $w'_{i+1}$ can lie on any level of the band
	where its hook pieces lies.
	We now add the edges $a_iw_i$ and $a'_{i+1}w'_{i+1}$.
	We subdivide every edge that has one endpoint on the upper and one endpoint on
	the lower level with a new vertex on the middle level in order to make $\G_1$ proper.
	See \cref{fig:maincomponent} for an example with $i = 6$.
	We call the resulting component of $\G_1$ the \emph{main component}.
	
	\begin{figure}[tb]
		\centering
		\includegraphics{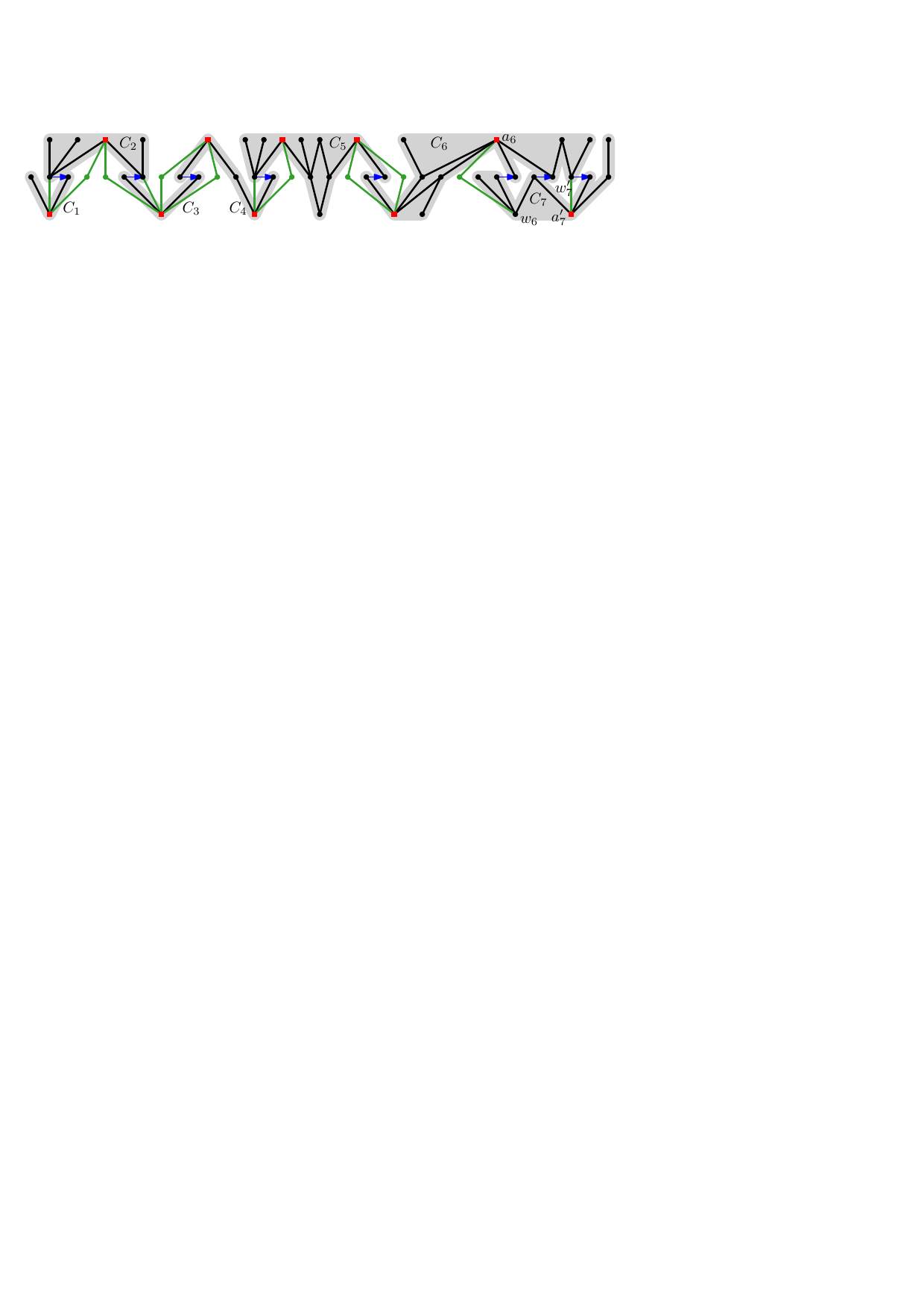}
		\caption{Constructing the main component: original components
			in gray, hooking constraints in blue, hook anchors
			(according to \cref{lem:3lclp_hookanchor}) in red,
			and edges in $E(\G_1) \setminus E(\G)$ in green.
		}
		\label{fig:maincomponent}
	\end{figure}
	
	\begin{lemma}
		\label{lem:3lclp_g1}
		The constrained level graph $\G_1$ is constrained level planar if
		and only if \G is.
		Furthermore, given \G, a hook chain, and a set of hook anchors
		that appear in some constrained level planar drawing of \G,
		we can construct $\G_1$ in linear time.
	\end{lemma}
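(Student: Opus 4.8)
The ``only if'' direction is immediate: deleting the edges of $E(\G_1)\setminus E(\G)$ from a constrained level planar drawing of $\G_1$ and suppressing the subdivision vertices introduced along them gives a constrained level planar drawing of~$\G$. So the plan is to establish the ``if'' direction constructively, which also yields the stated linear running time.

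To this end I would start from a constrained level planar drawing $\Gamma$ of $\G$ that realizes the prescribed hook chain $\langle C_1,\dots,C_k\rangle$ together with the prescribed set of hook anchors; such a $\Gamma$ exists by hypothesis (cf.\ \cref{lem:3lclp_hookanchor}). If the hook chain is degenerate, no edges are added and we are done, so assume $k\ge 2$, and insert the new edges pair by pair, arguing that distinct pairs do not interfere. Fix $i\in[k-1]$. Since $C_{i+1}$ hooks into $C_i$, their spans overlap; let $I=\Span(C_i)\cap\Span(C_{i+1})$. By \cref{clm:span-empty-intersection} no third component has a middle-level vertex in $I$, so inside $I$ the drawing $\Gamma$ contains only the two hook pieces $P$ (of $C_i$ towards $C_{i+1}$) and $P'$ (of $C_{i+1}$ towards $C_i$); up to swapping the two bands, $P$ lies in the lower band and $P'$ in the upper band, so they meet only along the middle level, where their spines interleave. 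In $\Gamma$ the anchor $a_i$ attaches $P$ to the part of $C_i$ outside $I$ and $a'_{i+1}$ attaches $P'$ to the part of $C_{i+1}$ outside $I$, and by \cref{lem:3lclp_hookchain1} these two attachments sit at opposite ends of $I$. The key step is to argue that, because $w_i$ is by definition the spine vertex of $P'$ \emph{farthest} from $a'_{i+1}$ and $w'_{i+1}$ the spine vertex of $P$ \emph{farthest} from $a_i$, the pair $a_i,w_i$ lies on the boundary of a single face of $\Gamma$ located at one end of $I$ (the empty ``gap'' between the two interleaving spines), while $a'_{i+1},w'_{i+1}$ lies on the boundary of the analogous face at the other end. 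Then $a_iw_i$ and $a'_{i+1}w'_{i+1}$ can be drawn there as $y$-monotone curves, subdividing each at the middle level whenever it crosses from the lower to the upper band; this introduces no crossing, and no constraint is violated because the subdivision vertices are fresh and unconstrained while no original vertex moves. Performing this for every $i\in[k-1]$ turns $\Gamma$ into a constrained level planar drawing of $\G_1$, which proves the equivalence.

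For the running time, the hook chain and the set of hook anchors already pin down all hook pieces, so each $w_i$ and $w'_{i+1}$ is obtained by a single breadth-first search inside the relevant (caterpillar) hook piece; inserting the $2(k-1)$ new edges and subdividing those that cross the middle level is then linear in the total size of the hook pieces, which are vertex-disjoint and hence have total size linear in~$\G$. I expect the main obstacle to be the routing step above: making fully precise, from the ``farthest spine vertex'' definitions of $w_i,w'_{i+1}$ and from the structure of anchors guaranteed by \cref{lem:3lclp_hookanchor} (in particular that each anchor is adjacent to a middle-level constraint endpoint and therefore lies on the outer level of its band), that the four endpoints really do lie on the boundaries of the intended faces of $\Gamma$, so that both edges of a pair can be added simultaneously without crossings. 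In addition, the boundary cases in which $C_1$ or $C_k$ is a caterpillar lying entirely in a single band require the endpoint-of-spine treatment already used in \cref{lem:3lclp_hookanchor}.
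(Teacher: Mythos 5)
Your proposal is correct and follows essentially the same route as the paper's proof: the ``only if'' direction is immediate (the paper phrases it contrapositively), and the ``if'' direction inserts the edges $a_iw_i$ and $a'_{i+1}w'_{i+1}$ directly into a drawing $\Gamma$ realizing the given hook chain and anchors, arguing routability from the geometry of the interleaved hook pieces, and the running-time bound comes from BFS over the vertex-disjoint hook pieces.

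The one place where the paper is slightly more careful than your sketch is precisely the step you flag as the ``main obstacle'': it does \emph{not} claim $a_i$ and $w_i$ lie on a single common face. Instead it argues that, since $a_i$ is the hook anchor (and, by \cref{lem:3lclp_hookanchor}, adjacent to a middle-level constraint endpoint), the \emph{leftmost middle-level vertex} of $C_{i+1}$ is visible from $a_i$; that vertex is either $w_i$ itself or a neighbor of $w_i$, and in the latter case the new edge is routed to a point just left of that middle-level vertex and then along its incident edge up or down to $w_i$. This handles exactly the case your face-boundary claim does not immediately cover, namely $w_i$ lying on the top or bottom level, where the curve must pass through the face on the opposite side of $w_i$'s incident edge. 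Your remark that ``both edges of a pair can be added simultaneously'' is then justified because the two curves live on opposite ends of $I=\Span(C_i)\cap\Span(C_{i+1})$ and so cannot interfere, and pairs for different $i$ are handled independently since the hook pieces are vertex-disjoint.
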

	
	\begin{proof}
		If~$\G$ is not constrained level planar, then certainly~$\G_1$ is
		not constrained level planer either (because it has the same set of
		constraints, but additional edges).
		
		On the other hand, if~$\G$ is
		constrained level planar, then $\G$ admits a constrained level
		planar drawing~$\Gamma$ with the hook chain $\langle C_1, \dots, C_k \rangle$.
		If the hook chain is degenerate ($k=1$)
		it holds that $\G=\G_1$ and we are done, so we assume $k\ge 2$.
		We now have 
		the hook anchors $\{a_1, a'_2, a_2, a'_3, \dots, a_{k-1}, a'_k \}$.
		
		Consider, for any $i \in [k-1]$,
		the vertices $a_i$ and $w_i$
		(the argument is symmetric for $a'_i$ and $w'_{i}$ with $i \in \{2,\dots,k\}$).
		The hook piece of $C_{i+1}$ towards $C_{i}$
		is a caterpillar whose spine is oriented away
		(i.e., to the left) of its hook anchor $a'_{i+1}$ in $\Gamma$.
		Since $a_i$ is a hook anchor, the leftmost vertex of $C_{i+1}$
		on the middle level is visible from $a_i$. This leftmost vertex is either
		the last vertex $w_i$ of the spine of the hook piece of $C_{i+1}$ towards $C_{i}$,
		or adjacent to it.
		We need to show that we can include a $y$-monotone curve between $a_i$ and $w_i$ in $\Gamma$.
		This is obvious if $w_i$ is on the middle level. If $w_i$ is not
		on the middle level, then we can draw the curve connecting
		$a_i$ with a point in the left vicinity of the leftmost vertex of $C_{i+1}$
		on the middle level, and follow its edge towards $w_i$.
		Thus, $w_i$ can be reached by $a_{i}$ in $\Gamma$
		and, by adding the (possibly subdivided) edge $a_{i}w_i$ to $\Gamma$,
		$\Gamma$ can be extended to a constrained level planar drawing of~$\G_1$.
		
		We can find each last vertex of the spine in linear time
		and the size of the spines together is at most linear in the size of \G.
		Inserting the new edges can again be done in linear time.
		Hence, we can construct $\G_1$ in linear time.
	\end{proof}
	
	\begin{figure}[tb]
		\centering
		\includegraphics[page=2]{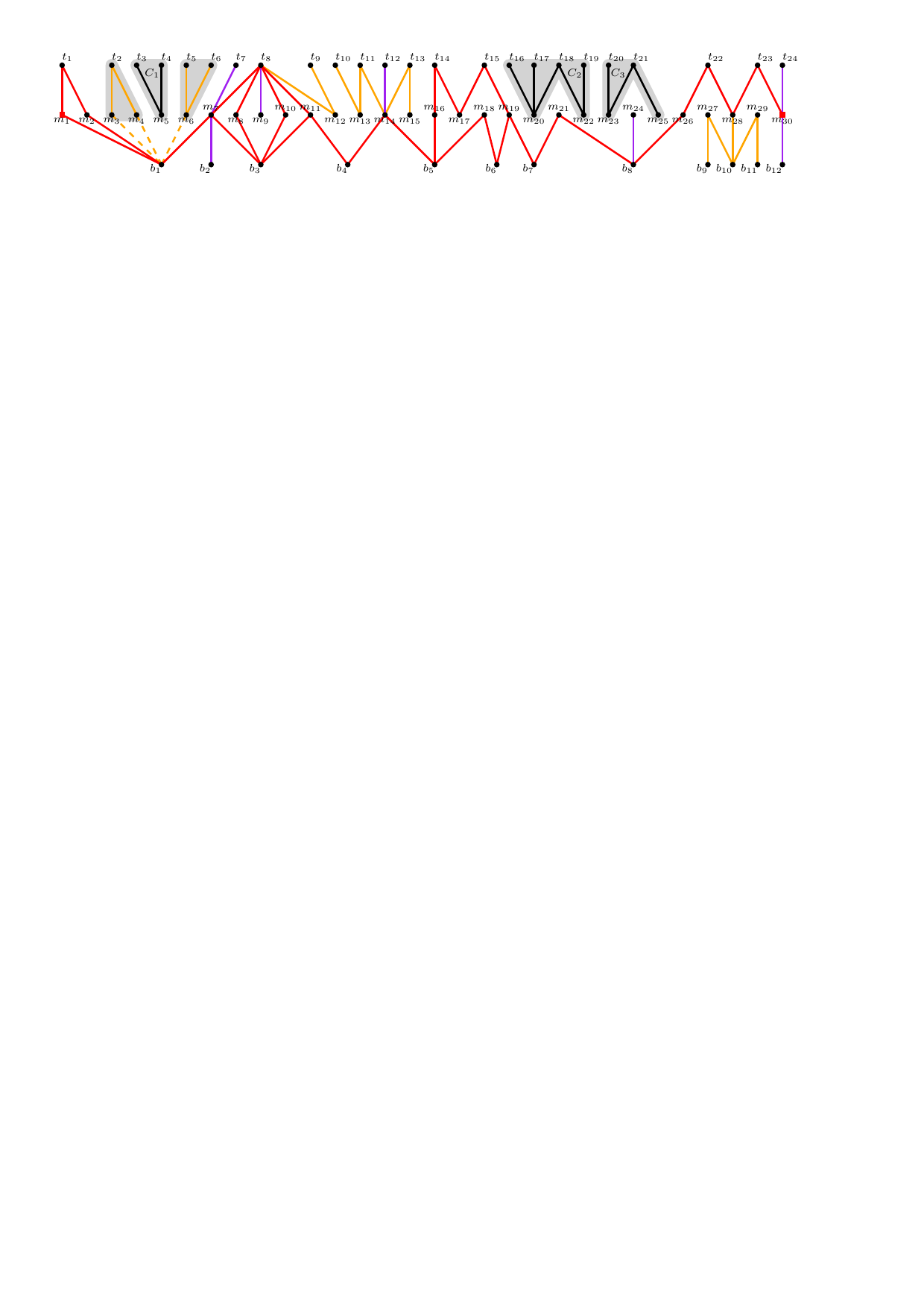}
		\caption{Drawing of the main component of a level graph
			with special vertices $s=m_1$ and $t=m_{30}$: backbone
			edges are thick and red, enclosed components gray, pieces orange
			(edges connecting fingers to their anchors dashed), and
			edges incident to leaves purple.}
		\label{fig:3levelclp}
	\end{figure}

	Next we fix the orientation of the main component.
	To this end, we use the ``guessed'' vertices $s$ and $t$
	such that
	(i)~the vertices along every simple $s$--$t$ path are ordered from left to right and
	(ii)~the vertices that do not lie on any simple
	$s$--$t$ path induce components with a ``simple'' structure.
	Accordingly, let the \emph{backbone spanned by $s, t$}
	(or, for short, the \emph{backbone} if $s$ and $t$ are clear from the context)
	of the main component be the subgraph
	of $\G_1$ that corresponds to the union of all simple $s$--$t$ paths.
	In \cref{fig:3levelclp}, the edges of the backbone spanned by $s = m_1$ and $t = m_{30}$ are drawn thick and red.
	
	Let a \emph{separator vertex} be
	a middle-level vertex on the backbone
	that has a backbone neighbor on both the top and the bottom level.
	Further let an \emph{anchor} be a vertex on the backbone that is incident
	to vertices that are not on the backbone.
	Consider the connected components of $\G_1$ after removing the backbone.
	Each such component is incident to a unique anchor in $\G_1$,
	otherwise there would be a simple $s$--$t$ path through two of its
	anchors.  In this case the path would be part of the backbone.
	Let a \emph{piece} be such a component including its anchor and
	all edges connecting the component with the anchor.
	If there is a piece that occupies all three levels ignoring its anchor,
	we reject the current guess of~$s$ and~$t$.
	
	\begin{lemma}\label{obs:3lclp_pieces}
		If $\G_1$ admits a constrained level planar drawing,
		then there is a choice of~$s$ and~$t$ such that each piece
		ignoring its anchor lies on at most two levels, and that $s$ and $t$ are
		the leftmost and rightmost vertex of $\G_1$ on the middle level, respectively.
		(Hence, ignoring its anchor, every piece is a caterpillar.)
	\end{lemma}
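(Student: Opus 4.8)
The plan is to fix a constrained level planar drawing $\Gamma$ of $\G_1$ (one exists by hypothesis) and to choose $s$ and $t$ as the leftmost and the rightmost vertex on the middle level of the main component in $\Gamma$. First I would verify that these are in fact the leftmost and rightmost middle-level vertices of~$\G$: by \cref{lem:3lclp_hookchain1} every component of $\G_1$ is either part of the hook chain or enclosed by a hook-chain component, and the first (last) middle-level vertex of $\Gamma$ belongs to the first (last) component of the hook chain, which is a component of $\G$ (so $s,t$ are not among the subdivision vertices introduced when constructing $\G_1$, which are drawn in the interior of the hook chain). This settles the second half of the statement, and it remains to control the pieces.

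So let $P$ be a piece with anchor $a$ and set $P^\circ := P \setminus \{a\}$, a connected subgraph of the main component that is disjoint from the backbone and meets the rest of $\G_1$ only in $a$. Since $\G_1$ is proper it has no edge joining the bottom and the top level, so a connected subgraph using both of these levels must also use the middle level, and no connected graph lives on exactly the two levels $\{1,3\}$. Hence it is enough to show that $P^\circ$ cannot contain a bottom-level vertex and a top-level vertex at the same time; this also yields the caterpillar claim, because a connected level-planar graph on two consecutive levels is a caterpillar~\cite{HS71}.

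The key step is a planarity/Jordan-curve argument. Pick any simple $s$--$t$ path $Q$ in the main component; it lies in the backbone by definition. Let $C$ be the unbounded simple curve obtained by gluing the leftward ray of the middle level ending at $s$, the drawing of $Q$, and the rightward ray of the middle level starting at $t$; it splits the plane into an open upper and an open lower region. Since a level drawing has nothing below $L_1$ or above $L_3$, the line $L_1$ is contained in the closed lower region and $L_3$ in the closed upper region. Now assume $P^\circ$ has a bottom-level vertex $x$ and a top-level vertex $z$. Neither lies on $Q$ (that would place a vertex of $P^\circ$ on the backbone) nor on the middle-level rays (they are not on $L_2$), so $x$ is in the open lower region and $z$ in the open upper region. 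On the other hand, the whole drawing of $P^\circ$ is disjoint from $C$: no vertex of $P^\circ$ is on $C$ (its middle-level vertices lie strictly between $s$ and $t$, as $s,t$ are the extreme middle-level vertices, and none of them is on $Q$; its other vertices avoid $L_2$ entirely), and no edge of $P^\circ$ crosses $C$ (edges of $P^\circ$ do not cross $Q$, since $P^\circ$ and $Q$ live in the planarly drawn main component and share no vertex, and edges of $P^\circ$ never cross $L_2$ because $\G_1$ is proper). Being connected and disjoint from $C$, the drawing of $P^\circ$ lies in a single open region --- contradicting that it contains both $x$ and $z$. Therefore $P^\circ$ omits the bottom or the top level, as claimed; iterating over all pieces finishes the proof.

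I expect the main obstacle to be a matter of care rather than of depth: making the separation claim ``$L_1$ is on the lower side and $L_3$ on the upper side of $C$'' fully rigorous (one has to use that $Q$ stays weakly between $L_1$ and $L_3$ and meets each of these lines in only finitely many points, so that the one-point compactification of $C$ is a genuine Jordan curve), and dispatching the degenerate case where the main component has a unique middle-level vertex (then $s=t$, $Q$ is trivial, and each piece is pendant at a single middle-level vertex and hence lives in a single band). Double-checking that the chosen $s$ and $t$ are original vertices of $\G$ rather than subdivision vertices of the connecting edges is a further small bookkeeping point.
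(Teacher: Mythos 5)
Your proof is correct and takes essentially the same approach as the paper's: pick $s,t$ as the extreme middle-level vertices in a fixed drawing, take a simple $s$--$t$ path~$Q$ in the main component, and observe that $Q$ (together with the two middle-level rays) separates the strip, forcing any piece to stay within one band. The paper's own proof is extremely terse---it asserts in one sentence that $\pi$ ``already partitions the main component into connected components of height at most~2''---whereas you supply the underlying Jordan-curve argument in full: the augmented curve $C$ stays in the closed strip between $L_1$ and $L_3$, so $L_1\setminus C$ and $L_3\setminus C$ fall into different components of $\mathbb{R}^2\setminus C$; a piece minus its anchor is connected, disjoint from the backbone (hence from $Q$), misses both rays (its middle-level vertices lie strictly between $s$ and $t$, and properness keeps its edges off $L_2$), and so must lie in a single component of $\mathbb{R}^2\setminus C$. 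You also make explicit the bookkeeping that the paper omits: $s$ and $t$ are genuine vertices of $\G$ rather than subdivision vertices of the connecting edges, which follows from the structure of the hook chain via \cref{lem:3lclp_hookchain1}. The degenerate case $s=t$ is dispatched correctly. In short, your proposal is a rigorous fleshing-out of the argument the paper only sketches, not a different proof.
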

	\begin{proof}
		Let $\Gamma$ be a constrained level planar drawing of $\G_1$,
		let $s$ be the leftmost and $t$ be the rightmost middle-level vertex.
		Since the hook chain is degenerate, $s$ and $t$ both lie in the
		main component
		and hence there is a simple path~$\pi$ from $s$ to $t$.
		The path $\pi$, which is part of the backbone spanned by $s$ and $t$,
		already partitions the main component
		into connected components of height at most~2.
	\end{proof}
	
	Following \cref{obs:3lclp_pieces},
	we define $\G_2$ to be the constrained level graph based on $\G_1$
	with the additional constraints $s \prec_2 t$ as well as
	$s \prec_2 v$ and $v \prec_2 t$ for
	every vertex $v$ on the middle level.
	(Recall that further implicit
	constraints are added automatically.)

	\begin{lemma}\label{lem:3lclp_g2b}
		If $\G_2$ admits a constrained level planar drawing,
		then, for every simple $s$--$t$ path~$\pi$
		and for every pair $(u,v)$ of vertices of~$\pi$
		such that~$u$ precedes~$v$ in~$\pi$ and $\gamma(u) = \gamma(v) = i$,
		$\G_2$ contains the constraint $u \prec_i v$.
	\end{lemma}
	\begin{proof}
		Fix any simple $s$--$t$ path $\pi$.
		
		First, consider the case that $\pi$ lies on only one band,
		say, the lower band.
		Then, $\pi$ consists of an odd number $l$ of vertices.
		Pick $k$ such that $l = 2 k - 1$.
		We name the vertices along~$\pi$
		$s = m_1, b_1, m_2, b_2, m_3, \dots, b_{k-1}, m_k = t$,
		where $m_1, m_2, \dots, m_k$ are the middle-level vertices
		of~$\pi$ in order and $b_1, b_2, \dots, b_{k-1}$ are the
		bottom-level vertices of~$\pi$ in order.
		By definition, $\G_2$ contains the constraints
		$m_1 \prec_2 m_2$, $m_1 \prec_2 m_3$, $\dots$, $m_1 \prec_2 m_k$.
		These constraints imply (due to the implicit planarity constraints)
		together with the edge $m_1 b_1$ on the one hand and
		the edges $m_2 b_2, m_3 b_3, \dots, m_{k-1}b_{k-1}$ on the
		other hand that $\G_2$ contains also the constraints
		$b_1 \prec_1 b_2$, $b_1 \prec_1 b_3$, $\dots$, $b_1 \prec_1 b_{k-1}$.
		In turn, these constraints imply
		together with the edge $b_1 m_2$ on the one hand and
		the edges $b_2 m_3, b_3 m_4, \dots, b_{k-1} m_k$ on the other hand 
		that $\G_2$ contains also the constraints
		$m_2 \prec_2 m_3$, $m_2 \prec_2 m_4$, $\dots,$ $m_2 \prec_2 m_k$.
		Repeating this argument shows that,
		for every pair $(u,v)$ of vertices on~$\pi$ such
		that~$u$ precedes~$v$ in~$\pi$ and $\gamma(u) = \gamma(v) = i$,
		$\G_2$ contains the constraint $u \prec_i v$.
		
		Second, consider the case that $\pi$ lies on all three levels.
		Let $r$ be the first (middle-level) vertex of $\pi$ whose predecessor
		and successor on $\pi$ lie on distinct levels.
		Since we have added the constraint $s \prec_2 r$,
		we repeat the first argument of this proof to show
		the statement for the $s$--$r$ sub-path of~$\pi$
		(where $r$ takes the role of~$t$).
		If $r$ is the only (middle-level) vertex of $\pi$ whose predecessor
		and successor on $\pi$ lie on distinct levels,
		then we apply the first argument of this proof to show
		the statement for the $r$--$t$ sub-path of~$\pi$.
		Otherwise, we recursively apply the second argument of this proof
		to show the statement for the $r$--$t$ sub-path of~$\pi$
		(where $r$ takes the role of~$s$).
	\end{proof}

	Note that, in $\G_2$, the order~$\prec_2$ restricted to the backbone vertices is not necessarily total;
	for example,
	in \cref{fig:3levelclp},
	the order of $m_8$ and $m_{10}$
	as well as the order of $m_{17}$ and~$m_{18}$
	is not fixed.
	For each such set of permutable middle-level vertices
	on the backbone individually, pick an
	arbitrary topological order (with respect to the constraints) and fix it by adding new constraints.
	We call the resulting constrained level graph~$\G_3$.
	
	\begin{lemma}
		\label{lem:fix-middle-level-backbone}
		The constrained level graph $\G_3$ is constrained level planar if and only if $\G_2$ is.
		Furthermore, $\G_3$ can be constructed from $\G_2$ in quadratic time.
	\end{lemma}
	
	\begin{proof}
		If there is a constrained level planar drawing of $\G_3$,
		we also have a constrained level planar drawing of $\G_2$
		because $\G_3$ has the same underlying level graph as $\G_2$
		and a superset of the constraints.
		
		Suppose for a contradiction that $\G_2$
		admits a constrained level planar drawing $\Gamma$, but~$\G_3$ does not.
		Consider each set of vertices
		whose order has been fixed by
		adding constraints in $\G_3$.
		Since $\G_3$ is not constrained level planar,
		the involved vertices cannot be
		permuted in $\Gamma$ to match the order
		prescribed in $\G_3$.
		This can only happen if there is a constraint in~$\G_2$
		contradicting the order prescribed in~$\G_3$.
		Then, however, this does not correspond to a topological order.
		
		To detect each such set of vertices,
		construct the auxiliary graph that has
		a node for every middle-level vertex on the backbone
		and an edge for each pair of vertices
		that are not ordered by a constraint.
		This graph can be constructed in quadratic time
		and each connected component
		corresponds to such a set of vertices.
		Finding topological orders and inserting
		the new constraints can be done in overall quadratic time.
	\end{proof}

	From \cref{lem:fix-middle-level-backbone,lem:3lclp_g2b}
	the following observation follows.
	\begin{observation}\label{obs:fix-backbone}
		If $\G_3$ admits a constrained level planar drawing,
		then the orders on all levels
		restricted to the backbone vertices
		are total.
	\end{observation}
	
	Now let a \emph{finger} be a piece (including its anchor) that
	lies on all three levels
	and hence has an anchor on the top or bottom level.
	Further let a \emph{hand} be the union of fingers incident to
	the same anchor.
	For example, in \cref{fig:3levelclp},
	the two yellow components incident to~$b_1$
	are fingers that together form a hand.
	Let~$\alpha$ be the anchor of a hand,
	let $N_\alpha$ be the set of neighbors of $\alpha$ within the hand, and
	let $L_\alpha$ ($R_\alpha$) be the set of backbone
	neighbors of~$\alpha$ that lie before (after) $\alpha$ on some simple
	$s$--$t$ path.
	For example, consider the hand with anchor $\alpha = b_1$ in \cref{fig:3levelclp}.
	We have $L_\alpha = \{m_1,m_2\}$,
	$N_\alpha = \{m_3, m_4, m_6\}$, and
	$R_\alpha=\{m_7\}$.
	Note that $L_\alpha$ and $R_\alpha$
	are nonempty because $\alpha$ cannot be $s$ or~$t$.
	
	We construct the constrained level graph $\G_4$ from $\G_3$ by
	iterating over each hand of~$\G_3$ and performing the following
	operations:  (i)~Let $\alpha$ be the anchor of the hand,
	(ii)~add constraints from all vertices in $L_\alpha$
	to all vertices in~$N_\alpha$,
	(iii)~add constraints from all vertices in $N_\alpha$
	to all vertices in~$R_\alpha$, and
	(iv)~remove all edges between $\alpha$ and~$N_\alpha$.
	
	Note that fingers in $\G_3$ are represented by
	enclosed components in $\G_4$.
	Due to the implicit constraints,
	two such enclosed components that originate from different hands are
	ordered according to their anchors.
	Further note that the component--constraint graph $H_{\G_4}$ of~$\G_4$ remains
	strongly connected since every finger in $\G_3$ that becomes
	a component in $\G_4$ has both incoming and outgoing constraints
	with respect to the main component.
	
	\begin{lemma}\label{lem:fingers}
		The constrained level graph $\G_4$ is constrained level planar if and only if $\G_3$ is.
		Furthermore, $\G_4$ can be constructed from $\G_3$ in quadratic time.
	\end{lemma}
	\begin{proof}
		If there is a constrained level planar drawing~$\Gamma_3$ of $\G_3$,
		we can obtain a constrained level planar drawing of $\G_4$ by removing,
		for each hand with anchor~$\alpha$,
		the edges between $\alpha$ and $N_\alpha$.
		Note that the additional constraints
		in $\G_4$ are satisfied because
		otherwise edges of the fingers in~$\G_3$
		would cross backbone edges in~$\Gamma_3$.
		
		If there is a constrained level planar drawing~$\Gamma_4$ of $\G_4$,
		we can obtain a constrained level planar drawing of $\G_3$ by adding,
		for each hand with anchor~$\alpha$,
		the edges between $\alpha$ and~$N_\alpha$.
		These edges do not cross any
		other edges
		because the additional constraints
		imply that the vertices of
		$N_\alpha$ lie between
		the vertices of $L_\alpha$ and $R_\alpha$.
		Since the edges between
		$L_\alpha \cup R_\alpha$ and~$\alpha$
		have no crossings,
		the same holds for the new edges.
		
		Since we know, for each vertex, whether it lies on the backbone,
		we can determine all anchors and all fingers in linear time.
		Trivially, the number of constraints that we add is at most quadratic.
	\end{proof}

	Now each remaining piece lies within one of the two bands (including
	its unique anchor); hence it is a caterpillar.
	In the simplest case, this is just a leaf.
	Next, we will get rid of such leaves.
	Let $\G_5$ be the constrained level graph
	obtained from $\G_4$ by removing all leaves
	attached to backbone vertices except for possibly $s$ and~$t$.
	Recall that if $\G_4$ is not level planar (when ignoring the constraints)
	or its constraints contain a cycle, we abort here before constructing~$\G_5$.
	
	\begin{lemma}\label{lem:leaves}
		The constrained level graph $\G_5$ is constrained level planar if and only if $\G_4$ is.
		Furthermore, $\G_5$ can be constructed from $\G_4$ in linear time.
	\end{lemma}
	
	\begin{proof}
		If~$\G_4$ is constrained level planar, then certainly~$\G_5$ is
		also constrained level planar (because we have only removed
		vertices and edges).
		
		Suppose for a contradiction that $\G_5$
		is constrained level planar, but $\G_4$ is not.
		We remove the leaves of~$\G_4$ iteratively.
		Let $l$ be the leaf whose removal makes the resulting
		constrained level graph~$\G_4'$ constrained level planar.
		Let $V^{\rightarrow}$ be the set of vertices having
		a constraint to~$l$, and
		let $V^{\leftarrow}$ be the set of vertices having
		a constraint from~$l$.
		Furthermore, let $a$ be the unique neighbor of~$l$.
		Observe that in every constrained level planar drawing of~$\G_4'$,
		all vertices from~$V^{\rightarrow}$ precede all
		vertices from~$V^{\leftarrow}$ because of
		the transitivity constraints due to~$l$ in~$\G_4$.
		Fix any constrained level planar drawing~$\Gamma$ of~$\G_4'$.
		In~$\Gamma$, place $l$ on level~$\gamma(l)$ immediately to
		the right of the rightmost vertex of~$V^{\rightarrow}$
		or as the first vertex on level~$\gamma(l)$ if $V^{\rightarrow} = \emptyset$.
		We call the resulting drawing~$\Gamma'$.
		Note that $\Gamma'$ fulfills all constraints and the only crossings
		of~$\Gamma'$ involve the edge~$al$.  Let~$X$ be the set of edges
		that cross the edge~$al$.  Note that $X$ is not empty, otherwise
		$\G_4'+al$ would be constraint level planar.
		
		Suppose that there is an edge $cd \in X$ such that
		$\gamma(a) = \gamma(c) =: i$, $\gamma(l) = \gamma(d) =: j$,
		and there is a constraint between $a$ and~$c$
		or between $l$ and $d$ in~$\G_4$.
		If~$\G_4$ contains the constraint $l \prec_j d$,
		then, due to the planarity constraints, $\G_4$ contains
		also the constraint $a \prec_i c$, and vice versa.
		As $d \in V^{\leftarrow}$, $c$ needs to be to the left
		of $a$ in~$\Gamma'$ to cross $al$;
		see \cref{fig:no-leaves-a}.
		This, however, contradicts the constraint $a \prec_i c$.
		A symmetric argument holds if
		there are the constraints $d \prec_j l$ and $c \prec_i a$.
		Hence, no endpoint of an edge in~$X$ is involved in
		a constraint with~$a$ or~$l$.
		
		\begin{figure}[tb]
			\centering
			\begin{subfigure}[t]{.23\textwidth}
				\centering
				\includegraphics[page=1]{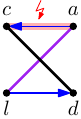}
				\subcaption{If a constraint between $a$ and $c$ or
					between $l$ and $d$ exists, the other exists as well
					and is violated.}
				\label{fig:no-leaves-a}
			\end{subfigure}
			\hfill
			\begin{subfigure}[t]{.17\textwidth}
				\centering
				\includegraphics[page=2]{clp3_leaves_3}
				\subcaption{Edges do not cross from the left (when
					going from~$l$ to~$a$).}
				\label{fig:no-leaves-b}
			\end{subfigure}
			\hfill
			\begin{subfigure}[t]{.195\textwidth}
				\centering
				\includegraphics[page=3]{clp3_leaves_3}
				\subcaption{The edge $al$ crosses two enclosed components
					that have no constraints with~$a$ or~$l$.}
				\label{fig:no-leaves-c}
			\end{subfigure}
			\hfill
			\begin{subfigure}[t]{.28\textwidth}
				\centering
				\includegraphics[page=4]{clp3_leaves_3}
				\subcaption{The vertices of the enclosed components to
					the right of~$a$ and~$l$ can be moved to the free space
					directly to the left of~$a$ and~$l$.}
				\label{fig:no-leaves-d}
			\end{subfigure}
			
			\caption{Configurations appearing in the proof of
				\cref{lem:leaves}, where we obtain $\G_5$ from $\G_4$ by
				removing leaves attached to the backbone.  We consider the
				leaf~$l$ and its unique anchor~$a$ and assume that the edge
				$cd$ crosses~$al$.}
			\label{fig:no-leaves}
		\end{figure}
		
		Next, suppose that there is an edge $cd \in X$ such that
		$a$ precedes~$c$ on level~$i = \gamma(a) = \gamma(c)$ and
		$d$ precedes~$l$ on level~$j = \gamma(l) = \gamma(d)$
		in~$\Gamma'$; see \cref{fig:no-leaves-b}.
		Consider the rightmost vertex~$b$ of~$V^{\rightarrow}$.
		Since all crossings in~$\Gamma'$ involve the edge~$al$,
		vertex~$b$ is not incident to any edge crossing $cd$.
		Moreover, $b$ is not incident to any edge crossing~$al$
		because $b$ is involved in a constraint with~$l$.
		This, however, would force~$b$ to be an isolated vertex,
		which $\G_4$ does not contain.
		Hence, all edges of $X$ cross $al$ in the other direction (that is,
		for every edge~$cd$ in~$X$, $c$ precedes~$a$ and $l$ precedes~$d$). 
		This means that $a$ has no neighbor on level~$j$
		that is to the left of~$l$ and, hence,
		we can place vertices with neighbors on level~$j$ directly
		to the left of~$a$, which we will exploit in the next step.
		Similarly, $l$ has no neighbor of level~$i$ that is to the left
		or right of~$a$, simply because~$l$ is a leaf.
		
		Now, let $cd$ be the leftmost edge of~$X$, again with
		$\gamma(a) = \gamma(c) =: i$ and $\gamma(l) = \gamma(d) =: j$.
		Note that $c$ cannot be a vertex of the backbone
		because then it would have a constraint with~$a$
		due to~\cref{lem:fix-middle-level-backbone}.
		Hence, $c$ (and $cd$) is part on an enclosed component or
		of a piece (potentially being another leaf).  This enclosed
		component or piece lies on the band formed by levels~$i$ and~$j$.
		In either case, no vertex of the enclosed component or piece is
		involved in a constraint with~$a$ or~$l$.
		
		If $c$ lies in an enclosed component~$C$,
		we can move all vertices of~$C$ on level~$i$ that
		are to the right of~$a$ directly to the left of~$a$, and
		we can move all vertices of~$C$ on level~$j$ that
		are to the right of~$l$ directly to the left of~$l$.
		For an illustration of this step, see
		\cref{fig:no-leaves-c,fig:no-leaves-d}.
		
		If $c$ lies in a piece~$p$, we distinguish two subcases depending on
		where $p$ is attached to the backbone with respect to~$al$.  If~$p$
		is attached to the backbone on the left side of~$al$ (or the anchor
		of~$p$ is~$a$), we can move all vertices of~$p$ on level~$i$ that
		are to the right of~$a$ directly to the left of~$a$, and
		we can move all vertices of~$p$ on level~$j$ that
		are to the right of~$l$ directly to the left of~$l$.
		Otherwise, if $p$ is attached to the backbone on the right side
		of~$al$, we can move all vertices of~$p$ on level~$i$ that
		are to the left of~$a$ directly to the right of~$a$
		(in this case, $a$ cannot have a neighbor on level~$j$
		that is to the right of~$l$), and
		we can move all vertices of~$p$ on level~$j$ that
		are to the left of~$l$ directly to the right of~$l$.
		
		If not all endpoints of~$X$ lie in the enclosed component or piece
		that we have treated above, we let $cd$ be the leftmost untreated
		edge in~$X$ and repeat the above treatment.  In the end, after
		treating all edges in~$X$, we obtain a constrained level planar
		drawing of~$\G_4'+al$; a contradiction.
		Summing up, we conclude that, if $\G_4$ is not constrained level
		planar, then neither $\G_5$ is constrained level planar.
		
		Regarding the running time,
		we can check, for each neighbor of a backbone vertex,
		in constant time whether it is a leaf and, if so, remove it.
	\end{proof}

	Now, due to \cref{lem:leaves}, each remaining piece
	(including its unique anchor) is a caterpillar on one band
	that is not a single edge.
	This leads to the following simple observation.
	
	\begin{observation}
		\label{obs:no-three-pieces}
		In $\G_5$, there is no set of three pieces that share an anchor.
	\end{observation}
	
	Note that the backbone subdivides the bands
	into regions above and below the backbone.
	In the next step, we construct a constrained level graph $\G_6$ by
	adding further constraints to $\G_5$ that
	assign pieces to these regions.
	To this end, we make the following definitions.
	
	Recall that a separator vertex is
	a middle-level vertex on the backbone
	that has a backbone neighbor on both the top and the bottom level.
	Let a \emph{gap} be a pair $(a, c)$ of separator vertices such that $a\prec_2 c$ and there
	is no separator vertex $b$ with $a\prec_2 b \prec_2 c$.
	If, for every simple $s$--$t$ path~$\pi$ through $a$ and $c$, the
	subpath from $a$ to $c$ goes through the lower band, then we
	call $(a,c)$ an \emph{upper gap}.
	Symmetrically, if it goes through the upper band, then we
	call $(a,c)$ a \emph{lower gap}.
	Otherwise, that is, if there
	is a path through $a$ and~$c$ whose subpath goes through the lower
	band and there is a path through $a$ and~$c$ whose subpath goes
	through the upper band, then we call $(a,c)$ a \emph{closed gap}
	(for instance see the gap between $m_{16}$ and~$m_{19}$ in
	\cref{fig:3levelclp}).
	Closed gaps can enclose only leaves and isolated vertices.
	Hence, they are empty in $\G_5$.
	Consider the sequence of upper and lower gaps.
	(If this sequence is empty, simply omit the following
	construction and continue with $\G_6:=\G_5$.)
	In this sequence, we call a maximal sequence of consecutive upper (lower) gaps an \emph{upper (lower) gap group}.
	We sort these gap groups from left to right according to the order
	of their separator vertices
	and name them $W_1, \dots, W_l$ for some $l \ge 1$
	(see \cref{fig:2satG} for an example).
	Furthermore, we have two \emph{unbounded gap groups}
	$W_0$ and $W_{l+1}$ representing
	the space to the left of the first separator vertex and to the right
	of the last separator vertex, respectively.
	
	Now we construct a 2-SAT formula
	whose solutions represent assignments of
	the pieces to gap groups.
	Note that some pieces 
	can only be placed into one gap group;
	hence we ignore them in this step.
	These are all pieces whose anchor lies on the middle level
	(e.g., the pieces with anchors $m_{14}$ and $m_{28}$
	in \cref{fig:3levelclp})
	and whose anchor is incident to a closed gap
	(e.g., the piece with anchor $t_8$ in \cref{fig:3levelclp}).
	Two pieces with the same anchor
	need to be assigned to different gap groups.
	(For example, in \cref{fig:2satG},
	$p_5$ and $p_6$ must lie in different gap groups.)
	Three or more pieces with the same anchor do
	not exist due to \cref{obs:no-three-pieces}.
	
	The assignment of pieces to gap groups depends also on
	the placement of the enclosed components of $\G_5$
	(including previous fingers); see \cref{fig:2satG}.
	To model these dependencies, we define a directed auxiliary graph~$H_\mathrm{gap}$ as follows.
	The nodes of $H_\mathrm{gap}$ are the pieces
	(that we do not ignore),
	the enclosed components (which both span only one band),
	and the gap groups.
	We now define the arc set of $H_\mathrm{gap}$ in order to
	express that certain pieces and components must
	be placed to the left (or right) of certain gap groups or other pieces and components.
	For an enclosed component~$C$,
	$H_\mathrm{gap}$ has an arc from
	the gap group node of~$W_i$ to $C$
	if $W_{i+1}$ is the first gap group
	into which $C$ can be placed due to its band
	and its constraints from backbone vertices.
	Symmetrically, $H_\mathrm{gap}$ has an arc from
	$C$ to the gap group vertex of $W_j$
	if $W_{j-1}$ is the last gap group
	into which $C$ can be placed due to its band
	and its constraints to backbone vertices.
	For an enclosed component or a piece $x$
	and an enclosed component or a piece $y$
	on distinct bands,
	$H_\mathrm{gap}$ has an arc from $x$ to $y$
	if there is a constraint from a vertex of $x$
	to a vertex of $y$
	(see \cref{fig:2satHgap} for an example).
	
	Next we describe our 2-SAT formula $\Phi$.
	For each piece $p$, $\Phi$ has a variable $x_p$
	whose truth value (\texttt{true}/\texttt{false}) represents
	the position of $p$ (left/right) with respect to its anchor.
	For two pieces $p$ and $q$ with the same anchor,
	we add the condition $x_p \ne x_q$,
	which can be expressed by the clauses $(x_p \lor x_q)$ and $(\lnot x_p \lor \lnot x_q)$ (see $p_5$ and $p_6$ in \cref{fig:2satG}).
	We now use $H_\mathrm{gap}$ to construct further clauses of~$\Phi$.
	Let $g_\mathrm{left}(p)$ and $g_\mathrm{right}(p)$ be the indices of the left and right gap group, respectively, where a piece $p$ can be placed.
	Note that these are the indices of the gap groups to the left and to the right of the anchor of~$p$, respectively.
	
	\begin{figure}[tb]
		\centering
		\begin{subfigure}[t]{\textwidth}
			\centering
			\includegraphics[page=1]{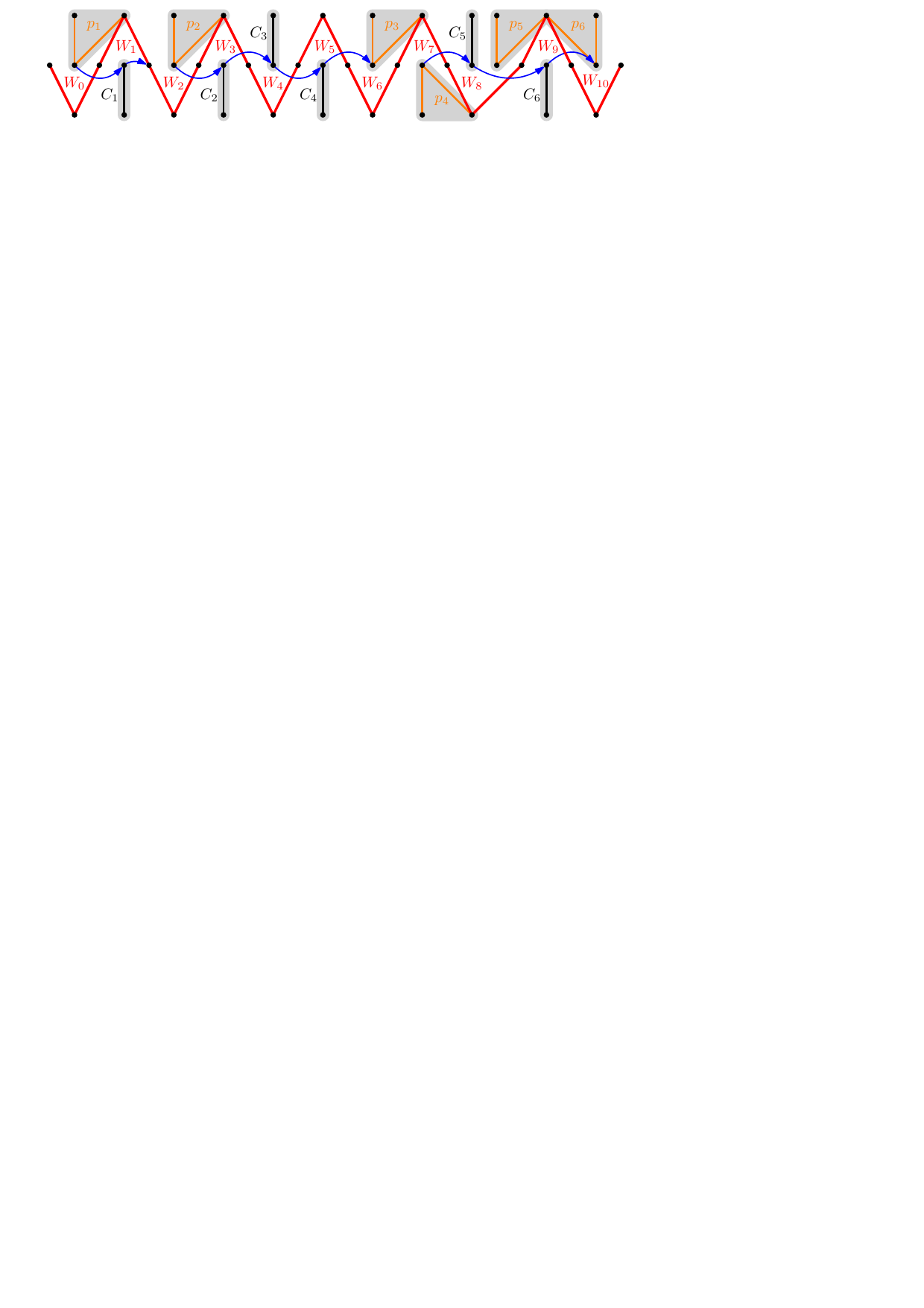}
			\subcaption{Clauses of~$\Phi$: $p_1$ is fixed to the left,
				$p_5$ and $p_6$ must be different,
				$p_2$ cannot be right while $p_3$ is left,
				$p_4$ and $p_6$ force each other to the left and right respectively.
				(Main component in red, pieces in orange, enclosed components in black
				and constraints in blue.)}
			\label{fig:2satG}
		\end{subfigure}
		
		\bigskip
		
		\begin{subfigure}[t]{.48\textwidth}
			\centering
			\includegraphics[page=3]{clp3_2sat}
			\subcaption{The graph $H_\mathrm{gap}$ with two types
				of constraints.  The blue constraints are directly
				induced by constraints in $G$ while the violet
				constraints are induced by planarity.}
			\label{fig:2satHgap}
		\end{subfigure}
		\hfill
		\begin{subfigure}[t]{.48\textwidth}
			\centering
			\includegraphics[page=4]{clp3_2sat}
			\subcaption{A solution of the 2-SAT instance allows
				for a consistent labeling of the nodes in
				$H_\mathrm{gap}$.  The depicted situation of two
				inconsistently labeled paths $p \leadsto q$ and
				$p' \leadsto q'$ cannot occur.}
			\label{fig:2satProof}
		\end{subfigure}
		\caption{A constrained level planar graph $\G$ and the corresponding
			graph $H_\mathrm{gap}$.}
		\label{fig:2sat}
	\end{figure}

	For every pair of a gap group $W_i$ ($i \in \{0, \dots, l+1\}$)
	and a piece $p$, we determine the length $L$ of a longest directed path
	from $W_i$ to $p$ in $H_\mathrm{gap}$ if such a path exists.
	This can be done in linear time since $H_\mathrm{gap}$ is acyclic (if $\G$ is constrained level planar).
	If a path exists, each node on the path needs to be assigned to
	a different gap group (in increasing order) since the arcs of $H_\mathrm{gap}$
	connect objects (i.e., pieces, enclosed components, or gap groups) on distinct bands.
	Hence, if $L > g_\mathrm{right}(p) - i$, then $\G_5$ is not constrained level planar.
	If $L = g_\mathrm{right}(p) - i$, then $p$ needs to lie on the right side of its anchor.
	Thus, we set $x_p$ to \texttt{false}.
	Otherwise, if $L < g_\mathrm{right}(p) - i$, then the path does not restrict the placement of~$p$.
	Symmetrically, if there is a path from $p$ to $W_i$,
	we let $L$ be the length of a longest such path.
	If $L > i - g_\mathrm{left}(p)$, then $\G_5$ is not constrained level planar.
	Else, if $L = i - g_\mathrm{left}(p)$, then $p$ needs to lie on the left side of its anchor.
	Thus, we set $x_p$ to \texttt{true} (see $p_1$ in \cref{fig:2sat}).
	Otherwise, if $L < i - g_\mathrm{left}(p)$, then the path does not restrict the placement of~$p$.
	
	For every pair $(p, q)$ of pieces,
	we determine the length $L$ of a longest directed path from $p$ to~$q$ if it exists.
	If $L > g_\mathrm{right}(q) - g_\mathrm{left}(p)$, then $\G_5$ is not constrained level planar.
	If $L = g_\mathrm{right}(q) - g_\mathrm{left}(p)$,
	then $p$ needs to be placed on the left side of its anchor
	while $q$ needs to be placed on the right side of it anchor
	(see $p_4$ and $p_6$ in \cref{fig:2sat}).
	Hence, we set $x_p$ to \texttt{true} and $x_q$ to \texttt{false}.
	If $L = g_\mathrm{right}(q) - g_\mathrm{left}(p) - 1$, then $p$ cannot be placed on the right side of its anchor
	if $q$ is placed on the left side of its anchor
	(see $p_2$ and $p_3$ in \cref{fig:2sat}).
	Hence, we add the clause $(p \lor \lnot q)$ to $\Phi$.
	Otherwise, the placements of $p$ and $q$ do not depend on each other.
	
	If $\Phi$ admits a feasible truth assignment,
	we assign the pieces to gap groups accordingly.
	We do this by adding constraints between
	the corresponding separator vertices and
	the middle-level vertices of the pieces.
	The resulting constrained level graph is $\G_6$.%
	
	\begin{lemma}\label{lem:3lclp_g6}
		The constrained level graph $\G_6$ is constrained level planar if and only if $\G_5$ is.
		Furthermore, $\G_6$ can be constructed from $\G_5$ in quadratic time.
	\end{lemma}
	\begin{proof}
		If~$\G_5$ is not constrained level planar, then certainly~$\G_6$ is
		not constrained level planar either (because it 
		is the same constrained level graph but with additional constraints).
		
		Otherwise assume that $\G_5$ is constrained level planar but $\G_6$ is not.
		This means that we have assigned the pieces to gap groups
		such that, for this assignment,
		$\G_5$ does not admit a constrained level planar drawing.
		
		Recall that our assignment takes all constraints to/from
		the backbone into account and that
		arbitrarily many enclosed components in one band
		can be assigned to the same gap (group).
		The only remaining case is that chains
		of enclosed components or pieces on alternating bands
		(which are both nodes of $H_\mathrm{gap}$)
		cannot be accommodated.
		It suffices to consider
		objects on alternating bands
		(as they appear in $H_\mathrm{gap}$)
		because two consecutive objects
		on the same band fit into
		the same gap group, and
		the predecessors of
		the first object
		are also predecessors of the
		second object due to
		transitivity constraints
		(symmetrically for successors).
		We claim, however, that we can specify a labeling~$\ell$ of the
		nodes of~$H_\mathrm{gap}$ to indices of gap groups
		such that all pieces and enclosed components
		can be accommodated while observing all constraints.
		
		First label the nodes corresponding to gap groups with their number
		and label the nodes corresponding to pieces
		with the gap group determined by the 2-SAT solution.
		Consider the longest directed paths between
		two labeled nodes $p$ and $q$
		with $\ell(p) \le \ell(q)$ in $H_\mathrm{gap}$.
		Note that $\ell(q) - \ell(p)$ cannot be less
		than the length of a path between $p$ and $q$
		because then we would have violated a constraint in
		the 2-SAT formula
		or there was no solution for~$\G_5$ already.
		We next show a strategy how to iteratively label
		the nodes on such paths.
		We repeatedly determine the smallest $k \in \mathbb{N}_0$
		such that there is a path that has labeled end nodes $p$ and $q$
		(which can also be enclosed components)
		but unlabeled inner nodes and has length $\ell(q) - \ell(p) - k$
		(i.e., a path with minimum ``slack'' for its labels).
		If such a path exists, we label its inner nodes with
		$\ell(p) + 1, \dots, \ell(q) - k - 1$.
		Note that this process terminates with a consistent labeling
		of pieces and enclosed components.
		
		Suppose for a contradiction that, in some step,
		there was an inner node~$z$ on a path between $p$ and $q$
		such that the label $\ell(z)$ became inconsistent
		with another path between~$p'$ and $q'$
		(where possibly $p=p'$ or $q=q'$);
		see \cref{fig:2satProof}.
		Without loss of generality, assume that $\ell(z) - \ell(p')$
		is less than the length of the path between $p'$ and $z$.
		As the path between $z$ and $q$ has length
		$\ell(q) - \ell(z) - k$, the path from $p'$ via $z$ to $q$
		has length at most $\ell(q) - \ell(p') - k - 1$
		and we would have labeled this path first.
		Hence, $\G_6$ admits a constrained-level planar drawing, too.
		
		Concerning planarity note that pieces and enclosed components
		are assigned to gap groups and hence, by implicit constraints,
		they need to lie within a single gap each.
		Since gaps do not contain any edges in their corresponding band,
		crossings with the backbone are impossible.
		If two pieces are forced to have crossing edges
		they may not have the same anchor because this would violate 
		the 2-SAT formula~$\Phi$. Otherwise, and in the cases of one
		piece and one enclosed components or two enclosed component,
		any pair of edges that is forced to cross implies
		a constraint cycle in $\G_5$.

		Regarding the running time,
		note that we can determine and index all gaps and gap groups in linear time.
		Then, for each enclosed component and each piece~$X$,
		we can find its gap group interval
		in linear time by taking into account the constraints
		between $X$ and the backbone.
		Overall, this takes quadratic time.
		To construct $H_\mathrm{gap}$,
		it remains to add arcs between non-gap-group nodes.
		For each pair of such nodes on different bands,
		we check whether there is a constraint between them.
		Again, this can be accomplished in quadratic time.
		The 2-SAT formula has at most a linear number of variables
		(corresponding to the pieces)
		and, thus, a quadratic number of clauses.
		To set up these clauses, we determine, for each piece,
		the longest paths in $H_\mathrm{gap}$ to all other pieces.
		Overall, this takes quadratic time.
		Finding a satisfying truth assignment of the 2-SAT formula
		takes time linear in its length~\cite{eis-ctmfp-SICOMP76},
		hence, quadratic time in our case.
	\end{proof}
	
	Note that the pieces that we have assigned to gap groups in the previous step
	have also been assigned to gaps
	and the orientations of their spines\footnote{%
		Recall that the	\emph{spine} of a caterpillar is
		the subgraph induced by the set of non-leaves.
	} have been fixed~--
	both due to implicit constraints.
	Now consider the other pieces.
	Recall that, due to \cref{obs:no-three-pieces},
	no three pieces share an anchor.
	If two pieces share an anchor and their
	placement to the left or right of this anchor
	is not restricted by constraints,
	then we arbitrarily assign them to distinct sides.
	This assignment is fixed by adding constraints
	from or to their anchor.
	Again, this fixes the orientation of their spines.
	For each remaining piece that is not restricted
	to a single gap due to constraints,
	we arbitrarily assign it to one of its two possible gaps.
	For each remaining piece where the orientation of its spine
	is not yet fixed due to constraints,
	we arbitrarily fix the orientation of its spine~--
	unless two pieces are incident to the same anchor and need to lie
	in the same gap.
	In this case we fix their orientation such that one extends
	to the left and the other extends to the right;
	see the two pieces with anchor $m_{14}$ in \cref{fig:3levelclp}.
	We call the resulting constrained level graph $\G_7$.
	
	\begin{lemma}\label{lem:3lclp_g7}
		The constrained level graph $\G_7$ is constrained level planar if and only if $\G_6$ is.
		Furthermore, $\G_7$ can be constructed from $\G_6$ in quadratic time.
	\end{lemma}
	\begin{proof}
		If~$\G_6$ is not constrained level planar, then certainly~$\G_7$ is
		not constrained level planar either (because it 
		is the same constrained level graph but with additional constraints).
		
		Otherwise assume that $\G_6$ is constrained level planar but $\G_7$ is not.
		First suppose that by assigning a piece~$p$ to a specific gap~$g$,
		the resulting constrained level planar graph
		does not admit a constrained level planar drawing any more.
		Note that then, in any constrained level planar drawing of $\G_6$,
		$p$ is assigned to the the gap on the other side of its anchor.
		If $p$ cannot be drawn inside of~$g$  without crossing edges
		of the main component, then there would have been
		an implicit constrained forcing $p$ to be drawn in the other gap.
		Further, observe that $p$ can be drawn arbitrarily close
		to a backbone edge incident to $p$'s anchor
		because otherwise there would be a chain of constraints
		from $p$ to a backbone vertex to $p$'s anchor (or a neighbor of its anchor)
		or the same chain in the other direction.
		Then, however, there would have been no choice for the gap of~$p$.
		
		If a vertex not in $p$ needs to be drawn between the vertices of~$p$
		due to constraints, then the orientation of $p$ is already fixed in~$\G_6$.
		Otherwise, $p$ can be drawn arbitrary small, and then its orientation
		does not affect the remaining drawing.
		
		It remains to argue about the running time.
		We can find all pieces in linear time and
		check whether they are fixed by considering
		the constraints of the involved vertices.
		This can be done in overall quadratic time.
		Fixing the gaps and orientations can again be done in quadratic time.
	\end{proof}
	
	Now let $H_{\G_7}'$ be the component--constraint graph of $\G_7$ without the main component.
	For both the upper and the lower band, there is a sequence of gaps
	where the enclosed components can be placed.
	Let the \emph{gap interval} of an enclosed component $C$ be the subsequence of
	gaps within its band where $C$ can be placed according to its constraints.
	We go through the enclosed components in some topological order of $H_{G_7}'$.
	We assign each component $C$ (by adding constraints)
	into the leftmost gap of its gap interval
	such that $C$ respects the constraints from components placed before.
	If a component cannot be assigned to a gap in its gap interval due to these
	constraints, $\G_7$ is not constrained level planar.
	We call the resulting constrained level graph $\G_8$.
	
	\begin{lemma}\label{lem:3lclp_g8}
		The constrained level graph $\G_8$ is constrained level planar if and only if $\G_7$ is.
		Furthermore, $\G_8$ can be constructed from $\G_7$ in quadratic time.
	\end{lemma}
	\begin{proof}
		If $\G_7$ is not constrained level planar, then $\G_8$ is certainly
		not constrained level planar since the underlying graphs are
		identical, but the set of constraints of $\G_8$ is a superset of
		that of~$\G_7$.
		
		Therefore, we can assume that $\G_7$ is constrained level planar.
		Let~$\Gamma_7$ be a constrained level planar drawing of~$\G_7$.
		Note that the backbones of~$\G_7$ and~$\G_8$ and hence their sets of
		gaps are identical.  Let~$\sigma$ be the topological order of~$H_{\G_7}'$
		which we used to assign the enclosed components to gaps when
		defining~$\G_8$.  Let $C$ be the first enclosed component in~$\sigma$
		such that~$C$ is assigned to a gap~$g_8$ in~$\G_8$, whereas $C$ lies
		in a different gap~$g_7$ in~$\Gamma_7$.  By construction,
		the gap $g_8$ is to the left of the gap $g_7$.
		We can construct a new constrained level
		graph~$\G_7^{(1)}$ from~$\G_7$ by fixing~$C$ to the gap~$g_8$ with constraints.
		
		We claim that the graph~$\G_7^{(1)}$ is constrained level planar.
		Note that the only difference between~$\G_7^{(1)}$ and~$\G_7$ is
		that $\G_7^{(1)}$ fixes~$C$ to a specific gap~$g_8$.  By the
		choice of~$g_8$, all constraints of type $a \prec_i c$ that
		involve a vertex~$a \not\in C$ and a vertex~$c \in C$ are satisfied
		in~$\G_7^{(1)}$.  On the other hand, assigning~$C$ to~$g_8$
		restricts the placement of components that succeed~$C$ in~$\sigma$
		less than the assignment of~$C$ to~$g_7$ in~$\Gamma_7$.
		Note that there is always space at the right side of~$g_7$.
		
		By repeating the above exchange argument, we show that there is a
		positive integer~$k$ such that $\G_8=\G_7^{(k)}$ and, hence, $\G_8$
		is constrained level planar due to the fact that the graphs
		$\G_7^{(k-1)}, \dots, \G_7^{(1)}$ are all constrained level planar.
		
		We can determine the gap intervals of all enclosed components
		in quadratic total time because we have at most a quadratic
		number of constraints in~$\G_7$.
		Furthermore, we can construct $H_{\G_7}'$ in quadratic time
		(there can be a linear number of enclosed components in $\G_7$
		and a quadratic number of constraints between them).
		We can sort $H_{\G_7}'$ topologically in time linear in its size,
		that is, in at most quadratic time.
	\end{proof}
	
	\begin{figure}[tb]
		\centering
		\begin{subfigure}[t]{\textwidth}
			\centering
			\includegraphics[page=2]{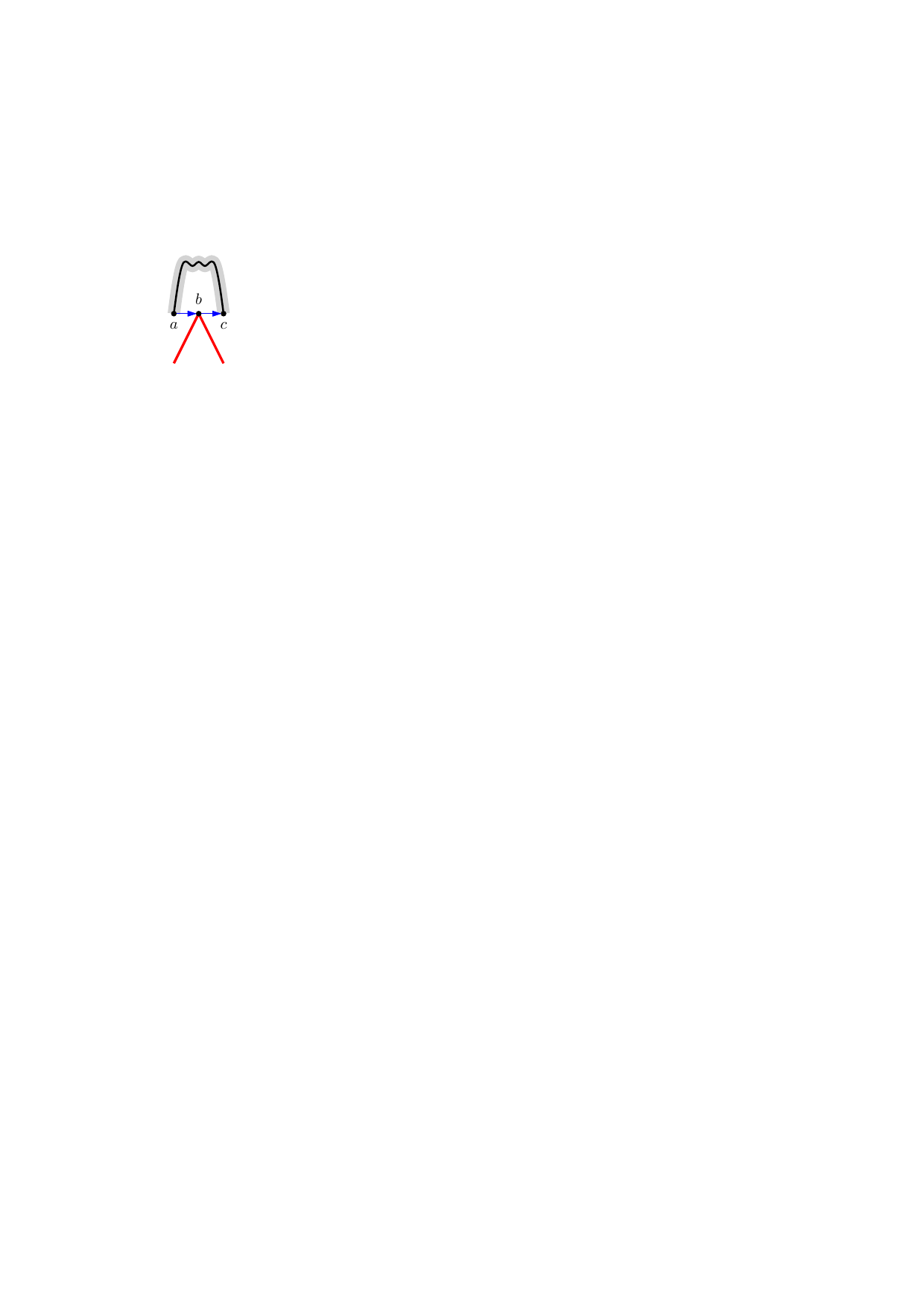}
			\subcaption{Constrained level planar drawing of
				$g$ with middle-level vertices $v_1, \dots, v_{13}$
				and its enclosed components.
				Vertices of $A_g$ are marked by (green) dotted ovals.
				Constraints are indicated by blue arcs
				(transitive constraints are omitted).
			}
			\label{fig:gapsort-a}
		\end{subfigure}
		
		\bigskip
		
		\begin{subfigure}[t]{.565\textwidth}
			\centering
			\includegraphics[page=3]{clp3_Hg}
			\subcaption{The directed auxiliary graph~$A_g$ has a
				node for every enclosed component or piece and every
				non-covered middle-level vertex (transitive edges
				are omitted).}
			\label{fig:gapsort-b}
		\end{subfigure}
		\hfill
		\begin{subfigure}[t]{.385\textwidth}
			\centering \includegraphics[page=1]{clp3_Hg}
			\subcaption{Vertex $b$ of the main component (red) is \emph{covered} by an enclosed component or a piece.}
			\label{fig:gapsort-c}
		\end{subfigure}
		
		\caption{Arranging the enclosed components $C_1, \dots, C_5$
			in a gap~$g$ of the main component.  Constraints between
			vertices of $\G_8$ are indicated by blue arrows; arcs
			between nodes of~$A_g$ are indicated by green arrows.  For
			clarity, implicit constraints, constraints along the
			backbone, and transitive arcs are omitted.}
		\label{fig:gapsort}
	\end{figure}
	
	It remains to find a total order of all enclosed components,
	pieces, and middle-level vertices of the backbone
	within each gap~$g$, and to fix the orientation of each enclosed component.
	If there are constraints of the type $a \prec_2 b \prec_2 c$,
	for a vertex $b$ of the backbone and vertices $a, c$ in an enclosed
	component or a piece $C$  ($a, b, c$ on the middle level), then~$b$ lies 
	in the span of $C$ in every constrained level planar drawing of $\G_8$.
	We say that $b$ is \emph{covered} (by $C$);
	see \cref{fig:gapsort-c}.
	
	For every gap~$g$, we define a directed auxiliary graph~$A_g$ that has
	a node for every enclosed component
	and every piece assigned to~$g$ via constraints and
	for every middle-level vertex in~$g$ that is not covered;
	see \cref{fig:gapsort}.
	Note that these middle-level
	vertices are totally ordered
	due to \cref{obs:fix-backbone}.
	In $A_g$, two nodes~$x$ and~$y$ are connected by an arc $(x,y)$
	if there is a constraint between two vertices $v_x$ and $v_y$
	in~$\G_8$ that are represented or covered by $x$ and $y$, respectively.
	We sort $A_g$ topologically and, for each pair $(x, y)$ of consecutive nodes,
	we add a constraint for each pair $(v_x, v_y)$ of vertices on the same level
	with the property that $v_x$ is represented or covered by~$x$
	and~$v_y$ is represented or covered by~$y$.
	This results in a total order of the enclosed components, pieces,
	and uncovered middle-level vertices of the main component.
	
	Now, for each enclosed component,
	we fix the orientation of its spine by testing both possibilities.
	(If both orientations are possible,
	we keep an arbitrary one.)
	Note that this choice can be made independently for 
	each enclosed component.
	We call the resulting constrained level graph $\G_9$.
	
	\begin{lemma}\label{lem:3lclp_g9}
		The constrained level graph $\G_9$ is constrained level planar if and only if $\G_8$ is.
		Furthermore, $\G_9$ can be constructed from $\G_8$ in quadratic time.
	\end{lemma}
	\begin{proof}
		If $\G_8$ is not constrained level planar, then $\G_9$ is certainly
		not constrained level planar since the underlying graphs are
		identical, but the set of constraints of $\G_9$ is a superset of
		that of~$\G_8$.
		
		Otherwise, assume that $\G_8$ is constrained level planar but $\G_9$ is not.
		If we add the new constraints (in an arbitrary order) one by one to $\G_8$,
		then there is a constraint $u \prec_i v$ (for some $i \in [3]$)
		such that adding $u \prec_i v$ makes
		the resulting constrained level graph not constrained level planar.
		Note that $u$ and $v$ cannot both be vertices of the backbone
		because the backbone vertices are already totally ordered in~$\G_8$.
		Ler~$x$ be the node in~$A_G$ to which $u$ belongs and let $y$
		be the node of~$A_g$ to which~$v$ belongs.  We consider two
		cases.
		
		\noindent
		Case~I: $x=y$.
		
		\noindent
		This means that~$x$ is an enclosed component whose orientation
		has been fixed now because there was no restriction
		due to constraints.
		This choice of the orientation does not influence the rest of
		the constrained level graph,
		so in any constrained level planar drawing of $\G_8$,
		we can flip the orientation and still have a constrained level planar drawing.
		
		\noindent
		Case~II: $x \ne y$.  There are three subcases depending on
		what~$x$ and~$y$ represent.
		
		\noindent
		First, we assume that $x$ and $y$ both represent enclosed
		components or pieces.  Clearly, there is no arc between $x$
		and~$y$ in~$A_g$ (because no constrained level planar drawing
		of~$\G_8$ contradicts $A_g$ and $\G_9$ is based on a
		topological sorting of~$A_g$)
		and the vertices of $x$ and $y$ do not interleave
		because they share a common band.
		We claim that we can modify any constrained level planar
		drawing $\Gamma$ of $\G_8$ where $y$ precedes $x$
		such that $x$ precedes $y$ and the drawing is still constrained level planar.
		Note that at most one of $x$ and $y$ has a covered vertex
		because otherwise there would be an arc in $A_g$.
		Say $y$ may cover some vertices and let $W$ be the set of vertices
		that are not part of $x$ or $y$
		but have a constraint to a vertex of $x$
		and lie, within~$\Gamma$, between the first vertex of $y$ and the last vertex of~$x$.
		Clearly, each node of $A_g$ containing a vertex from $W$
		has an arc to $x$, but then it does not have an arc from~$y$
		(as otherwise, there would be an arc from $y$ to $x$ due to transitivity).
		Hence, in $\Gamma$, we can move the vertices in $W$ and~$x$
		to the left of the vertices of $y$ (while maintaining their internal order)
		and still have a constrained level planar drawing.
		So, $x$ and $y$ cannot both be enclosed components or pieces.
		
		Second, we assume that $x$ represents an enclosed component or
		a piece and that the vertex $v$ (represented by~$y$) is
		a middle-level vertex of the backbone.
		Clearly, $x$ cannot cover a vertex then.
		Consider any constrained level planar drawing $\Gamma$ of $\G_8$.
		If $v$ lies in $\Span_\Gamma(x)$, we can move $v$ immediately
		to the right of the rightmost vertex of $\Span_\Gamma(x)$
		because there is no constraint between vertices in $x$ and $v$.
		If $v$ lies to the left of all vertices from $x$,
		we again define $W$ as the set of vertices
		that are not part of $x$ or $y$
		but have a constraint to a vertex of~$x$
		and lie, within~$\Gamma$, between $v$ and the last vertex of~$x$.
		By the same argument as before, we can move all vertices in $W$ and $x$
		to the left of~$v$ and still have a constrained	level planar drawing.
		
		Third, we assume that $y$ represents an enclosed component or
		a piece and that the vertex $u$ (represented by~$x$) is a
		middle-level vertex of the backbone.  This subcase is
		symmetric to the second subcase.
		
		This finishes the case distinction.  In every case we have
		obtained the desired contradiction.
		
		It remains to analyze the running time for
		constructing~$\G_9$.  Detecting all covered vertices can be
		done in at most quadratic time: for each middle-level vertex~$b$,
		first iterate over each incoming constraint $a \prec_2 b$
		and mark the enclosed component or piece that $a$ is part of,
		afterwards iterate over each outgoing constraint $b \prec_2 c$
		and check whether the enclosed component or piece that $c$ is part of
		is marked.
		Constructing the auxiliary graphs can then be done in at most
		quadratic time over all gaps.  Each auxiliary graph can be
		sorted topologically in time proportional to its size.
		In total, this takes at most quadratic time.
		Adding the new constraints and fixing the orientation
		of each enclosed component can both again be done in
		at most quadratic time.
	\end{proof}
	
	Now we consider the remaining leaves and group them
	by their neighbors.
	For each such \emph{group}, we find a topological order and fix it.
	If the group lies on the middle level, then we need to include
	the backbone vertices in the topological order.
	Now let $\G_{10}$ be the constrained level graph
	obtained from $\G_9$ by fixing the orders of every group
	of leaves.
	
	\begin{lemma}\label{lem:leaves2}
		The constrained level graph $\G_{10}$ is constrained level planar if and only if $\G_9$ is.
		Furthermore, $\G_{10}$ can be constructed from $\G_9$ in linear time.
	\end{lemma}
	
	\begin{proof}
		If there is a constrained level planar drawing~$\Gamma_{10}$ of~$\G_{10}$,
		then it is clearly also a constrained level planar drawing of~$\G_9$.
		
		If there is a constrained level planar drawing~$\Gamma_9$ of $\G_9$,
		we can obtain a constrained level planar drawing of $\G_{10}$ by
		reordering the leaves of every group according to the topological
		order they have in $\G_{10}$. If this violates a constraint in
		$\G_{10}$, that constraint must have been in $\G_9$ already.
		
		We claim that groups on the same level are independent.
		Consider the unique neighbors of two such groups.
		Recall that these neighbors are not part of the backbone.
		If both neighbors lie on the same level, the
		spans of their groups cannot overlap due to planarity constraints.
		If one lies on the top level, the other on the bottom level,
		their groups are separated by the backbone. In other words,
		if the spans of the two groups overlapped,
		the backbone would be blocked.
		
		Finding and grouping all leaves can be done in linear time
		by going through the graph once.
		Sorting all groups topologically can be done in linear
		total time.
	\end{proof}
	
	Now, we have added sufficiently many constraints such that, on each
	level, the corresponding vertices of~$\G_{10}$ are totally ordered.
	This yields a polynomial-time algorithm to check whether~\G is
	constrained level planar and, if yes,
	to compute a constrained level planar drawing of~\G.
	Our estimation of the running time is most likely not tight.
	
	\begin{theorem}
		\label{thm:clp-3lvl}
		$3$-level \CLP can be decided in $\mathcal{O}(n^5)$ time,
		where $n$ is the number of vertices in the input graph.
		More precisely, given a 3-level constrained level graph~\G,
		we can test in $\mathcal{O}(n^5)$ time whether~\G
		is constrained level planar and, if yes, we can compute a
		constrained level planar drawing of~\G within the same time bound.
	\end{theorem}
	
	\begin{proof}
		According to the Lemmas~\ref{lem:3lclp_g1} to~\ref{lem:leaves2},
		\G is constrained level planar if and only if~$\G_{10}$ is.
		We next argue that the constraints of~$\G_{10}$
		are total orders for all three levels.
		This follows from the following facts.
		\begin{itemize}%
			\item There are no isolated vertices.
			(\cref{lem:isolated_vertices})
			\item
			There is only one component (the main component)
			that spans three levels;
			the other components are enclosed components on two levels.
			(\cref{lem:3lclp_g1,lem:fingers})
			\item The main component consists of the backbone and pieces, which are caterpillars on two levels.
			(\cref{obs:3lclp_pieces,lem:fingers})
			\item The backbone vertices are totally ordered.
			(\cref{obs:fix-backbone})
			\item There are no leaves adjacent to the backbone and the remaining
			leaves (including potentially $s$ an $t$) are totally ordered.
			(\cref{lem:leaves,lem:leaves2})
			\item The vertices in every enclosed component
			and in every piece are totally ordered.
			(\cref{lem:3lclp_g7,lem:3lclp_g9})
			\item The backbone vertices, pieces, and enclosed components
			are totally ordered.
			(\cref{lem:3lclp_g9})
		\end{itemize}
		
		Therefore, for each $i \in [3]$, the constraints of $\G_{10}$
		for level~$i$ give a total order of the vertices.
		As $\G_{10}$ is proper, we can check in quadratic total time,
		for each pair of edges, whether it induces a crossing.
		The graph~\G is constrained level planar if and only if no pair of
		edges induces a crossing.
		
		For computing a constrained level planar drawing of~\G
		in the affirmative case,
		start with the constrained level planar drawing of~$\G_{10}$
		and reverse the operations done before to obtain
		constrained level planar drawings of~$\G_9, \dots, \G_1, \G$.
		Note that all operations can straight-forwardly be reversed
		in at most the same asymptotic running time as before.
		
		It remains to bound the running time.
		We first guess the first and last vertex on the middle level
		$s$ and $t$ and test all $\mathcal{O}(n^2)$ pairs.
		This guess yields the first and the last component
		of a potential hook chain
		(see \cref{lem:3lclp_hookchain2}).
		For this pair of components, we test four sets of vertices as hook anchors
		(see \cref{lem:3lclp_hookanchor});
		this does not influence the asymptotic running time.
		Given the hook chain and the hook anchors $s$ and $t$,
		we connect the hook chain in linear time (see \cref{lem:3lclp_g1}).
		Then we add, in linear time, constraints that fix $s$ and $t$
		as the first and the last
		vertex on the middle level, which implicitly fixes the
		orientation of the backbone (see \cref{lem:3lclp_g2b}).
		For each guess of $s$ and $t$, we construct, starting with $\G_2$,
		the sequence~$\G_3, \G_4, \dots, \G_{10}$,
		which can be done in $\mathcal{O}(n^2)$ time each
		(see \cref{lem:fix-middle-level-backbone,lem:fingers,lem:leaves,lem:3lclp_g6,lem:3lclp_g7,lem:3lclp_g8,lem:3lclp_g9,lem:leaves2}).
		It remains to consider the running time of checking for and adding
		implicit constraints (see the beginning of \cref{sec:clp-3lvl}). To this end, note that (i) each of
		the $O(n^2)$ constraints can be added at most once and (ii)
		every time a constraint is added, (ii-a) at most $O(n)$
		many other existing constraints must be checked for
		transitivity constraints and (ii-b) at most $O(n^2)$ combinations
		of neighbors must be checked for planarity constraints.
		However, for each pair of edges $ab$ and $cd$ on the same band,
		note that the constraint $b\prec d$ is only checked
		if the constraint $a\prec c$ is added, and vice versa.
		Hence, there is only a total of $O(n^2)$ checks for planarity constraints.
		In summary, checking for and adding implicit constraints
		takes $\mathcal{O}(n^3)$ total time over all steps.
		
		Hence, our algorithm runs in $\mathcal{O}(n^5)$ total time.
	\end{proof}

	\section{Open Problems and Further Remarks}
	\label{sec:conc}
	
	First, we propose the following two open problems.
	
	\begin{enumerate}
		\item We have shown that 3-level \CLP can be solved in
		$\mathcal{O}(n^5)$ time (\cref{thm:clp-3lvl}).  Our approach has
		two bottlenecks.  First, we do not know how to add all implicit
		constraints (in particular the implicit transitivity constraints) in 
		$o(n^3)$ time after each step of the algorithm.
		Note that after each step, we could run the algorithm for computing the transitive closure
		in the same running time as matrix multiplication~\cite{FischerM71,Munro71},
		for which the currently best known algorithm has
		a running time in $\mathcal{O}(n^{2.37216})$~\cite{DBLP:conf/soda/WilliamsXXZ24}.
		However, if the constraints added by this algorithm
		lead to new planarity constraints, which in turn imply further transitivity constraints,
		this could trigger a chain of dependences, which we do not
		know how to handle in overall $o(n^3)$ time.
		Second, our algorithm relies on guessing the first and last vertex
		of the backbone, which means that a quadratic number of guesses need
		to be checked, each in currently $\mathcal{O}(n^3)$ time.
		Can we instead directly construct the backbone or guess just one of
		the two vertices and compute the other one?
		\item Consider the problem \textsc{Connected} \CLP, which is the
		special case of \CLP where the given constrained level graph must be
		connected.  It is easy to make the graph in our NP-hardness proof
		for 4-level \CLP connected by adding two new levels, so 6-level
		\textsc{Connected} \CLP is NP-hard.  What is the complexity of
		5-level \textsc{Connected} \CLP?
	\end{enumerate}
	
	Finally, we remark that a reviewer pointed us
	to~\cite{dfklmnrrww-pclgd-Algorithmica08},
	whose authors
	study problems where the input is an undirected graph {\it without} a level assignment and
	the task is to find a crossing-free drawing with y-monotone edges that, if
	interpreted as a level planar drawing,
	satisfies certain criteria, e.g., being proper and having
	height at most $h$.
	At the end, they claim an FPT-algorithm for problem variants
	``where some vertices have been
	preassigned to layers and some vertices must respect a given partial order'' \cite[Item 5 in Theorem 5]{dfklmnrrww-pclgd-Algorithmica08}.
	As suggested by the reviewer, this variant appears to be \CLP, which would contradict our hardness results.
	However, the statement of \cite[Theorem 5]{dfklmnrrww-pclgd-Algorithmica08} is rather unclear and comes without proof or further explanation.
	In conversation with the authors, we confirmed that they never considered horizontal constraints (``$u$ is to the right of $v$'') and, so, their claim is not related to our results.
	We presume that they refer to vertical constraints (``$u$ is above~$v$'') instead.

	\pdfbookmark[1]{References}{References}
	\bibliographystyle{plainurl}
	\bibliography{olp}

\begin{thebibliography}{10}

\bibitem{DBLP:journals/tcs/AngeliniLBFR15}
Patrizio Angelini, Giordano {Da Lozzo}, Giuseppe {Di Battista}, Fabrizio Frati,
  and Vincenzo Roselli.
\newblock The importance of being proper: (in clustered-level planarity and
  {$T$}-level planarity).
\newblock {\em Theor. Comput. Sci.}, 571:1--9, 2015.
\newblock Conference version in Proc.\ GD 2014
  (\href{https://doi.org/10.1007/978-3-662-45803-7_21}{\texttt{doi:10.1007/978-3-662-45803-7\_21}}).
\newblock \href {https://doi.org/10.1016/j.tcs.2014.12.019}
  {\path{doi:10.1016/j.tcs.2014.12.019}}.

\bibitem{DBLP:journals/tcs/AngeliniLBFPR20}
Patrizio Angelini, Giordano~Da Lozzo, Giuseppe~Di Battista, Fabrizio Frati,
  Maurizio Patrignani, and Ignaz Rutter.
\newblock Beyond level planarity: Cyclic, torus, and simultaneous level
  planarity.
\newblock {\em Theor. Comput. Sci.}, 804:161--170, 2020.
\newblock Conference version in Proc.\ GD 2016
  (\href{https://doi.org/10.1007/978-3-319-50106-2_37}{\tt
  doi:10.1007/978-3-319-50106-2\_37}).
\newblock \href {https://doi.org/10.1016/J.TCS.2019.11.024}
  {\path{doi:10.1016/J.TCS.2019.11.024}}.

\bibitem{DBLP:journals/jgaa/BachmaierBF05}
Christian Bachmaier, Franz{-}Josef Brandenburg, and Michael Forster.
\newblock Radial level planarity testing and embedding in linear time.
\newblock {\em J. Graph Algorithms Appl.}, 9(1):53--97, 2005.
\newblock \href {https://doi.org/10.7155/jgaa.00100}
  {\path{doi:10.7155/jgaa.00100}}.

\bibitem{DBLP:conf/esa/BachmaierB08}
Christian Bachmaier and Wolfgang Brunner.
\newblock Linear time planarity testing and embedding of strongly connected
  cyclic level graphs.
\newblock In Dan Halperin and Kurt Mehlhorn, editors, {\em Proc. 16th Ann.
  European Sympos. Algorithms (ESA)}, volume 5193 of {\em LNCS}, pages
  136--147. Springer, 2008.
\newblock \href {https://doi.org/10.1007/978-3-540-87744-8_12}
  {\path{doi:10.1007/978-3-540-87744-8_12}}.

\bibitem{DBLP:conf/compgeom/BlazejKKS0024}
V{\'{a}}clav Blazej, Boris Klemz, Felix Klesen, Marie~Diana Sieper, Alexander
  Wolff, and Johannes Zink.
\newblock Constrained and ordered level planarity parameterized by the number
  of levels.
\newblock In Wolfgang Mulzer and Jeff~M. Phillips, editors, {\em Proc. 40th
  Int. Symp. Comput. Geom. (SoCG)}, volume 293 of {\em LIPIcs}, pages
  20:1--20:16. Schloss Dagstuhl~-- Leibniz-Zentrum f{\"{u}}r Informatik, 2024.
\newblock \href {https://doi.org/10.4230/LIPICS.SOCG.2024.20}
  {\path{doi:10.4230/LIPICS.SOCG.2024.20}}.

\bibitem{Bodlaender2022}
Hans~L. Bodlaender, Carla Groenland, Hugo Jacob, Lars Jaffke, and Paloma~T.
  Lima.
\newblock {XNLP}-completeness for parameterized problems on graphs with a
  linear structure.
\newblock In Holger Dell and Jesper Nederlof, editors, {\em Proc. 17th Int.
  Symp. Param. Exact Comput. (IPEC)}, volume 249 of {\em LIPIcs}, pages
  8:1--8:18. Schloss Dagstuhl -- Leibniz-Zentrum f{\"u}r Informatik, 2022.
\newblock \href {https://doi.org/10.4230/LIPIcs.IPEC.2022.8}
  {\path{doi:10.4230/LIPIcs.IPEC.2022.8}}.

\bibitem{XNLP2021}
Hans~L. Bodlaender, Carla Groenland, Jesper Nederlof, and Céline M.~F.
  Swennenhuis.
\newblock Parameterized problems complete for nondeterministic {FPT} time and
  logarithmic space.
\newblock In {\em Proc. 62nd Ann. IEEE Symp. Foundat. Comput. Sci. (FOCS)},
  pages 193--204, 2022.
\newblock \href {https://doi.org/10.1109/FOCS52979.2021.00027}
  {\path{doi:10.1109/FOCS52979.2021.00027}}.

\bibitem{DBLP:conf/soda/BrucknerR17}
Guido Br{\"{u}}ckner and Ignaz Rutter.
\newblock Partial and constrained level planarity.
\newblock In Philip~N. Klein, editor, {\em Proc. 28th Ann. {ACM-SIAM} Symp.
  Discrete Algorithms (SODA)}, pages 2000--2011. {SIAM}, 2017.
\newblock \href {https://doi.org/10.1137/1.9781611974782.130}
  {\path{doi:10.1137/1.9781611974782.130}}.

\bibitem{DBLP:conf/isaac/BrucknerR20}
Guido Br{\"{u}}ckner and Ignaz Rutter.
\newblock An {SPQR}-tree-like embedding representation for level planarity.
\newblock In Yixin Cao, Siu{-}Wing Cheng, and Minming Li, editors, {\em Proc.
  31st Int. Symp. Algorithms Comput. (ISAAC)}, volume 181 of {\em LIPIcs},
  pages 8:1--8:15. Schloss Dagstuhl~-- Leibniz-Zentrum f{\"{u}}r Informatik,
  2020.
\newblock \href {https://doi.org/10.4230/LIPIcs.ISAAC.2020.8}
  {\path{doi:10.4230/LIPIcs.ISAAC.2020.8}}.

\bibitem{CyganFKLMPPS2015}
Marek Cygan, Fedor~V. Fomin, \L{}ukasz Kowalik, Daniel Lokshtanov, D{\'{a}}niel
  Marx, Marcin Pilipczuk, Micha\l{} Pilipczuk, and Saket Saurabh.
\newblock {\em Parameterized Algorithms}.
\newblock Springer, Cham, 2015.
\newblock \href {https://doi.org/10.1007/978-3-319-21275-3}
  {\path{doi:10.1007/978-3-319-21275-3}}.

\bibitem{DBLP:journals/tsmc/BattistaN88}
Giuseppe {Di Battista} and Enrico Nardelli.
\newblock Hierarchies and planarity theory.
\newblock {\em {IEEE} Trans. Syst. Man Cybern.}, 18(6):1035--1046, 1988.
\newblock \href {https://doi.org/10.1109/21.23105}
  {\path{doi:10.1109/21.23105}}.

\bibitem{dfklmnrrww-pclgd-Algorithmica08}
Vida Dujmovi{\'c}, Michael~R. Fellows, Matthew Kitching, Giuseppe Liotta,
  Catherine McCartin, Naomi Nishimura, Prabhakar Ragde, Frances Rosamond, Sue
  Whitesides, and David~R. Wood.
\newblock On the parameterized complexity of layered graph drawing.
\newblock {\em Algorithmica}, 52:267--292, 2008.
\newblock \href {https://doi.org/10.1007/s00453-007-9151-1}
  {\path{doi:10.1007/s00453-007-9151-1}}.

\bibitem{DBLP:journals/algorithmica/ElberfeldST15}
Michael Elberfeld, Christoph Stockhusen, and Till Tantau.
\newblock On the space and circuit complexity of parameterized problems:
  Classes and completeness.
\newblock {\em Algorithmica}, 71(3):661--701, 2015.
\newblock \href {https://doi.org/10.1007/s00453-014-9944-y}
  {\path{doi:10.1007/s00453-014-9944-y}}.

\bibitem{eis-ctmfp-SICOMP76}
Shimon Even, Alon Itai, and Adi Shamir.
\newblock On the complexity of timetable and multicommodity flow problems.
\newblock {\em SIAM J. Comput.}, 5(4):691--703, 1976.
\newblock \href {https://doi.org/10.1137/0205048} {\path{doi:10.1137/0205048}}.

\bibitem{FellowsHRV09}
Michael~R. Fellows, Danny Hermelin, Frances~A. Rosamond, and St{\'{e}}phane
  Vialette.
\newblock On the parameterized complexity of multiple-interval graph problems.
\newblock {\em Theor. Comput. Sci.}, 410(1):53--61, 2009.
\newblock \href {https://doi.org/10.1016/j.tcs.2008.09.065}
  {\path{doi:10.1016/j.tcs.2008.09.065}}.

\bibitem{FischerM71}
Michael~J. Fischer and Albert~R. Meyer.
\newblock Boolean matrix multiplication and transitive closure.
\newblock In {\em Proc. 12th Ann. Symp. Switching and Automata Theory (SWAT)},
  pages 129--131. {IEEE} Computer Society, 1971.
\newblock \href {https://doi.org/10.1109/swat.1971.4}
  {\path{doi:10.1109/swat.1971.4}}.

\bibitem{DBLP:conf/sofsem/ForsterB04}
Michael Forster and Christian Bachmaier.
\newblock Clustered level planarity.
\newblock In Peter van Emde~Boas, Jaroslav Pokorn{\'{y}}, M{\'{a}}ria
  Bielikov{\'{a}}, and Julius Stuller, editors, {\em Proc. 30th Conf. Curr.
  Trends Theory \& Practice Comput. Sci. (SOFSEM)}, volume 2932 of {\em LNCS},
  pages 218--228. Springer, 2004.
\newblock \href {https://doi.org/10.1007/978-3-540-24618-3_18}
  {\path{doi:10.1007/978-3-540-24618-3_18}}.

\bibitem{fulek2013hanani}
Radoslav Fulek, Michael~J. Pelsmajer, Marcus Schaefer, and Daniel
  {\v{S}}tefankovi{\v{c}}.
\newblock Hanani--{T}utte, monotone drawings, and level-planarity.
\newblock In {\em Thirty Essays on Geometric Graph Theory}, pages 263--287.
  Springer, 2013.
\newblock \href {https://doi.org/10.1007/978-1-4614-0110-0_14}
  {\path{doi:10.1007/978-1-4614-0110-0_14}}.

\bibitem{GJ75}
Michael~R. Garey and David~S. Johnson.
\newblock Complexity results for multiprocessor scheduling under resource
  constraints.
\newblock {\em SIAM J. Comput.}, 4(4):397--411, 1975.
\newblock \href {https://doi.org/10.1137/0204035} {\path{doi:10.1137/0204035}}.

\bibitem{DBLP:journals/siamcomp/GargT01}
Ashim Garg and Roberto Tamassia.
\newblock On the computational complexity of upward and rectilinear planarity
  testing.
\newblock {\em {SIAM} J. Comput.}, 31(2):601--625, 2001.
\newblock Conference version in Proc.\ GD 1994
  (\href{https://doi.org/10.1007/3-540-58950-3_384}{\texttt{doi:10.1007/3-540-58950-3\_384}}).
\newblock \href {https://doi.org/10.1137/S0097539794277123}
  {\path{doi:10.1137/S0097539794277123}}.

\bibitem{HS71}
Frank Harary and Allen Schwenk.
\newblock Trees with {Hamiltonian} square.
\newblock {\em Mathematika}, 18(1):138--140, 1971.
\newblock \href {https://doi.org/10.1112/S0025579300008494}
  {\path{doi:10.1112/S0025579300008494}}.

\bibitem{DBLP:conf/gd/HeathP95}
Lenwood~S. Heath and Sriram~V. Pemmaraju.
\newblock Recognizing leveled-planar dags in linear time.
\newblock In Franz{-}Josef Brandenburg, editor, {\em Proc. Int. Symp. Graph
  Drawing (GD)}, volume 1027 of {\em LNCS}, pages 300--311. Springer, 1995.
\newblock \href {https://doi.org/10.1007/BFb0021813}
  {\path{doi:10.1007/BFb0021813}}.

\bibitem{DBLP:journals/jda/HongN10}
Seok{-}Hee Hong and Hiroshi Nagamochi.
\newblock Convex drawings of hierarchical planar graphs and clustered planar
  graphs.
\newblock {\em J. Discrete Algorithms}, 8(3):282--295, 2010.
\newblock \href {https://doi.org/10.1016/j.jda.2009.05.003}
  {\path{doi:10.1016/j.jda.2009.05.003}}.

\bibitem{DBLP:conf/gd/JungerLM97}
Michael J{\"{u}}nger, Sebastian Leipert, and Petra Mutzel.
\newblock Pitfalls of using {PQ}-trees in automatic graph drawing.
\newblock In Giuseppe {Di Battista}, editor, {\em Proc. 5th Int. Symp. Graph
  Drawing (GD)}, volume 1353 of {\em LNCS}, pages 193--204. Springer, 1997.
\newblock \href {https://doi.org/10.1007/3-540-63938-1_62}
  {\path{doi:10.1007/3-540-63938-1_62}}.

\bibitem{DBLP:conf/gd/JungerLM98}
Michael J{\"{u}}nger, Sebastian Leipert, and Petra Mutzel.
\newblock Level planarity testing in linear time.
\newblock In Sue Whitesides, editor, {\em Proc. 6th Int. Symp. Graph Drawing
  (GD)}, volume 1547 of {\em LNCS}, pages 224--237. Springer, 1998.
\newblock \href {https://doi.org/10.1007/3-540-37623-2_17}
  {\path{doi:10.1007/3-540-37623-2_17}}.

\bibitem{DBLP:conf/esa/Klemz21}
Boris Klemz.
\newblock Convex drawings of hierarchical graphs in linear time, with
  applications to planar graph morphing.
\newblock In Petra Mutzel, Rasmus Pagh, and Grzegorz Herman, editors, {\em
  Proc.\ 29th Ann. Europ. Symp. Algorithms (ESA)}, volume 204 of {\em LIPIcs},
  pages 57:1--57:15. Schloss Dagstuhl~-- Leibniz-Zentrum f{\"{u}}r Informatik,
  2021.
\newblock \href {https://doi.org/10.4230/LIPIcs.ESA.2021.57}
  {\path{doi:10.4230/LIPIcs.ESA.2021.57}}.

\bibitem{DBLP:journals/talg/KlemzR19}
Boris Klemz and G{\"{u}}nter Rote.
\newblock Ordered level planarity and its relationship to geodesic planarity,
  bi-monotonicity, and variations of level planarity.
\newblock {\em {ACM} Trans. Algorithms}, 15(4):53:1--53:25, 2019.
\newblock Conference version in Proc.\ GD 2017
  (\href{https://doi.org/10.1007/978-3-319-73915-1_34}{\texttt{doi:10.1007/978-3-319-73915-1\_34}}).
\newblock \href {https://doi.org/10.1145/3359587} {\path{doi:10.1145/3359587}}.

\bibitem{clp-vc}
Boris Klemz and Marie~Diana Sieper.
\newblock Constrained level planarity is {FPT} with respect to the vertex cover
  number.
\newblock In Karl Bringmann, Martin Grohe, Gabriele Puppis, and Ola Svensson,
  editors, {\em Proc. 51st Int. Colloq. Autom. Lang. Program. (ICALP)}, volume
  297 of {\em LIPIcs}, pages 99:1--99:17. Schloss Dagstuhl~-- Leibniz-Zentrum
  f{\"u}r Informatik, 2024.
\newblock \href {https://doi.org/10.4230/LIPIcs.ICALP.2024.99}
  {\path{doi:10.4230/LIPIcs.ICALP.2024.99}}.

\bibitem{Munro71}
J.~Ian Munro.
\newblock Efficient determination of the transitive closure of a directed
  graph.
\newblock {\em Inf. Process. Lett.}, 1(2):56--58, 1971.
\newblock \href {https://doi.org/10.1016/0020-0190(71)90006-8}
  {\path{doi:10.1016/0020-0190(71)90006-8}}.

\bibitem{Pietrzak03}
Krzysztof Pietrzak.
\newblock On the parameterized complexity of the fixed alphabet shortest common
  supersequence and longest common subsequence problems.
\newblock {\em J. Comput. Syst. Sci.}, 67(4):757--771, 2003.
\newblock \href {https://doi.org/10.1016/S0022-0000(03)00078-3}
  {\path{doi:10.1016/S0022-0000(03)00078-3}}.

\bibitem{ignaz-pc}
Ignaz Rutter.
\newblock Personal communication, 2022.

\bibitem{DBLP:conf/soda/WilliamsXXZ24}
Virginia {Vassilevska Williams}, Yinzhan Xu, Zixuan Xu, and Renfei Zhou.
\newblock New bounds for matrix multiplication: from alpha to omega.
\newblock In {\em Proc. 2024 Ann. {ACM-SIAM} Symp. Discrete Algorithms (SODA)},
  pages 3792--3835. {SIAM}, 2024.
\newblock \href {https://doi.org/10.1137/1.9781611977912.134}
  {\path{doi:10.1137/1.9781611977912.134}}.

\bibitem{DBLP:journals/dam/WotzlawSP12}
Andreas Wotzlaw, Ewald Speckenmeyer, and Stefan Porschen.
\newblock Generalized $k$-ary tanglegrams on level graphs: {A}
  satisfiability-based approach and its evaluation.
\newblock {\em Discrete Appl. Math.}, 160(16--17):2349--2363, 2012.
\newblock \href {https://doi.org/10.1016/j.dam.2012.05.028}
  {\path{doi:10.1016/j.dam.2012.05.028}}.

\end{thebibliography}
	
\end{document}